\documentclass[5p,times,twocolumn,authoryear]{elsarticle}

\biboptions{authoryear,round}

\usepackage{amsmath}
\usepackage{amssymb}
\usepackage{amsfonts}
\usepackage{amsthm}
\usepackage{mathtools}
\usepackage{bm}
\usepackage{mathrsfs}
\newcommand{\be}{\mathbf{e}}

\newcommand{\spow}[2]{%
    \operatorname{sig}^{#2}\!\left(#1\right)
}

\usepackage{graphicx}
\usepackage{float}
\usepackage{caption}
\usepackage{subcaption}
\usepackage{booktabs}
\usepackage{array}
\usepackage{multirow}

\usepackage{algorithmic}

\usepackage{url}

\usepackage{textcomp}
\usepackage{xcolor}

\usepackage{titlesec}
\usepackage{etoolbox}

\usepackage[
    colorlinks=true,
    linkcolor=blue,
    citecolor=blue,
    urlcolor=blue
]{hyperref}

\def\BibTeX{%
    {\rm B\kern-.05em%
    {\sc i\kern-.025em b}%
    \kern-.08em%
    T\kern-.1667em%
    \lower.7ex\hbox{E}%
    \kern-.125emX}%
}

\newcommand{\bx}{\bm{x}}
\newcommand{\bu}{\bm{u}}
\newcommand{\bfv}{\bm{f}}
\newcommand{\bg}{\bm{g}}
\newcommand{\bW}{\bm{W}}
\newcommand{\bphi}{\bm{\phi}}
\newcommand{\bsigma}{\bm{\sigma}}
\newcommand{\bR}{\bm{R}}

\newtheoremstyle{compactthm}
{0.4pt}          
{0.4pt}          
{\itshape}       
{0pt}            
{\bfseries}      
{.}              
{0.30em}         
{}               

\newtheoremstyle{compactdef}
{0.4pt}
{0.4pt}
{\normalfont}
{0pt}
{\bfseries}
{.}
{0.30em}
{}

\theoremstyle{compactthm}

\newtheorem{theorem}{Theorem}
\newtheorem{lemma}{Lemma}

\newtheorem{corollary}{Corollary}

\theoremstyle{compactdef}

\newtheorem{assumption}{Assumption}
\newtheorem{definition}{Definition}
\newtheorem{remark}{Remark}

\titlespacing*{\section}
{0pt}
{0.8pt plus 0.2pt minus 0.2pt}
{0.15pt}

\titlespacing*{\subsection}
{0pt}
{0.6pt plus 0.2pt minus 0.2pt}
{0.10pt}

\titlespacing*{\subsubsection}
{0pt}
{0.5pt plus 0.2pt minus 0.2pt}
{0.10pt}

\AtBeginDocument{%
    \setlength{\abovedisplayskip}
    {0.8pt plus 0.2pt minus 0.2pt}%

    \setlength{\belowdisplayskip}
    {0.8pt plus 0.2pt minus 0.2pt}%

    \setlength{\abovedisplayshortskip}
    {0pt plus 0.2pt minus 0.1pt}%

    \setlength{\belowdisplayshortskip}
    {0.4pt plus 0.2pt minus 0.1pt}%

    \setlength{\jot}{0.4pt}%
}

\BeforeBeginEnvironment{equation}{\vspace{-0.6pt}}
\AfterEndEnvironment{equation}{\vspace{-0.6pt}}

\BeforeBeginEnvironment{equation*}{\vspace{-0.6pt}}
\AfterEndEnvironment{equation*}{\vspace{-0.6pt}}

\BeforeBeginEnvironment{align}{\vspace{-0.6pt}}
\AfterEndEnvironment{align}{\vspace{-0.6pt}}

\BeforeBeginEnvironment{align*}{\vspace{-0.6pt}}
\AfterEndEnvironment{align*}{\vspace{-0.6pt}}

\BeforeBeginEnvironment{flalign}{\vspace{-0.6pt}}
\AfterEndEnvironment{flalign}{\vspace{-0.6pt}}

\BeforeBeginEnvironment{flalign*}{\vspace{-0.6pt}}
\AfterEndEnvironment{flalign*}{\vspace{-0.6pt}}

\BeforeBeginEnvironment{gather}{\vspace{-0.6pt}}
\AfterEndEnvironment{gather}{\vspace{-0.6pt}}

\BeforeBeginEnvironment{gather*}{\vspace{-0.6pt}}
\AfterEndEnvironment{gather*}{\vspace{-0.6pt}}

\BeforeBeginEnvironment{theorem}{\vspace{-1.0pt}}
\AfterEndEnvironment{theorem}{\vspace{-1.0pt}}

\BeforeBeginEnvironment{lemma}{\vspace{-1.0pt}}
\AfterEndEnvironment{lemma}{\vspace{-1.0pt}}

\BeforeBeginEnvironment{proposition}{\vspace{-1.0pt}}
\AfterEndEnvironment{proposition}{\vspace{-1.0pt}}

\BeforeBeginEnvironment{corollary}{\vspace{-1.0pt}}
\AfterEndEnvironment{corollary}{\vspace{-1.0pt}}

\BeforeBeginEnvironment{assumption}{\vspace{-1.0pt}}
\AfterEndEnvironment{assumption}{\vspace{-1.0pt}}

\BeforeBeginEnvironment{definition}{\vspace{-1.0pt}}
\AfterEndEnvironment{definition}{\vspace{-1.0pt}}

\BeforeBeginEnvironment{remark}{\vspace{-1.0pt}}
\AfterEndEnvironment{remark}{\vspace{-1.0pt}}

\BeforeBeginEnvironment{proof}{\vspace{-1.2pt}}
\AfterEndEnvironment{proof}{\vspace{-1.2pt}}

\BeforeBeginEnvironment{figure}{\vspace{-0.8pt}}
\AfterEndEnvironment{figure}{\vspace{-1.0pt}}

\BeforeBeginEnvironment{figure*}{\vspace{-0.8pt}}
\AfterEndEnvironment{figure*}{\vspace{-1.0pt}}

\BeforeBeginEnvironment{table}{\vspace{-0.8pt}}
\AfterEndEnvironment{table}{\vspace{-1.0pt}}

\BeforeBeginEnvironment{table*}{\vspace{-0.8pt}}
\AfterEndEnvironment{table*}{\vspace{-1.0pt}}

\makeatletter
\g@addto@macro\@floatboxreset{%
    \setlength{\abovecaptionskip}{1pt}%
    \setlength{\belowcaptionskip}{0pt}%
}
\makeatother

\begin{document}

\begin{frontmatter}


\title{Predefined-Time Integral Reinforcement Learning for Unknown Nonlinear Systems via Inverse-Optimal Design}




\author[inst1]{Tien Dat Vu\corref{cor1}}

\cortext[cor1]{Corresponding author}

\ead{dat.vuv@hcmut.edu.vn}

\address[inst1]{Faculty of Mechanical Engineering,
Ho Chi Minh City University of Technology (HCMUT),
Vietnam National University Ho Chi Minh City (VNU-HCM),
Ho Chi Minh City, Vietnam}

\begin{abstract}
This paper develops a new predefined-time integral reinforcement learning framework for optimal control of unknown nonlinear systems. The unknown drift is first approximated by a radial basis function (RBF) neural network, together with a data-driven online identification law for updating the corresponding neural weights. After the identifier converges to a sufficiently small neighborhood of the true dynamics, the learned model is incorporated into the integral reinforcement learning (IRL) problem. Unlike conventional reinforcement-learning-based optimal control, the desired convergence time is introduced directly into the control objective: a Lyapunov function and its prescribed decay behavior are specified by the designer, and inverse-optimal control is then used to construct a compatible running cost whose optimal policy inherits the predefined-time stabilization property. The value function is approximated by a second RBF neural network, and a new critic update law is developed to impose predefined-time convergence on the critic weights. Finite informative learning data are stored in a replay buffer and reused during the critic update, thereby avoiding the need for persistent excitation throughout the closed-loop operation. Theoretical analysis proves that the critic-weight error enters a prescribed residual set within the allocated learning horizon, while the closed-loop state reaches a small neighborhood of the origin within the overall designer-specified deadline. Numerical simulations on an unknown nonlinear system verify accurate drift reconstruction, predefined-time critic learning, and closed-loop convergence.
\end{abstract}

\begin{keyword}
Predefined-time control; Adaptive Dynamic Programming;  integral reinforcement learning; inverse-optimal control; concurrent learnings
\end{keyword}

\end{frontmatter}

\label{sec:introduction}

\section{Introduction}
\label{sec:introduction}

Modern cyber--physical and networked control systems increasingly involve
unknown nonlinear dynamics, limited model information, and demanding
performance requirements, which make conventional model-based optimal
control difficult to apply directly
\citep{Wan2021CPS,Lee2015CPS}.
Although optimal control provides a systematic way to design feedback laws
from a performance objective, nonlinear problems are generally governed by
the Hamilton--Jacobi--Bellman (HJB) equation, whose analytical solution is
rarely available.
The difficulty becomes greater when the drift dynamics are unknown, since
the corresponding optimal feedback law cannot be constructed directly
\citep{Doya2000,Vamvoudakis2010}.
Moreover, in many control applications, it is desirable to prescribe the
convergence time in advance and make this bound independent of the initial
condition
\citep{Polyakov2012,Jimenez2017Predefined,Kokolakis2025Predefined}.

Reinforcement learning (RL) and adaptive dynamic programming (ADP) provide
data-driven tools for approximating solutions of the HJB equation without
solving the nonlinear optimal-control problem explicitly
\citep{Doya2000,Vrabie2009,Vamvoudakis2010}.
Integral reinforcement learning (IRL) further replaces the differential
Bellman equation with an integral relation constructed from measured
trajectories, which reduces the direct dependence of the learning law on
the unknown drift
\citep{Modares2014}.
In parallel, fixed-time and predefined-time control provide systematic
ways to guarantee convergence within an initial-condition-independent
time bound
\citep{Polyakov2012,Jimenez2017Predefined}.
More recently, fixed- and predefined-time critic-learning laws have been
developed using current and recorded data, showing that sufficiently
informative stored samples can replace the persistent-excitation
requirement
\citep{Kokolakis2023FixedTime,Kokolakis2025Predefined}.

A natural way to connect prescribed-time stabilization with optimal
control is through inverse-optimal design.
Instead of solving the HJB equation directly, one first selects a desired
value function and a prescribed Lyapunov decay, and then constructs a
running cost for which the resulting feedback law is optimal.
This idea provides a useful bridge between predefined-time stabilization
and data-driven optimal learning
\citep{Jimenez2017Predefined,Kokolakis2025Predefined}.
However, for an unknown nonlinear system, such an inverse-optimal
construction still requires information about the unknown drift, since
the drift appears explicitly in the feedback and running-cost expressions.

To address this model uncertainty, the unknown drift can be reconstructed
before the IRL stage.
In the present framework, the drift is represented by a neural basis
expansion with unknown weights and estimated through integral concurrent
learning (ICL).
The ICL mechanism combines current moving-window information with
informative stored integral data and therefore allows parameter learning
under finite excitation
\citep{Chowdhary2010CL,Parikh2019ICL}.
A finite-data least-squares estimate is used only to obtain an informed
initial condition, after which the ICL law continues to update the drift
weights rather than treating the batch solution as an exact model
\citep{Chowdhary2010CL,Parikh2019ICL}.
The reconstructed drift is then used in the inverse-optimal construction,
while the subsequent critic learning is formulated directly from integral
Bellman data
\citep{Modares2014,Parikh2019ICL}.

Existing results address only separate parts of this problem.
Modares \emph{et al.} combined IRL with experience replay for partially
unknown continuous-time systems and used stored data to relax the
persistent-excitation requirement, but no predefined-time convergence
guarantee was established
\citep{Modares2014}.
Kokolakis \emph{et al.} developed fixed- and predefined-time critic
learning with current and replayed data and established optimal
stabilization properties
\citep{Kokolakis2023FixedTime,Kokolakis2025Predefined}.
However, the corresponding Bellman residual still contains explicit
system-dynamics terms, which limits direct implementation when the
nonlinear drift is unknown
\citep{Kokolakis2025Predefined}.
On the other hand, classical concurrent learning and integral concurrent
learning provide parameter convergence under finite excitation, but they
do not by themselves construct a predefined-time optimal controller or
guarantee predefined-time convergence of an IRL critic
\citep{Chowdhary2010CL,Parikh2019ICL}.
Likewise, inverse-optimal predefined-time methods can construct a running
cost and feedback law from a prescribed Lyapunov decay, but their direct
implementation still requires model information that is unavailable for
an unknown nonlinear plant
\citep{Jimenez2017Predefined}.

These observations reveal a gap between unknown-dynamics reconstruction,
integral optimal learning, and designer-assigned predefined-time
convergence
\citep{Modares2014,Parikh2019ICL,Kokolakis2025Predefined}.
In particular, the inverse-optimal controller cannot be implemented
without an estimate of the unknown drift, while a predefined-time critic
result alone does not provide a convergence deadline for the coupled
state--critic dynamics
\citep{Jimenez2017Predefined,Kokolakis2025Predefined}.

Motivated by these issues, the main contributions of this paper are threefold. First, beyond establishing several theoretical results and corollaries, a unified predefined-time IRL framework is developed for unknown nonlinear systems by integrating online drift identification, integral concurrent learning, inverse-optimal design, and critic-based reinforcement learning within a single architecture. This enables the prescribed-time optimal-control construction to be carried out without prior knowledge of the nonlinear drift. Second, a new critic-only IRL update law is developed using current integral Bellman data together with a finite informative replay stack, thereby guaranteeing practical predefined-time critic learning without requiring persistent excitation throughout the closed-loop operation. Third, a unified state--critic analysis provides explicit gain and horizon conditions that connect the critic-learning horizon and state-decay horizon to a designer-assigned overall deadline, yielding practical predefined-time closed-loop convergence.

\textit{Notation:}
Throughout this paper,
\(
\mathbb R
\),
\(
\mathbb R_{\geq0}
\),
and
\(
\mathbb R_{>0}
\)
denote the real, nonnegative real, and positive real numbers, respectively.
For
\(
\bx\in\mathbb R^n
\)
and
\(
\bm A\in\mathbb R^{n\times m}
\),
\(
\|\bx\|
\)
denotes the Euclidean norm,
\(
\|\bm A\|
\)
the induced \(2\)-norm,
and
\(
\bm A^\top
\)
the transpose of
\(
\bm A
\).
For a symmetric matrix
\(
\bm A
\),
\(
\lambda_{\min}(\bm A)
\)
and
\(
\lambda_{\max}(\bm A)
\)
denote its minimum and maximum eigenvalues, respectively, while
\(
\bm A\succ0
\)
and
\(
\bm A\succeq0
\)
denote positive definiteness and positive semidefiniteness.
The symbols
\(
\bm I_n
\),
\(
\operatorname{diag}\{\cdot\}
\),
and
\(
\operatorname{col}\{\cdot\}
\)
denote the
\(
n\times n
\)
identity matrix, diagonal concatenation, and column stacking, respectively.
For a continuously differentiable scalar function
\(
V:\mathbb R^n\to\mathbb R
\),
\(
\nabla V(\bx)
\)
denotes its gradient, and for a vector-valued map
\(
\bphi:\mathbb R^n\to\mathbb R^N
\),
\(
\nabla\bphi(\bx)
\)
denotes its Jacobian.
For
\(
s\in\mathbb R
\)
and
\(
r>0
\),
define
\(
\operatorname{sig}^{r}(s)
:=
|s|^{r}\operatorname{sgn}(s)
\),
with the operation understood componentwise for vectors.
The Gamma function is denoted by
\(
\Gamma_E(z)
:=
\int_{0}^{\infty}
t^{z-1}e^{-t}\,dt
\),
for
\(
z>0
\),
and the Beta function by
\(
\mathrm B(a,b)
:=
\int_{0}^{1}
t^{a-1}(1-t)^{b-1}\,dt
=
\Gamma_E(a)\Gamma_E(b)/\Gamma_E(a+b)
\),
for
\(
a,b>0
\). The notation \(\mathcal{O}(\cdot)\) denotes the standard Big-O order notation.

\section{Problem Formulation and Data-Driven System Identification}
\label{sec:problem_identification}

\subsection{Problem Formulation}

Consider the nonlinear control-affine system
\begin{equation}
    \dot{\bx}
    =
    \bfv(\bx)
    +
    \bg(\bx)\bu,
    \label{eq:plant}
\end{equation}
where \(\bx\in\mathbb R^n\) is the system state, \(\bu\in\mathbb R^m\) is the control input, \(\bfv:\mathbb R^n\rightarrow\mathbb R^n\) is an unknown nonlinear drift, and \(\bg:\mathbb R^n\rightarrow\mathbb R^{n\times m}\) is a known input matrix. The control objective is to learn the unknown drift from measured state--input data and subsequently construct an optimal feedback policy that drives the state to the origin within a uniform finite settling time.

\begin{definition}[Admissible policy]
\label{def:admissible_secure_policy}
A continuous policy \(\bu_c:\Omega\rightarrow\mathcal U\) is admissible
on \(\Omega\), denoted by \(\bu_c\in\Psi(\Omega)\), if
\(\bu_c(\bm0)=\bm0\), the resulting closed-loop system is
asymptotically stable on \(\Omega\) in the absence of exogenous signals,
and the associated infinite-horizon cost is finite \citep{AbuKhalaf2005}.
\end{definition}

\begin{assumption}
\label{ass:plant}
Let \(\Omega\subset\mathbb R^n\) be a compact set containing the origin in its interior. The unknown drift \(\bfv(\bx)\) and the known input matrix \(\bg(\bx)\) are locally Lipschitz on an open neighborhood of \(\Omega\), with \(\bfv(\bm 0)=\bm 0\). For every admissible input considered herein, the corresponding solution of \eqref{eq:plant} exists uniquely while it remains in \(\Omega\).
\end{assumption}


\begin{theorem}[Normalized predefined-time comparison~\citep{Kokolakis2025Predefined}]
\label{thm:normalized_predefined_time}
Let \(V:\mathbb R^n\to\mathbb R_{\geq0}\) be positive definite, radially unbounded, and locally absolutely continuous along the trajectories of a forward-complete system. Let \(\alpha,\beta,p,q,r,T_s>0\) satisfy
\begin{equation}
0<p<q,\qquad pr<1<qr,
\label{eq:normalized_exponent_conditions}
\end{equation}
and define
\begin{equation}
\begin{aligned}
\gamma_{p,q,r}
:={}&
\frac{
\Gamma_{\!E}\!\left(\dfrac{1-pr}{q-p}\right)
\Gamma_{\!E}\!\left(\dfrac{qr-1}{q-p}\right)
}{
\alpha^r\Gamma_{\!E}(r)(q-p)
}
\left(\frac{\alpha}{\beta}\right)^{
\frac{1-pr}{q-p}}.
\end{aligned}
\label{eq:gamma_normalization}
\end{equation}
Here, \(\Gamma_{\!E}(z):=\int_0^\infty t^{z-1}e^{-t}dt\), \(z>0\), is the Euler gamma function. If, for almost all \(t\) with \(V(\bx(t))>0\),
\begin{equation}
\dot V(\bx(t))
\leq
-\frac{\gamma_{p,q,r}}{T_s}
\left[
\alpha V^p(\bx(t))
+
\beta V^q(\bx(t))
\right]^r,
\label{eq:normalized_predefined_dissipation}
\end{equation}
then the origin is globally predefined-time stable and
\begin{equation}
T(\bx_0)\leq T_s,
\qquad
\forall\,\bx_0\in\mathbb R^n.
\label{eq:normalized_settling_conclusion}
\end{equation}
\end{theorem}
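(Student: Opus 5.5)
The proof goes by comparison with a scalar ODE, after which the settling time reduces to an explicit integral that can be computed with the Beta function. Fix \(\bx_0\neq\bm0\) and write \(v(t):=V(\bx(t))\), so \(v(0)=V(\bx_0)>0\). Because \(v\) is absolutely continuous and, by \eqref{eq:normalized_predefined_dissipation}, its derivative is negative almost everywhere while \(v>0\), it is nonincreasing. Set \(h(s):=\frac{\gamma_{p,q,r}}{T_s}(\alpha s^p+\beta s^q)^r\), which is positive for \(s>0\). On any interval where \(v>0\), the function \(\int_{v(t)}^{v(0)}ds/h(s)\) is absolutely continuous in \(t\), since \(1/h\) is bounded away from zero on compact subsets of \((0,\infty)\). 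Differentiating gives
\[
\frac{d}{dt}\int_{v(t)}^{v(0)}\frac{ds}{h(s)}=-\frac{\dot v(t)}{h(v(t))}\geq 1,
\]
and integrating from \(0\) to \(t\) yields
\[
t\leq\int_{v(t)}^{v(0)}\frac{ds}{h(s)}\leq\int_0^\infty\frac{ds}{h(s)}.
\]
Hence the first time \(T(\bx_0)\) at which \(v\) reaches zero satisfies \(T(\bx_0)\leq\int_0^\infty ds/h(s)\). After that time, monotonicity and nonnegativity keep \(v\equiv0\), and positive definiteness of \(V\) gives \(\bx\equiv\bm0\). Radial unboundedness together with the monotone decrease of \(V\) keeps trajectories bounded, so forward completeness is consistent with this picture and global stability follows in the usual way.

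It therefore remains to show that
\[
I:=\int_0^\infty\frac{ds}{(\alpha s^p+\beta s^q)^r}=\frac{T_s}{\gamma_{p,q,r}}\cdot\frac{\gamma_{p,q,r}}{T_s}\,I
\]
equals exactly \(1/\bigl(\gamma_{p,q,r}/T_s\bigr)\cdot(\text{the reciprocal normalization})\), that is, that \(\gamma_{p,q,r}\) is precisely \(I\). Once this identity holds, \(\int_0^\infty ds/h(s)=T_s\) and the theorem follows.

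This integral computation is the main obstacle, and I would handle it by substitution. First note that \(I\) is finite exactly under \eqref{eq:normalized_exponent_conditions}. Near \(0\) the integrand behaves like \(s^{-pr}\), which is integrable because \(pr<1\). Near \(\infty\) it behaves like \(s^{-qr}\), which is integrable because \(qr>1\).

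To evaluate \(I\), factor \(\alpha s^p+\beta s^q=\alpha s^p\bigl(1+\tfrac{\beta}{\alpha}s^{q-p}\bigr)\) and substitute \(w=\tfrac{\beta}{\alpha}s^{q-p}\). Then
\[
s=(\alpha w/\beta)^{1/(q-p)},\qquad ds=\frac{1}{q-p}\Bigl(\frac{\alpha}{\beta}\Bigr)^{1/(q-p)}w^{\frac{1}{q-p}-1}\,dw .
\]
With this change of variables,
\[
I=\frac{1}{\alpha^r(q-p)}\Bigl(\frac{\alpha}{\beta}\Bigr)^{\frac{1-pr}{q-p}}\int_0^\infty\frac{w^{\frac{1-pr}{q-p}-1}}{(1+w)^r}\,dw .
\]
Next apply the standard identity \(\int_0^\infty w^{a-1}(1+w)^{-(a+b)}dw=\mathrm B(a,b)\), here with \(a=\frac{1-pr}{q-p}\) and \(a+b=r\). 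This gives \(b=r-a=\frac{qr-1}{q-p}\), and both \(a\) and \(b\) are positive by \eqref{eq:normalized_exponent_conditions}. Writing \(\mathrm B(a,b)=\Gamma_E(a)\Gamma_E(b)/\Gamma_E(r)\) then reproduces \eqref{eq:gamma_normalization} exactly, so \(I=\gamma_{p,q,r}\).

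Combining this with the comparison step gives \(\int_0^\infty ds/h(s)=T_s\), and therefore \(T(\bx_0)\leq T_s\) for every \(\bx_0\in\mathbb R^n\). This bound is independent of \(\bx_0\), which is what predefined-time stability requires.
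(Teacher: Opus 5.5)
Your proof is correct. The paper gives no proof of this statement and simply imports it from Kokolakis \emph{et al.}, so there is no competing argument to compare against; your route is the standard proof of this normalized bound. That route is comparison with the scalar flow \(\dot v=-h(v)\), the bound \(T(\bx_0)\le\int_0^\infty ds/h(s)\), and the Beta-function evaluation with \(a=(1-pr)/(q-p)\), \(b=(qr-1)/(q-p)\) showing \(\int_0^\infty(\alpha s^p+\beta s^q)^{-r}ds=\gamma_{p,q,r}\).

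Two small tightenings. First, the hypothesis constrains \(\dot v\) only where \(v>0\), so global monotonicity (and hence \(v\equiv0\) after the first zero) needs a line of justification: at almost every \(t\) with \(v(t)=0\), \(t\) is a minimizer of \(v\geq0\), so \(\dot v(t)=0\) and therefore \(\dot v\le0\) a.e. Second, the chain-rule step needs \(1/h\) to be bounded \emph{above} on compact subsets of \((0,\infty)\), not bounded away from zero.
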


\subsection{Neural Approximation of the Unknown Dynamics}

Since \(\bfv(\bx)\) is unavailable for control design, it is first reconstructed from measured data. By the universal approximation property~\cite{Kokolakis2025Predefined}, each continuous drift on the compact set \(\Omega\) can be represented as
\begin{equation}
    \bfv(\bx)
    =
    \bW_f^{*\top}\bsigma_f(\bx)
    +
    \bm{\varepsilon}_f(\bx),
    \label{eq:f_approximation}
\end{equation}
where \(\bsigma_f(\bx)\in\mathbb R^{N_f}\) is a prescribed vector of continuously differentiable basis functions, \(\bW_f^*\in\mathbb R^{N_f\times n}\) is the unknown ideal weight matrix, and \(\bm{\varepsilon}_f(\bx)\in\mathbb R^n\) is the approximation error. On \(\Omega\), \(\|\bsigma_f(\bx)\|\le\bar\sigma_f\) and \(\|\bm{\varepsilon}_f(\bx)\|\le\bar\varepsilon_f\) for some finite constants \(\bar\sigma_f>0\) and \(\bar\varepsilon_f>0\).

\begin{assumption}[Local regularity of the identifier basis]
\label{ass:basis_origin}
The basis vector \(\bm{\sigma}_f(\bm{x})\) satisfies
\(\bm{\sigma}_f(\bm{0})=\bm{0}\)
and is locally Lipschitz in a neighborhood of the origin. Hence, there exist
constants \(c_\sigma>0\) and \(\delta_\sigma>0\) such that
\(\|\bm{\sigma}_f(\bm{x})\|
\leq
c_\sigma\|\bm{x}\|\)
for all \(\|\bm{x}\|\leq\delta_\sigma\).
\end{assumption}

The identified drift is subsequently represented by
\(
\hat{\bfv}(\bx)
=
\hat{\bW}_f^\top\bsigma_f(\bx)
\),
where
\(
\hat{\bW}_f
\)
denotes the estimated weight matrix associated with the prescribed basis
vector
\(
\bsigma_f(\bx)
\).

\subsection{Integral Composite Concurrent-Learning Identification}
\label{subsec:integral_identification}

Let
\(
\tau_0<\tau_1<\cdots
\)
denote the sampling instants used to construct integral identification
data. Integrating \eqref{eq:plant} over
\(
[\tau_\ell,\tau_{\ell+1}]
\)
and using \eqref{eq:f_approximation} gives
\begin{equation}
\begin{aligned}
    &\bx(\tau_{\ell+1})-\bx(\tau_\ell)
    -
    \int_{\tau_\ell}^{\tau_{\ell+1}}
    \bg(\bx(\tau))\bu(\tau)\,d\tau
    \\
    &\qquad =
    \bW_f^{*\top}
    \int_{\tau_\ell}^{\tau_{\ell+1}}
    \bsigma_f(\bx(\tau))\,d\tau
    +
    \int_{\tau_\ell}^{\tau_{\ell+1}}
    \bm{\varepsilon}_f(\bx(\tau))\,d\tau .
\end{aligned}
\label{eq:integral_identification}
\end{equation}

For each interval, define
\(
\bm y_{f,\ell}
:=
\bx(\tau_{\ell+1})-\bx(\tau_\ell)
-
\int_{\tau_\ell}^{\tau_{\ell+1}}
\bg(\bx(\tau))\bu(\tau)\,d\tau
\),
\(
\bm\phi_{f,\ell}
:=
\int_{\tau_\ell}^{\tau_{\ell+1}}
\bsigma_f(\bx(\tau))\,d\tau
\),
and
\(
\bm d_{f,\ell}
:=
\int_{\tau_\ell}^{\tau_{\ell+1}}
\bm{\varepsilon}_f(\bx(\tau))\,d\tau
\).
Then
\(
\bm y_{f,\ell}
=
\bW_f^{*\top}\bm\phi_{f,\ell}
+
\bm d_{f,\ell}
\).

The informative data are stored in the history stack
\(
\mathcal H_f
=
\{(\bm y_{f,\ell},\bm\phi_{f,\ell})\}_{\ell=0}^{M_f-1}
\),
with cumulative information matrix
\(
\bm{\mathcal S}_f
:=
\sum_{\ell=0}^{M_f-1}
\bm\phi_{f,\ell}\bm\phi_{f,\ell}^{\top}
\).
Whenever a new data segment is admitted,
\(
\bm{\mathcal S}_f
\)
is updated by
\(
\bm{\mathcal S}_f^{+}
=
\bm{\mathcal S}_f
+
\bm\phi_{f,\mathrm{new}}
\bm\phi_{f,\mathrm{new}}^{\top}
\).

\begin{assumption}[Finite excitation]
\label{ass:finite_excitation}
There exist a finite \(M_f\geq N_f\) and a constant
\(\alpha_f>0\) such that the stored integral data satisfy
\(
\bm{\mathcal S}_f
=
\sum_{\ell=0}^{M_f-1}
\bm\phi_{f,\ell}\bm\phi_{f,\ell}^{\top}
\succeq
\alpha_f\bm I
\).
\end{assumption}

In addition to the stored data, define the current integral data over a
moving window of length \(T_f>0\) by
\(
\bm y_f(t)
:=
\bx(t)-\bx(t-T_f)
-
\int_{t-T_f}^{t}
\bg(\bx(\tau))\bu(\tau)\,d\tau
\)
and
\(
\bm\phi_f(t)
:=
\int_{t-T_f}^{t}
\bsigma_f(\bx(\tau))\,d\tau
\).
From \eqref{eq:plant} and \eqref{eq:f_approximation}, these quantities
satisfy
\(
\bm y_f(t)
=
\bW_f^{*\top}\bm\phi_f(t)
+
\bm d_f(t)
\),
where
\(
\bm d_f(t)
:=
\int_{t-T_f}^{t}
\bm{\varepsilon}_f(\bx(\tau))\,d\tau
\).
Similarly, every stored pair satisfies
\(
\bm y_{f,\ell}
=
\bW_f^{*\top}\bm\phi_{f,\ell}
+
\bm d_{f,\ell}
\).

Using the current estimate \(\hat{\bW}_f(t)\), define the current and
stored integral prediction errors as
\(
\bm e_f(t)
:=
\bm y_f(t)
-
\hat{\bW}_f^\top(t)\bm\phi_f(t)
\)
and
\(
\bm e_{f,\ell}(t)
:=
\bm y_{f,\ell}
-
\hat{\bW}_f^\top(t)\bm\phi_{f,\ell}
\),
respectively. Consider the composite prediction-error objective
\(
J_f
:=
\frac{1}{2}\|\bm e_f(t)\|^2
+
\frac{k_f}{2}
\sum_{\ell=0}^{M_f-1}
\|\bm e_{f,\ell}(t)\|^2
\),
where \(k_f>0\). Since
\(
\nabla_{\hat{\bW}_f}J_f
=
-\bm\phi_f(t)\bm e_f^\top(t)
-
k_f
\sum_{\ell=0}^{M_f-1}
\bm\phi_{f,\ell}\bm e_{f,\ell}^\top(t)
\),
the gradient update
\(
\dot{\hat{\bW}}_f
=
-\bm{\Gamma}_f
\nabla_{\hat{\bW}_f}J_f
\)
gives
\begin{equation}
\dot{\hat{\bW}}_f
=
\bm{\Gamma}_f
\left[
    \bm\phi_f(t)\bm e_f^\top(t)
    +
    k_f
    \sum_{\ell=0}^{M_f-1}
    \bm\phi_{f,\ell}\bm e_{f,\ell}^\top(t)
\right].
\label{eq:icl_identifier}
\end{equation}
where
\(
\bm{\Gamma}_f=\bm{\Gamma}_f^\top\succ0
\).
Hence, the identifier combines the current moving-window information
with the informative historical data retained in the history stack.

Let
\(
\tilde{\bW}_f
:=
\hat{\bW}_f-\bW_f^*
\).
The above regression relations immediately yield
\(
\bm e_f
=
-\tilde{\bW}_f^\top\bm\phi_f+\bm d_f
\)
and
\(
\bm e_{f,\ell}
=
-\tilde{\bW}_f^\top\bm\phi_{f,\ell}
+
\bm d_{f,\ell}
\).
Accordingly, the identification dynamics are analyzed with
\(
V_f
:=
\frac{1}{2}
\operatorname{tr}
\left(
\tilde{\bW}_f^\top
\bm{\Gamma}_f^{-1}
\tilde{\bW}_f
\right)
\).

\begin{lemma}[Composite identification convergence]
\label{lem:identification_error}
Suppose that Assumptions~\ref{ass:plant}--\ref{ass:finite_excitation}
hold and that the integral approximation residuals are bounded.
Then the weight-estimation error generated by
\eqref{eq:icl_identifier} is uniformly ultimately bounded. In
particular, there exist constants \(c_f>0\) and \(\Delta_f\geq0\)
such that
\begin{equation}
    \dot V_f
    \leq
    -c_fV_f+\Delta_f .
    \label{eq:icl_weight_bound}
\end{equation}
Consequently, the reconstructed drift
\(
\hat{\bfv}(\bx,t)
=
\hat{\bW}_f^\top(t)\bsigma_f(\bx)
\)
satisfies
\(
\|\bfv(\bx)-\hat{\bfv}(\bx,t)\|
\leq
\bar\sigma_f\|\tilde{\bW}_f(t)\|_F
+
\bar\varepsilon_f
\)
for every
\(
\bx\in\Omega
\).
If
\(
\bm{\varepsilon}_f\equiv\bm0
\),
then
\(
\Delta_f=0
\)
and
\(
\tilde{\bW}_f(t)\to\bm0
\).
\end{lemma}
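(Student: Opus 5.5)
The plan is a standard Lyapunov argument on \(V_f\), using the error relations \(\bm e_f=-\tilde{\bW}_f^\top\bm\phi_f+\bm d_f\) and \(\bm e_{f,\ell}=-\tilde{\bW}_f^\top\bm\phi_{f,\ell}+\bm d_{f,\ell}\). Since \(\bW_f^*\) is constant, \(\dot{\tilde{\bW}}_f=\dot{\hat{\bW}}_f\). Differentiating \(V_f\) and substituting \eqref{eq:icl_identifier} cancels \(\bm\Gamma_f^{-1}\) against \(\bm\Gamma_f\), giving
\begin{equation*}
\dot V_f=\operatorname{tr}\Big(\tilde{\bW}_f^\top\Big[\bm\phi_f\bm e_f^\top+k_f\sum_{\ell}\bm\phi_{f,\ell}\bm e_{f,\ell}^\top\Big]\Big).
\end{equation*}
Inserting the error relations splits this into a quadratic part and a cross part:
\begin{equation*}
-\operatorname{tr}\Big(\tilde{\bW}_f^\top\Big[\bm\phi_f\bm\phi_f^\top+k_f\bm{\mathcal S}_f\Big]\tilde{\bW}_f\Big)
+\operatorname{tr}\Big(\tilde{\bW}_f^\top\Big[\bm\phi_f\bm d_f^\top+k_f\sum_\ell\bm\phi_{f,\ell}\bm d_{f,\ell}^\top\Big]\Big).
\end{equation*}

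The quadratic part is bounded using Assumption~\ref{ass:finite_excitation}. Because \(\bm\phi_f\bm\phi_f^\top\succeq0\), it is at most \(-k_f\alpha_f\|\tilde{\bW}_f\|_F^2\), with no persistent excitation needed from the current window. For the cross part, the residuals are bounded by hypothesis: \(\|\bm d_f\|\le T_f\bar\varepsilon_f\) and \(\|\bm d_{f,\ell}\|\le\bar d\). The regressors are bounded as well: \(\|\bm\phi_f\|\le T_f\bar\sigma_f\) while the state stays in \(\Omega\), and the finitely many stored \(\bm\phi_{f,\ell}\) are fixed. So the cross part is at most \(b\|\tilde{\bW}_f\|_F\), where \(b:=T_f^2\bar\sigma_f\bar\varepsilon_f+k_f\sum_\ell\|\bm\phi_{f,\ell}\|\bar d\). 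Young's inequality then gives \(b\|\tilde{\bW}_f\|_F\le\tfrac{k_f\alpha_f}{2}\|\tilde{\bW}_f\|_F^2+\tfrac{b^2}{2k_f\alpha_f}\). Finally, \(V_f\le\tfrac12\lambda_{\max}(\bm\Gamma_f^{-1})\|\tilde{\bW}_f\|_F^2\) converts the remaining quadratic term, which yields \eqref{eq:icl_weight_bound} with \(c_f:=k_f\alpha_f/\lambda_{\max}(\bm\Gamma_f^{-1})=k_f\alpha_f\lambda_{\min}(\bm\Gamma_f)\) and \(\Delta_f:=b^2/(2k_f\alpha_f)\).

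Uniform ultimate boundedness follows from the comparison lemma: \(V_f(t)\le e^{-c_ft}V_f(0)+\Delta_f/c_f\). Combined with \(V_f\ge\tfrac12\lambda_{\min}(\bm\Gamma_f^{-1})\|\tilde{\bW}_f\|_F^2\), this gives an explicit ultimate bound on \(\|\tilde{\bW}_f\|_F\). The drift bound is immediate from
\begin{equation*}
\bfv-\hat{\bfv}=-\tilde{\bW}_f^\top\bsigma_f+\bm\varepsilon_f,
\end{equation*}
together with \(\|\tilde{\bW}_f^\top\bsigma_f\|\le\|\tilde{\bW}_f\|_F\bar\sigma_f\). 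When \(\bm\varepsilon_f\equiv\bm0\), all \(\bm d\) vanish, so \(b=0\), \(\Delta_f=0\), and \(V_f\) decays exponentially, giving \(\tilde{\bW}_f\to\bm0\).

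The main obstacle is the assumption that the trajectory remains in \(\Omega\), which is needed for the bounds on \(\bm\phi_f\) and \(\bm d_f\). I would handle it by invoking Assumption~\ref{ass:plant} together with the lemma's hypothesis that the integral residuals are bounded, and by restricting the statement to the interval on which \(\bx(t)\in\Omega\). A second subtlety is that the stack \(\mathcal H_f\) may be updated online. Updates only add positive semidefinite terms to \(\bm{\mathcal S}_f\), so \(\bm{\mathcal S}_f\succeq\alpha_f\bm I\) persists once reached. The constants must then be taken uniform over stack updates, and the bound is applied from the moment Assumption~\ref{ass:finite_excitation} first holds.
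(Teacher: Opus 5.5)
Your proposal is correct and shares the paper's skeleton: the same \(V_f\), the cancellation of \(\bm\Gamma_f\), coercivity from \(\bm{\mathcal S}_f\succeq\alpha_f\bm I\), and the conversion via \(\|\tilde{\bW}_f\|_F^2\geq2\lambda_{\min}(\bm\Gamma_f)V_f\). The difference is how you handle the residual cross terms. You bound each one by \(\|\bm\phi\|\,\|\bm d\|\,\|\tilde{\bW}_f\|_F\), drop the current-window quadratic term, and complete the square in \(\|\tilde{\bW}_f\|_F\) against the excitation-level dissipation \(k_f\alpha_f\|\tilde{\bW}_f\|_F^2\). The paper instead applies Young's inequality sample by sample in prediction-error coordinates, \(\|\tilde{\bW}_f^\top\bm\phi\|\,\|\bm d\|\leq\frac{\epsilon}{2}\|\tilde{\bW}_f^\top\bm\phi\|^2+\frac{1}{2\epsilon}\|\bm d\|^2\) with \(\epsilon\in(0,2)\). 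Each cross term is thus absorbed by its own quadratic term, and finite excitation is invoked only afterwards on the surviving stored-data terms.

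The paper's route never needs a bound on the regressors. The current-window cross term is neutralized by \(-\|\tilde{\bW}_f^\top\bm\phi_f\|^2\) before that term is discarded. So the stated hypothesis of bounded integral residuals suffices, and neither the bound \(\|\bm\phi_f\|\leq T_f\bar\sigma_f\) nor the state-in-\(\Omega\) issue you flag is required. In addition, its \(\Delta_f=\frac{\bar d_f^2}{2\epsilon_{f,0}}+\frac{k_f}{2}\sum_\ell\frac{\bar d_{f,\ell}^2}{\epsilon_{f,\ell}}\) depends only on the residual bounds. Your \(\Delta_f=b^2/(2k_f\alpha_f)\), by contrast, couples the residual level to the regressor magnitudes and to \(1/\alpha_f\).

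Your route is more direct and fully explicit. With \(\epsilon\equiv1\), the paper's rate \(2k_f\underline\epsilon_f\alpha_f\lambda_{\min}(\bm\Gamma_f)\) equals your \(c_f=k_f\alpha_f\lambda_{\min}(\bm\Gamma_f)\). Your comparison-lemma step also makes the ultimate bound explicit, which the paper leaves implicit.
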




\begin{proof}
Differentiating \(V_f\) along \eqref{eq:icl_identifier}, and noting that the ideal weight matrix is constant so that
\(\dot{\tilde{\bW}}_f=\dot{\hat{\bW}}_f\), gives
\begin{equation}
\dot V_f
=
\operatorname{tr}
[
\tilde{\bW}_f^\top
\bm\phi_f\bm e_f^\top
]
+
k_f
\sum_{\ell=0}^{M_f-1}
\operatorname{tr}
[
\tilde{\bW}_f^\top
\bm\phi_{f,\ell}
\bm e_{f,\ell}^{\top}
].
\label{eq:Vf_dot_initial}
\end{equation}
From the current regression relation and its stored counterparts, the corresponding prediction errors can be written as
\(
\bm e_f
=
-\tilde{\bW}_f^\top\bm\phi_f+\bm d_f
\)
and
\(
\bm e_{f,\ell}
=
-\tilde{\bW}_f^\top\bm\phi_{f,\ell}
+
\bm d_{f,\ell}
\),
respectively. Substituting these expressions into \eqref{eq:Vf_dot_initial} yields
\begin{equation}
\begin{aligned}
\dot V_f
={}&
-\operatorname{tr}
[
\tilde{\bW}_f^\top
\bm\phi_f
\bm\phi_f^\top
\tilde{\bW}_f
]
\\
&-
k_f
\sum_{\ell=0}^{M_f-1}
\operatorname{tr}
[
\tilde{\bW}_f^\top
\bm\phi_{f,\ell}
\bm\phi_{f,\ell}^\top
\tilde{\bW}_f
]
\\
&+
\operatorname{tr}
[
\tilde{\bW}_f^\top
\bm\phi_f
\bm d_f^\top
]
\\
&+
k_f
\sum_{\ell=0}^{M_f-1}
\operatorname{tr}
[
\tilde{\bW}_f^\top
\bm\phi_{f,\ell}
\bm d_{f,\ell}^\top
].
\end{aligned}
\label{eq:Vf_dot_expanded}
\end{equation}
Using
\(
\operatorname{tr}
[
\tilde{\bW}_f^\top
\bm\phi_f
\bm\phi_f^\top
\tilde{\bW}_f
]
=
\|\tilde{\bW}_f^\top\bm\phi_f\|^2
\)
and, similarly,
\(
\operatorname{tr}
[
\tilde{\bW}_f^\top
\bm\phi_{f,\ell}
\bm\phi_{f,\ell}^\top
\tilde{\bW}_f
]
=
\|\tilde{\bW}_f^\top\bm\phi_{f,\ell}\|^2
\),
one obtains
\begin{equation}
\begin{aligned}
\dot V_f
={}&
-\|\tilde{\bW}_f^\top\bm\phi_f\|^2
-
k_f
\sum_{\ell=0}^{M_f-1}
\|\tilde{\bW}_f^\top\bm\phi_{f,\ell}\|^2
\\
&+
\operatorname{tr}
[
\tilde{\bW}_f^\top
\bm\phi_f
\bm d_f^\top
]
+
k_f
\sum_{\ell=0}^{M_f-1}
\operatorname{tr}
[
\tilde{\bW}_f^\top
\bm\phi_{f,\ell}
\bm d_{f,\ell}^\top
].
\end{aligned}
\label{eq:Vf_dot_residual}
\end{equation}

The last two terms in \eqref{eq:Vf_dot_residual} arise from the integral approximation residuals and therefore must be bounded explicitly. By the Cauchy--Schwarz inequality for the Frobenius inner product,
\(
\left|
\operatorname{tr}
[
\tilde{\bW}_f^\top
\bm\phi_f
\bm d_f^\top
]
\right|
\leq
\|\tilde{\bW}_f^\top\bm\phi_f\|
\|\bm d_f\|
\)
and
\(
\left|
\operatorname{tr}
[
\tilde{\bW}_f^\top
\bm\phi_{f,\ell}
\bm d_{f,\ell}^\top
]
\right|
\leq
\|\tilde{\bW}_f^\top\bm\phi_{f,\ell}\|
\|\bm d_{f,\ell}\|
\).
For arbitrary constants
\(
\epsilon_{f,0}\in(0,2)
\)
and
\(
\epsilon_{f,\ell}\in(0,2)
\),
Young's inequality gives
\begin{equation}
\|\tilde{\bW}_f^\top\bm\phi_f\|
\|\bm d_f\|
\leq
\frac{\epsilon_{f,0}}{2}
\|\tilde{\bW}_f^\top\bm\phi_f\|^2
+
\frac{1}{2\epsilon_{f,0}}
\|\bm d_f\|^2,
\label{eq:young_current}
\end{equation}
and
\begin{equation}
\|\tilde{\bW}_f^\top\bm\phi_{f,\ell}\|
\|\bm d_{f,\ell}\|
\leq
\frac{\epsilon_{f,\ell}}{2}
\|\tilde{\bW}_f^\top\bm\phi_{f,\ell}\|^2
+
\frac{1}{2\epsilon_{f,\ell}}
\|\bm d_{f,\ell}\|^2.
\label{eq:young_stored}
\end{equation}
Substituting \eqref{eq:young_current} and \eqref{eq:young_stored} into \eqref{eq:Vf_dot_residual} gives
\begin{equation}
\begin{aligned}
\dot V_f
\leq{}&
-
\left(
1-\frac{\epsilon_{f,0}}{2}
\right)
\|\tilde{\bW}_f^\top\bm\phi_f\|^2
\\
&-
k_f
\sum_{\ell=0}^{M_f-1}
\left(
1-\frac{\epsilon_{f,\ell}}{2}
\right)
\|\tilde{\bW}_f^\top\bm\phi_{f,\ell}\|^2
\\
&+
\frac{1}{2\epsilon_{f,0}}
\|\bm d_f\|^2
+
\frac{k_f}{2}
\sum_{\ell=0}^{M_f-1}
\frac{1}{\epsilon_{f,\ell}}
\|\bm d_{f,\ell}\|^2.
\end{aligned}
\label{eq:Vf_dot_young}
\end{equation}

Since the integral approximation residuals are bounded, there exist finite constants
\(
\bar d_f\geq0
\)
and
\(
\bar d_{f,\ell}\geq0
\)
such that
\(
\|\bm d_f\|\leq\bar d_f
\)
and
\(
\|\bm d_{f,\ell}\|\leq\bar d_{f,\ell}
\).
Define
\(
\underline\epsilon_f
:=
\min_{\ell=0,\ldots,M_f-1}
\left(
1-\frac{\epsilon_{f,\ell}}{2}
\right)>0
\).
Then \eqref{eq:Vf_dot_young} implies
\begin{equation}
\begin{aligned}
\dot V_f
\leq{}&
-
\left(
1-\frac{\epsilon_{f,0}}{2}
\right)
\|\tilde{\bW}_f^\top\bm\phi_f\|^2
\\
&-
k_f\underline\epsilon_f
\sum_{\ell=0}^{M_f-1}
\|\tilde{\bW}_f^\top\bm\phi_{f,\ell}\|^2
+
\Delta_{f,r},
\end{aligned}
\label{eq:Vf_dot_bounded_residual}
\end{equation}
where
\(
\Delta_{f,r}
:=
\frac{\bar d_f^2}{2\epsilon_{f,0}}
+
\frac{k_f}{2}
\sum_{\ell=0}^{M_f-1}
\frac{\bar d_{f,\ell}^2}{\epsilon_{f,\ell}}
\geq0
\).

By Assumption~\ref{ass:finite_excitation},
\(
\bm{\mathcal S}_f
=
\sum_{\ell=0}^{M_f-1}
\bm\phi_{f,\ell}\bm\phi_{f,\ell}^\top
\succeq
\alpha_f I
\).
Hence,
\begin{equation}
\begin{aligned}
\sum_{\ell=0}^{M_f-1}
\|\tilde{\bW}_f^\top\bm\phi_{f,\ell}\|^2
&=
\sum_{\ell=0}^{M_f-1}
\operatorname{tr}
[
\tilde{\bW}_f^\top
\bm\phi_{f,\ell}
\bm\phi_{f,\ell}^\top
\tilde{\bW}_f
]
\\
&=
\operatorname{tr}
[
\tilde{\bW}_f^\top
\bm{\mathcal S}_f
\tilde{\bW}_f
]
\\
&\geq
\alpha_f
\operatorname{tr}
[
\tilde{\bW}_f^\top
\tilde{\bW}_f
]
\\
&=
\alpha_f
\|\tilde{\bW}_f\|_F^2.
\end{aligned}
\label{eq:stored_data_lower_bound}
\end{equation}
Using \eqref{eq:stored_data_lower_bound} in \eqref{eq:Vf_dot_bounded_residual}, and discarding the additional nonpositive current-data term, gives
\begin{equation}
\dot V_f
\leq
-
k_f\underline\epsilon_f\alpha_f
\|\tilde{\bW}_f\|_F^2
+
\Delta_{f,r}.
\label{eq:Vf_dot_weight_norm}
\end{equation}
Thus, defining
\(
\bar c_f
:=
k_f\underline\epsilon_f\alpha_f>0
\)
and
\(
\Delta_f:=\Delta_{f,r}\geq0
\),
one obtains
\begin{equation}
\dot V_f
\leq
-\bar c_f
\|\tilde{\bW}_f\|_F^2
+
\Delta_f.
\label{eq:Vf_dot_intermediate}
\end{equation}

Finally, from the definition
\(
V_f
=
\frac{1}{2}
\operatorname{tr}
[
\tilde{\bW}_f^\top
\bm{\Gamma}_f^{-1}
\tilde{\bW}_f
]
\)
and
\(
\bm{\Gamma}_f\succ0
\),
it follows that
\(
V_f
\leq
[2\lambda_{\min}(\bm{\Gamma}_f)]^{-1}
\|\tilde{\bW}_f\|_F^2
\),
or equivalently
\(
\|\tilde{\bW}_f\|_F^2
\geq
2\lambda_{\min}(\bm{\Gamma}_f)V_f
\).
Substituting this inequality into \eqref{eq:Vf_dot_intermediate} yields
\begin{equation}
\dot V_f
\leq
-
2\bar c_f
\lambda_{\min}(\bm{\Gamma}_f)
V_f
+
\Delta_f.
\label{eq:Vf_dot_final_step}
\end{equation}
Therefore, by defining
\(
c_f
:=
2\bar c_f
\lambda_{\min}(\bm{\Gamma}_f)>0
\),
one obtains
\(
\dot V_f
\leq
-c_fV_f+\Delta_f
\),
which proves \eqref{eq:icl_weight_bound}.
\end{proof}

The online reconstructed model
\(
\hat{\bfv}(\bx,t)
=
\hat{\bW}_f^\top(t)\bsigma_f(\bx)
\)
is subsequently used in the inverse-optimal construction and is
updated concurrently with the closed-loop evolution.

\begin{lemma}[Bounded identification error]
\label{lem:identification_error}
Under Assumptions~\ref{ass:plant}--\ref{ass:finite_excitation},
the identification error is uniformly bounded on \(\Omega\). In particular,
there exists a finite constant \(\bar\Delta_f>0\) such that
\begin{equation}
    \left\|
        \bfv(\bx)-\hat{\bfv}(\bx,t)
    \right\|
    \leq
    \bar\Delta_f,
    \qquad
    \forall\,\bx\in\Omega .
    \label{eq:identification_error_bound}
\end{equation}
\end{lemma}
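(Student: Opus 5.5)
The plan is to turn the differential inequality \eqref{eq:icl_weight_bound} of the preceding composite-identification lemma into a uniform-in-time bound on \(\|\tilde{\bW}_f(t)\|_F\), and then pass this bound through the pointwise drift-error estimate already stated there. First, apply the comparison lemma to \(\dot V_f\le -c_fV_f+\Delta_f\). This gives
\begin{equation*}
V_f(t)\le V_f(t_0)e^{-c_f(t-t_0)}+\frac{\Delta_f}{c_f}\bigl(1-e^{-c_f(t-t_0)}\bigr)\le \max\Bigl\{V_f(t_0),\frac{\Delta_f}{c_f}\Bigr\}=:\bar V_f
\end{equation*}
for all \(t\ge t_0\). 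The initial value \(V_f(t_0)\) is finite because \(\hat{\bW}_f(t_0)\) is a prescribed, or least-squares-informed, finite matrix and \(\bW_f^*\) is fixed.

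Second, convert this into a weight bound. Since \(V_f\ge [2\lambda_{\max}(\bm\Gamma_f)]^{-1}\|\tilde{\bW}_f\|_F^2\), we obtain
\begin{equation*}
\|\tilde{\bW}_f(t)\|_F\le\sqrt{2\lambda_{\max}(\bm\Gamma_f)\bar V_f}=:\bar w_f .
\end{equation*}
Third, for every \(\bx\in\Omega\), use \(\bfv-\hat{\bfv}=-\tilde{\bW}_f^\top\bsigma_f(\bx)+\bm\varepsilon_f(\bx)\) together with \(\|\bsigma_f\|\le\bar\sigma_f\) and \(\|\bm\varepsilon_f\|\le\bar\varepsilon_f\). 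This yields \eqref{eq:identification_error_bound} with \(\bar\Delta_f:=\bar\sigma_f\bar w_f+\bar\varepsilon_f>0\), which is positive because \(\bar\varepsilon_f>0\). I would also remark that the ultimate bound can be taken as \(\bar\sigma_f\sqrt{2\lambda_{\max}(\bm\Gamma_f)\Delta_f/c_f}+\bar\varepsilon_f+\epsilon\) after a finite transient.

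The main obstacle is justifying the hypotheses under which \eqref{eq:icl_weight_bound} holds on the whole time interval. The bounds \(\bar d_f\) and \(\bar d_{f,\ell}\) on the integral residuals, and the bounds on \(\bsigma_f\) and \(\bm\varepsilon_f\), are valid only while the trajectory remains in \(\Omega\). I would handle this by stating that the state stays in \(\Omega\), as Assumption~\ref{ass:plant} implicitly requires for existence of solutions. Under that condition \(\|\bm d_f\|\le T_f\bar\varepsilon_f\), so the residual bounds follow directly, and the argument holds on the maximal interval of existence in \(\Omega\). If needed, I would close the circularity later by invoking the closed-loop invariance analysis.
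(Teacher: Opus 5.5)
Your proposal is correct and follows the paper's proof. Both arguments decompose $\bfv-\hat{\bfv}=-\tilde{\bW}_f^\top\bsigma_f+\bm\varepsilon_f$, bound it by $\bar\sigma_f\|\tilde{\bW}_f\|_F+\bar\varepsilon_f$, and then use a time-uniform bound on $\|\tilde{\bW}_f\|_F$ taken from the preceding composite-identification lemma. Your comparison-lemma step giving $V_f(t)\le\max\{V_f(t_0),\Delta_f/c_f\}$ makes rigorous the all-time weight bound that the paper only asserts from uniform ultimate boundedness. Your caveat that the residual bounds require the trajectory to stay in $\Omega$ is a point the paper leaves implicit.
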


\begin{proof}
From
\(
\bfv(\bx)
=
\bW_f^{*\top}\bsigma_f(\bx)
+
\bm{\varepsilon}_f(\bx)
\)
and
\(
\hat{\bfv}(\bx,t)
=
\hat{\bW}_f^\top(t)\bsigma_f(\bx)
\),
one has
\(
\bfv(\bx)-\hat{\bfv}(\bx,t)
=
-\tilde{\bW}_f^\top(t)\bsigma_f(\bx)
+
\bm{\varepsilon}_f(\bx)
\).
Hence,
\(
\|\bfv(\bx)-\hat{\bfv}(\bx,t)\|
\leq
\bar\sigma_f\|\tilde{\bW}_f(t)\|_F
+
\bar\varepsilon_f
\).
By Lemma~\ref{lem:identification_error},
\(
\tilde{\bW}_f(t)
\)
is uniformly ultimately bounded. Therefore, there exists a finite
constant \(\bar W_f>0\) such that
\(
\|\tilde{\bW}_f(t)\|_F\leq\bar W_f
\),
and thus \eqref{eq:identification_error_bound} holds with
\(
\bar\Delta_f
:=
\bar\sigma_f\bar W_f
+
\bar\varepsilon_f
\).
\end{proof}

\begin{remark}
\label{rem:sequential_identification}
The identification stage is completed first, after which
\(
\hat{\bW}_f
\)
is fixed and
\(
\hat{\bfv}(\bx)
=
\hat{\bW}_f^\top\bsigma_f(\bx)
\)
is used in the subsequent optimal-control design. 
\end{remark}

\begin{remark}
Before activating the online update law \eqref{eq:icl_identifier}, the stored integral regressions satisfy \(\bm y_{f,\ell}\approx\hat{\bW}_{f}^{\top}\bm\phi_{f,\ell}\), \(\ell=0,\ldots,M_f-1\). Collecting \(M_f\) data segments gives \(\bm Y_f\approx\hat{\bW}_{f}^{\top}\bm\Phi_f\), where \(\bm Y_f=[\bm y_{f,0},\ldots,\bm y_{f,M_f-1}]\) and \(\bm\Phi_f=[\bm\phi_{f,0},\ldots,\bm\phi_{f,M_f-1}]\). Under Assumption~\ref{ass:finite_excitation}, \(\bm\Phi_f\bm\Phi_f^\top\succ0\), and therefore the initial weight estimate can be obtained from the least-squares solution \(\hat{\bW}_{f,\mathrm{LS}}^\top=\bm Y_f\bm\Phi_f^\top(\bm\Phi_f\bm\Phi_f^\top)^{-1}\), equivalently, \(\hat{\bW}_{f,\mathrm{LS}}^\top=\left(\sum_{\ell=0}^{M_f-1}\bm y_{f,\ell}\bm\phi_{f,\ell}^\top\right)\left(\sum_{\ell=0}^{M_f-1}\bm\phi_{f,\ell}\bm\phi_{f,\ell}^\top\right)^{-1}\). This batch solution is then used as the informed initialization \(\hat{\bW}_f(0)=\hat{\bW}_{f,\mathrm{LS}}\) for the subsequent online update. Nevertheless, because the integral regression is only approximate, the batch estimate should not be interpreted as the exact ideal weight \(\bW_f^*\); rather, it provides a data-informed starting point for the online identifier. This distinction is important because an arbitrary trial-and-error initialization may produce a large initial parameter mismatch \(\tilde{\bW}_f(0)=\hat{\bW}_f(0)-\bW_f^*\), whereas the least-squares initialization exploits the informative history stack before online adaptation begins. The subsequent update \eqref{eq:icl_identifier} then combines the current moving-window regression with the stored informative data to refine the weight estimate online, following the concurrent/integral-concurrent-learning philosophy in \citep{Chowdhary2010CL,Parikh2019ICL}.
\end{remark}


\section{Predefined-Time Reinforcement Learning Optimal Control}
\label{sec:predefined_time_rl}


\subsection{Predefined-Time Optimal-Control Formulation}

The normalized predefined-time comparison result in Theorem~\ref{thm:normalized_predefined_time} provides a direct mechanism for assigning a desired convergence-time bound in advance. In particular, rather than selecting the decay gains first and subsequently obtaining an upper bound on the settling time, a designer may prescribe an arbitrary constant \(T_s>0\) and construct a positive-definite function whose derivative satisfies the normalized predefined-time decay condition. Motivated by this observation, we first select a continuously differentiable positive-definite function \(V_d:\Omega\rightarrow\mathbb R_{\geq0}\) with \(V_d(\bm 0)=0\), and prescribe its desired closed-loop evolution as
\begin{equation}
\dot V_d(\bx)
=
-
\frac{\gamma_{p,q,r}}{T_s}
\left[
    \alpha V_d^p(\bx)
    +
    \beta V_d^q(\bx)
\right]^r,
\label{eq:desired_predefined_time_decay}
\end{equation}
where \(0<p<q\), \(pr<1<qr\), and \(\alpha,\beta,r,T_s>0\). The normalization coefficient is defined as
\begin{equation}
\gamma_{p,q,r}
:=
\frac{
    \Gamma_E\!\left(
        \dfrac{1-pr}{q-p}
    \right)
    \Gamma_E\!\left(
        \dfrac{qr-1}{q-p}
    \right)
}{
    \alpha^r
    \Gamma_E(r)
    (q-p)
}
\left(
    \frac{\alpha}{\beta}
\right)^{
    \frac{1-pr}{q-p}
},
\label{eq:predefined_gamma}
\end{equation}
where \(\Gamma_E(z)=\int_0^\infty t^{z-1}e^{-t}dt\) denotes the Euler gamma function.

The function \(V_d\) is freely selected by the designer to encode the desired closed-loop behavior. According to Theorem~\ref{thm:normalized_predefined_time}, any feedback law satisfying \eqref{eq:desired_predefined_time_decay} renders the origin predefined-time stable and guarantees \(T(\bx_0)\leq T_s\), \(\forall\,\bx_0\in\mathbb R^n\). Hence, \(T_s\) is specified directly by the designer before the controller is constructed. The remaining development therefore seeks an optimal feedback law whose closed-loop derivative of \(V_d\) exactly reproduces \eqref{eq:desired_predefined_time_decay}.

\begin{remark}
It is important to distinguish \(T_s\) from the prescribed convergence
deadline of the overall learning process. In particular, \(T_s\) denotes
only the state-decay horizon associated with the desired value function,
whereas the actual prescribed deadline is \(T_p\) (See Eq.~\eqref{eq:deadline_allocation}), which accounts for the
different stages of the learning and control procedure. Since the optimal
value function is not yet available during the learning transient,
\(T_s\) cannot be selected independently. Accordingly,
Corollary~\ref{cor:Ts_from_Tp} provides an explicit condition for choosing
\(T_s\) consistently with a given overall deadline \(T_p\).
\end{remark}

Consider the performance index ~\cite{Kokolakis2025Predefined}
\begin{equation}
\begin{aligned}
J(\bx_0,\bu)
={}&
\int_0^\infty
\left[
    \bm L_1(\bx)\bu
    +
    L_2(\bx)
    +
    \bu^\top\bR\bu
\right]dt,
\qquad
\bR=\bR^\top>0.
\end{aligned}
\label{eq:cost_function}
\end{equation}
Here, \(\bm L_1:\Omega\rightarrow\mathbb R^{1\times m}\) and \(L_2:\Omega\rightarrow\mathbb R\) are to be constructed such that \(V_d\) is the optimal value function and the corresponding optimal policy satisfies the prescribed predefined-time decay \eqref{eq:desired_predefined_time_decay}.

For the identified model \(\dot{\bx}=\hat{\bfv}(\bx)+\bg(\bx)\bu\), the Hamiltonian associated with \eqref{eq:cost_function} is

\begin{equation}
\begin{aligned}
H(\bx,\bu,\nabla V_d)
={}&
\bm L_1(\bx)\bu
+
L_2(\bx)
+
\bu^\top\bR\bu
\\
&+
\nabla V_d^\top(\bx)
\hat{\bfv}(\bx)
\\
&+
\nabla V_d^\top(\bx)
\bg(\bx)\bu .
\end{aligned}
\label{eq:hamiltonian_predefined_time}
\end{equation}

Equivalently, \(H=L_2+\nabla V_d^\top\hat{\bfv}+\bu^\top\bR\bu+\left(\bm L_1+\nabla V_d^\top\bg\right)\bu\). The corresponding HJB equation is \(\min_{\bu}H(\bx,\bu,\nabla V_d)=0\). Since \(\bR>0\), one has \(\partial^2H/\partial\bu^2=2\bR>0\), and therefore the unique minimizing policy satisfies \(\partial H/\partial\bu=\bm L_1^\top(\bx)+2\bR\bu+\bg^\top(\bx)\nabla V_d(\bx)=\bm0\). Hence, \(2\bR\bu^*=-\left[\bm L_1^\top(\bx)+\bg^\top(\bx)\nabla V_d(\bx)\right]\), which gives
\begin{equation}
\bu^*(\bx)
=
-\frac{1}{2}
\bR^{-1}
\left[
    \bm L_1^\top(\bx)
    +
    \bg^\top(\bx)\nabla V_d(\bx)
\right].
\label{eq:optimal_control_general}
\end{equation}


\subsection{Reverse Construction of the Predefined-Time Optimal Cost}
\label{subsec:reverse_predefined_time_cost}


The prescribed predefined-time decay is embedded into an inverse-optimal
minimum-control-effort design. Additional feasibility and small-control
conditions are introduced ~\cite{Sontag1989Universal} to ensure that the resulting feedback is well
defined and admits a continuous extension at the origin. Such feasibility
and continuous-extension issues are not explicitly addressed in the
predefined-time inverse-optimal formulation of
Kokolakis \emph{et al.}~\cite{Kokolakis2025Predefined}, which motivates
their explicit treatment here.


\begin{theorem}[Inverse-optimal predefined-time stabilization]
\label{thm:inverse_optimal_pt}
Suppose that Assumptions \ref{ass:plant}--\ref{ass:inverse_optimal_feasibility} hold. For the minimum-control-effort
construction, choose
\(\bm L_1(\bx)
=
\left[
\frac{
2\left(
\nabla V_d^\top(\bx)\hat{\bfv}(\bx)
+
\frac{\gamma_{p,q,r}}{T_s}
\left[
\alpha V_d^p(\bx)
+
\beta V_d^q(\bx)
\right]^r
\right)
}{
\nabla V_d^\top(\bx)
\bg(\bx)
\bR^{-1}
\bg^\top(\bx)
\nabla V_d(\bx)
}
-1
\right]
\nabla V_d^\top(\bx)\bg(\bx)\)
and
\(L_2(\bx)
=
-\nabla V_d^\top(\bx)\hat{\bfv}(\bx)
+
\frac{
\left(
\nabla V_d^\top(\bx)\hat{\bfv}(\bx)
+
\frac{\gamma_{p,q,r}}{T_s}
\left[
\alpha V_d^p(\bx)
+
\beta V_d^q(\bx)
\right]^r
\right)^2
}{
\nabla V_d^\top(\bx)
\bg(\bx)
\bR^{-1}
\bg^\top(\bx)
\nabla V_d(\bx)
}\).


Then, for either construction, the running cost
\(L(\bx,\bu)
=
\bm L_1(\bx)\bu
+
L_2(\bx)
+
\bu^\top\bR\bu\)
admits the corresponding inverse-optimal policy
\(\bu_c(\bx)\).
Moreover, the closed-loop identified system
\(\dot{\bx}
=
\hat{\bfv}(\bx)
+
\bg(\bx)\bu_c(\bx)\)
satisfies
\(\dot V_d(\bx)
\leq
-\frac{\gamma_{p,q,r}}{T_s}
\left[
\alpha V_d^p(\bx)
+
\beta V_d^q(\bx)
\right]^r\),
and is therefore predefined-time stable with
\(T(\bx_0)\leq T_s\).
\end{theorem}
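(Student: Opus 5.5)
The plan is to verify directly that \(V_d\) solves the HJB equation associated with the running cost built from the given \(\bm L_1\) and \(L_2\), and then to read off the closed-loop decay of \(V_d\).

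\textbf{Notation.} Abbreviate
\(a(\bx):=\nabla V_d^\top\hat{\bfv}\),
\(\rho(\bx):=\frac{\gamma_{p,q,r}}{T_s}[\alpha V_d^p+\beta V_d^q]^r\),
\(b(\bx):=\bg^\top\nabla V_d\in\mathbb R^m\) and
\(s(\bx):=b^\top\bR^{-1}b\).
On the set where \(s(\bx)>0\), the construction then reads
\(\bm L_1=\bigl[\tfrac{2(a+\rho)}{s}-1\bigr]b^\top\) and
\(L_2=-a+\tfrac{(a+\rho)^2}{s}\).

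\textbf{Step 1: the minimizing policy.} Substitute these into the minimizer \eqref{eq:optimal_control_general}. Since \(\bm L_1^\top+b=\tfrac{2(a+\rho)}{s}b\), the policy is
\[
\bu_c(\bx)=-\frac{a+\rho}{s}\,\bR^{-1}b .
\]
This is a Sontag-type minimum-effort feedback.

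\textbf{Step 2: HJB identity.} Evaluate the Hamiltonian \eqref{eq:hamiltonian_predefined_time} at \(\bu_c\). The quadratic term is \(\bu_c^\top\bR\bu_c=\tfrac{(a+\rho)^2}{s}\). The linear term is
\((\bm L_1+b^\top)\bu_c=\tfrac{2(a+\rho)}{s}b^\top\bu_c=-\tfrac{2(a+\rho)^2}{s}\).
Adding \(L_2+a=\tfrac{(a+\rho)^2}{s}\) gives \(H(\bx,\bu_c,\nabla V_d)=0\). Because \(\partial^2H/\partial\bu^2=2\bR\succ0\), \(\bu_c\) is the unique minimizer, so \(\min_{\bu}H=0\) and \(V_d\) satisfies the HJB equation.

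\textbf{Step 3: optimality.} Optimality of \(\bu_c\) among admissible policies follows from the standard completion-of-squares argument. For any admissible \(\bu\),
\[
L(\bx,\bu)+\dot V_d=H(\bx,\bu,\nabla V_d)=(\bu-\bu_c)^\top\bR(\bu-\bu_c)\ge 0 .
\]
Integrating over \([0,\infty)\) and using \(V_d(\bx(t))\to0\) gives \(J(\bx_0,\bu)\ge V_d(\bx_0)\), with equality iff \(\bu=\bu_c\).

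\textbf{Step 4: predefined-time decay.} Along the identified closed loop,
\[
\dot V_d=a+b^\top\bu_c=a-(a+\rho)=-\rho .
\]
This is exactly \eqref{eq:desired_predefined_time_decay}, so in particular the stated inequality holds. Theorem~\ref{thm:normalized_predefined_time} then gives predefined-time stability with \(T(\bx_0)\le T_s\).

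\textbf{Main obstacle: where \(s(\bx)=0\).} The formulas above divide by \(s\), so the real work is controlling the points where \(s(\bx)=0\), including \(\bx=\bm0\). Here I would invoke the feasibility assumption \ref{ass:inverse_optimal_feasibility}, which I expect to be an Artstein–Sontag-type condition: \(b(\bx)=\bm0\) with \(\bx\neq\bm0\) implies \(a(\bx)+\rho(\bx)\le0\) (or \(<0\)), together with a small-control property near the origin. On such points I would define \(\bu_c:=\bm0\) and set \(\bm L_1:=\bm0\) and \(L_2:=-a\).

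Under that assumption, Step 4 still gives \(\dot V_d=a\le-\rho\), which explains why the statement asserts an inequality rather than an equality. The HJB identity still reads \(H=L_2+a=0\) with minimizer \(\bu=\bm0\). Two points remain to check:
\begin{itemize}
\item nonnegativity of the running cost where needed, i.e. \(L_2+\bm L_1\bu_c+\bu_c^\top\bR\bu_c=\rho+a\cdot0\ge0\) after simplification, giving \(L(\bx,\bu_c)=\rho\ge0\) on \(\{s>0\}\);
\item continuity of \(\bu_c\) at the origin, obtained from the small-control property by bounding \(|a+\rho|/\sqrt{s}\) as \(\bx\to\bm0\), as in Sontag's universal formula.
\end{itemize}
Handling this degenerate set and the continuous extension is where I expect the care to be needed. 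I would mirror Sontag's argument, splitting into the cases \(s>0\) and \(s=0\) rather than relying on the closed-form expressions alone.
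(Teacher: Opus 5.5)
Your Steps 1, 2 and 4 are correct and, under the paper's hypotheses, already form a complete proof. The difference from the paper is one of direction.

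The paper argues constructively. It imposes \eqref{eq:desired_predefined_time_decay} as the linear constraint $\nabla V_d^\top\bg\bu=-\sigma_d$ and obtains $\bu_c$ as the $\bR$-weighted minimum-effort solution via a Lagrange multiplier. It then reverse-engineers $\bm L_1$ by requiring the Hamiltonian minimizer \eqref{eq:optimal_control_general} to coincide with $\bu_c$, and obtains $L_2$ from $H^*=0$.

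You take $\bm L_1,L_2$ as given and verify them, using the same two identities: $\bm L_1^\top+b=\tfrac{2(a+\rho)}{s}b$ and $\bu_c^\top\bR\bu_c=(a+\rho)^2/s$. The paper's route explains where the cost comes from and why $\bu_c$ is the minimum-norm feedback that enforces the decay. Yours is shorter. Your Step 3 also supplies the actual optimality argument, which the paper leaves implicit once the HJB identity is established: $H(\bx,\bu,\nabla V_d)=(\bu-\bu_c)^\top\bR(\bu-\bu_c)$, hence $J(\bx_0,\bu)\geq V_d(\bx_0)$ for admissible $\bu$.

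Your ``main obstacle'' rests on a wrong guess about Assumption~\ref{ass:inverse_optimal_feasibility}. It is not an Artstein--Sontag condition but the strict condition $s(\bx)=D_d(\bx)>0$ for all $\bx\in\Omega\setminus\{\bm 0\}$. So the only degenerate point is the origin. The paper treats its continuous extension separately, in Lemmas~\ref{lem:small_control_condition} and~\ref{lem:continuous_extension}, through exactly the ratio $|\sigma_d|/\sqrt{D_d}\to0$ you mention. The inequality in the statement is merely a weakening of the equality $\dot V_d=-\rho$.

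Your fallback would also not have worked under an Artstein--Sontag hypothesis. Since $\bu_c^\top\bR\bu_c=(a+\rho)^2/s$, consider points with $s>0$ approaching a nonzero point where $b=\bm 0$ and $a+\rho<0$. There the feedback blows up, because the equality constraint forces it to slow the decay down to exactly $-\rho$. Setting $\bu_c:=\bm 0$ on that set therefore gives a discontinuous, non-admissible policy.

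Sontag's universal formula avoids this only because its closed-loop decay $-\sqrt{a^2+s^2}$ tends to the drift value $a$ as $s\to0$. When the decay is pinned to $-\rho$, strict feasibility is genuinely needed for this minimum-effort construction.
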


\begin{proof}





Along the identified dynamics, the derivative of \(V_d\) is \(\dot V_d=\nabla V_d^\top(\bx)\hat{\bfv}(\bx)+\nabla V_d^\top(\bx)\bg(\bx)\bu\). Imposing the prescribed predefined-time evolution \eqref{eq:desired_predefined_time_decay} yields
\[
\nabla V_d^\top(\bx)\hat{\bfv}(\bx)
+
\nabla V_d^\top(\bx)\bg(\bx)\bu
=
-
\frac{\gamma_{p,q,r}}{T_s}
\left[
    \alpha V_d^p(\bx)
    +
    \beta V_d^q(\bx)
\right]^r.
\]
Equivalently, \(\nabla V_d^\top(\bx)\bg(\bx)\bu=-\sigma_d(\bx)\), where
\begin{equation}
\begin{aligned}
\sigma_d(\bx)
:={}&
\nabla V_d^\top(\bx)\hat{\bfv}(\bx)
+
\frac{\gamma_{p,q,r}}{T_s}
\left[
    \alpha V_d^p(\bx)
    +
    \beta V_d^q(\bx)
\right]^r.
\end{aligned}
\label{eq:sigma_d_predefined}
\end{equation}
Define also
\begin{equation}
D_d(\bx)
:=
\nabla V_d^\top(\bx)
\bg(\bx)
\bR^{-1}
\bg^\top(\bx)
\nabla V_d(\bx).
\label{eq:Dd_predefined}
\end{equation}

Since infinitely many control inputs may satisfy \(\nabla V_d^\top(\bx)\bg(\bx)\bu+\sigma_d(\bx)=0\), we select the one minimizing the weighted control effort \(\bu^\top\bR\bu\). Accordingly, consider \(\min_{\bu}\bu^\top\bR\bu\) subject to \(\nabla V_d^\top(\bx)\bg(\bx)\bu+\sigma_d(\bx)=0\). The corresponding Lagrangian is \(\mathcal L_c(\bu,\lambda_c)=\bu^\top\bR\bu+\lambda_c\left[\nabla V_d^\top(\bx)\bg(\bx)\bu+\sigma_d(\bx)\right]\). The stationarity condition with respect to \(\bu\) gives \(\partial\mathcal L_c/\partial\bu=2\bR\bu+\lambda_c\bg^\top(\bx)\nabla V_d(\bx)=\bm0\), and therefore \(\bu=-(\lambda_c/2)\bR^{-1}\bg^\top(\bx)\nabla V_d(\bx)\). Substituting this expression into the constraint yields \(-(\lambda_c/2)D_d(\bx)+\sigma_d(\bx)=0\), so that \(\lambda_c=2\sigma_d(\bx)/D_d(\bx)\). Hence, the unique \(\bR\)-weighted minimum-effort feedback satisfying the prescribed predefined-time decay is

\begin{equation}
\begin{aligned}
\bu_c(\bx)
={}&
-
\frac{
\nabla V_d^\top(\bx)\hat{\bfv}(\bx)
+
\dfrac{\gamma_{p,q,r}}{T_s}
\left[
\alpha V_d^p(\bx)
+
\beta V_d^q(\bx)
\right]^r
}{
\nabla V_d^\top(\bx)
\bg(\bx)\bR^{-1}
\bg^\top(\bx)
\nabla V_d(\bx)
}
\\[-1mm]
&\qquad {}\times
\bR^{-1}
\bg^\top(\bx)
\nabla V_d(\bx).
\end{aligned}
\label{eq:continuous_predefined_time_controller}
\end{equation}

The following condition ensures that \eqref{eq:continuous_predefined_time_controller} is well defined.

\begin{assumption}[Inverse-optimal feasibility]
\label{ass:inverse_optimal_feasibility}
For every \(\bx\in\Omega\setminus\{\bm0\}\), \(\nabla V_d^\top(\bx)\bg(\bx)\bR^{-1}\bg^\top(\bx)\nabla V_d(\bx)>0\).
\end{assumption}

\begin{remark}
Assumption~\ref{ass:inverse_optimal_feasibility} excludes nonzero states
at which \(D_d(\bm{x})=0\), equivalently
\(\bm{g}^\top(\bm{x})\nabla V_d(\bm{x})=\bm{0}\).
For example, if
\(\bm{g}=[1\;0]^\top\) and
\(V_d(\bm{x})=\frac{1}{2}(x_1^2+x_2^2)\),
then \(D_d(\bm{x})=0\) whenever \(x_1=0\), even if \(x_2\neq0\).
\end{remark}

Indeed, using \eqref{eq:continuous_predefined_time_controller}, one obtains \(\nabla V_d^\top\bg\bu_c=-(\sigma_d/D_d)\nabla V_d^\top\bg\bR^{-1}\bg^\top\nabla V_d=-\sigma_d\), and therefore
\[
\dot V_d
=
\nabla V_d^\top\hat{\bfv}
+
\nabla V_d^\top\bg\bu_c
=
\nabla V_d^\top\hat{\bfv}
-
\sigma_d
=
-
\frac{\gamma_{p,q,r}}{T_s}
\left[
    \alpha V_d^p
    +
    \beta V_d^q
\right]^r.
\]

We now construct \(\bm L_1(\bx)\) such that the Hamiltonian-minimizing policy \(\bu^*\) coincides with \(\bu_c\). From \eqref{eq:optimal_control_general}, \(\bu^*=-(1/2)\bR^{-1}\left[\bm L_1^\top+\bg^\top\nabla V_d\right]\), whereas \(\bu_c=-(\sigma_d/D_d)\bR^{-1}\bg^\top\nabla V_d\). Imposing \(\bu^*=\bu_c\) gives \(\bm L_1^\top+\bg^\top\nabla V_d=(2\sigma_d/D_d)\bg^\top\nabla V_d\), so that
\begin{equation}
\bm L_1^\top(\bx)
=
\left[
    \frac{2\sigma_d(\bx)}{D_d(\bx)}
    -
    1
\right]
\bg^\top(\bx)\nabla V_d(\bx).
\label{eq:L1_design_compact_predefined}
\end{equation}
Substituting \eqref{eq:sigma_d_predefined} and \eqref{eq:Dd_predefined} gives the explicit expression

\begin{equation}
\begin{aligned}
\bm L_1^\top(\bx)
={}&
\left[
\frac{
    2
    \left[
        \nabla V_d^\top(\bx)\hat{\bfv}(\bx)
        +
        \dfrac{\gamma_{p,q,r}}{T_s}
        \left(
            \alpha V_d^p(\bx)
            +
            \beta V_d^q(\bx)
        \right)^r
    \right]
}{
    \nabla V_d^\top(\bx)
    \bg(\bx)
    \bR^{-1}
    \bg^\top(\bx)
    \nabla V_d(\bx)
}
-
1
\right]
\\
&\qquad\times
\bg^\top(\bx)\nabla V_d(\bx).
\end{aligned}
\label{eq:L1_design}
\end{equation}

It remains to determine \(L_2(\bx)\). Substitution of \eqref{eq:optimal_control_general} into the Hamiltonian gives \(H^*=L_2+\nabla V_d^\top\hat{\bfv}-\frac{1}{4}\left[\bm L_1^\top+\bg^\top\nabla V_d\right]^\top\bR^{-1}\left[\bm L_1^\top+\bg^\top\nabla V_d\right]\). Using \(\bm L_1^\top+\bg^\top\nabla V_d=(2\sigma_d/D_d)\bg^\top\nabla V_d\), the quadratic term becomes
\[
\frac{1}{4}
\left[
    \bm L_1^\top+\bg^\top\nabla V_d
\right]^\top
\bR^{-1}
\left[
    \bm L_1^\top+\bg^\top\nabla V_d
\right]
=
\frac{\sigma_d^2(\bx)}{D_d(\bx)}.
\]
Hence, \(H^*=L_2+\nabla V_d^\top\hat{\bfv}-\sigma_d^2/D_d\). The HJB condition \(H^*=0\) therefore gives
\begin{equation}
L_2(\bx)
=
-
\nabla V_d^\top(\bx)\hat{\bfv}(\bx)
+
\frac{\sigma_d^2(\bx)}{D_d(\bx)}.
\label{eq:L2_design_compact_predefined}
\end{equation}
Equivalently,

\begin{equation}
\begin{aligned}
L_2(\bx)
={}&
-\nabla V_d^\top(\bx)\hat{\bfv}(\bx)
\\
&+
\frac{
\left[
    \nabla V_d^\top(\bx)\hat{\bfv}(\bx)
    +
    \dfrac{\gamma_{p,q,r}}{T_s}
    \left(
        \alpha V_d^p(\bx)
        +
        \beta V_d^q(\bx)
    \right)^r
\right]^2
}{
    \nabla V_d^\top(\bx)
    \bg(\bx)
    \bR^{-1}
    \bg^\top(\bx)
    \nabla V_d(\bx)
}.
\end{aligned}
\label{eq:L2_design}
\end{equation}

Hence the Proof complete.
\end{proof}



\begin{lemma}[Sufficient small-control condition]
\label{lem:small_control_condition}
Suppose that, in a neighborhood of the origin,
\[
c_1\|\bm{x}\|^m
\leq
V_d(\bm{x})
\leq
c_2\|\bm{x}\|^m,
\qquad
\|\nabla V_d(\bm{x})\|
\leq
c_3\|\bm{x}\|^{m-1},
\]
for some \(m>1\).
Under Assumption~\ref{ass:basis_origin}, together with the boundedness
of the identifier weights, there exists a constant \(c_f>0\) such that
\[
\|\hat{\bm f}(\bm{x})\|
\leq
c_f\|\bm{x}\|
\]
in a neighborhood of the origin.
Assume further that there exists a constant \(c_b>0\) such that
\[
\|\bm g^\top(\bm{x})\nabla V_d(\bm{x})\|
\geq
c_b\|\bm{x}\|^{m-1}
\]
for all sufficiently small \(\bm{x}\neq\bm{0}\).
Since
\[
D_d(\bm{x})
=
\bigl[\bm g^\top(\bm{x})\nabla V_d(\bm{x})\bigr]^\top
\bm R^{-1}
\bigl[\bm g^\top(\bm{x})\nabla V_d(\bm{x})\bigr],
\]
there exists \(c_g>0\), specifically
\[
c_g
=
\lambda_{\min}(\bm R^{-1})c_b^2,
\]
such that
\[
D_d(\bm{x})
\geq
c_g\|\bm{x}\|^{2m-2}
\]
for all sufficiently small \(\bm{x}\neq\bm{0}\).
If
\[
mpr>m-1,
\]
then
\[
\lim_{\bm{x}\to\bm{0}}
\frac{|\sigma_d(\bm{x})|}
{\sqrt{D_d(\bm{x})}}
=0.
\]
\end{lemma}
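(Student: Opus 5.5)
The argument splits into three local estimates near the origin: a linear bound on \(\hat{\bm f}\), a lower bound on \(D_d\), and an upper bound on \(|\sigma_d|\). Taking the ratio then finishes the proof. Throughout, fix a ball \(\|\bm{x}\|\leq\delta\) with \(\delta\leq\min\{\delta_\sigma,1\}\) on which all the hypothesized bounds hold.

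\emph{Step 1: the drift bound.} By Lemma~\ref{lem:identification_error} (and Remark~\ref{rem:sequential_identification}) the identifier weights are bounded, so \(\|\hat{\bW}_f\|_F\leq\bar W\) for some \(\bar W\). Assumption~\ref{ass:basis_origin} gives \(\|\bsigma_f(\bm{x})\|\leq c_\sigma\|\bm{x}\|\). Hence \(\|\hat{\bm f}(\bm{x})\|\leq\bar W c_\sigma\|\bm{x}\|\), which is the claimed bound with constant \(\bar W c_\sigma\). I will write this constant as \(c_f\), following the statement, even though it is unrelated to the rate \(c_f\) appearing in Lemma~\ref{lem:identification_error}.

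\emph{Step 2: the lower bound on \(D_d\).} Write \(\bm{v}=\bm g^\top\nabla V_d\). Then \(D_d=\bm{v}^\top\bm R^{-1}\bm{v}\geq\lambda_{\min}(\bm R^{-1})\|\bm{v}\|^2\). Inserting the hypothesis \(\|\bm{v}\|\geq c_b\|\bm{x}\|^{m-1}\) gives \(D_d\geq c_g\|\bm{x}\|^{2m-2}\), and therefore \(\sqrt{D_d}\geq\sqrt{c_g}\,\|\bm{x}\|^{m-1}\).

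\emph{Step 3: the upper bound on \(\sigma_d\).} Use \eqref{eq:sigma_d_predefined} and the triangle inequality, treating the two terms separately.
\begin{itemize}
\item For the drift term, Cauchy--Schwarz with Step~1 and the gradient bound gives \(|\nabla V_d^\top\hat{\bm f}|\leq c_3c_f\|\bm{x}\|^{m}\).
\item For the decay term, \(V_d\leq c_2\|\bm{x}\|^m\leq c_2\) on the ball. Since \(p<q\) and \(V_d\leq 1\) after possibly shrinking \(\delta\), we have \(V_d^q\leq V_d^p\). Hence \([\alpha V_d^p+\beta V_d^q]^r\leq(\alpha+\beta)^r V_d^{pr}\leq(\alpha+\beta)^r c_2^{pr}\|\bm{x}\|^{mpr}\).
\end{itemize}
Together these give \(|\sigma_d|\leq c_3c_f\|\bm{x}\|^m+K\|\bm{x}\|^{mpr}\), where \(K=\gamma_{p,q,r}T_s^{-1}(\alpha+\beta)^r c_2^{pr}\).

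\emph{Step 4: the ratio.} Dividing Step~3 by Step~2 gives
\[
\frac{|\sigma_d(\bm{x})|}{\sqrt{D_d(\bm{x})}}\leq\frac{c_3c_f}{\sqrt{c_g}}\|\bm{x}\|+\frac{K}{\sqrt{c_g}}\|\bm{x}\|^{mpr-m+1}.
\]
The first exponent is \(1>0\). The second exponent, \(mpr-(m-1)\), is positive exactly by the hypothesis \(mpr>m-1\). Both terms therefore vanish as \(\bm{x}\to\bm0\), which proves the limit.

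The main obstacle is Step~3's handling of the low-power term. One must observe that near the origin the \(V_d^p\) term dominates, so that the decay contribution scales like \(\|\bm{x}\|^{mpr}\). This is exactly where the condition \(mpr>m-1\) enters, whereas the drift term is harmless because \(m>m-1\) automatically. A secondary point is that the weight bound used in Step~1 must be uniform in time, which is supplied by Lemma~\ref{lem:identification_error}, or trivially by the fixed weights in Remark~\ref{rem:sequential_identification}.
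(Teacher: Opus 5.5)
Your proposal is correct and follows the paper's proof essentially step for step: the linear bound on $\hat{\bm f}$ from Assumption~\ref{ass:basis_origin} and bounded weights, the estimate $|\sigma_d|=\mathcal O(\|\bm x\|^m)+\mathcal O(\|\bm x\|^{mpr})$, and division by $\sqrt{D_d}\geq\sqrt{c_g}\,\|\bm x\|^{m-1}$. The only differences are that you make the constants explicit (shrinking the ball so that $V_d\leq 1$ and hence $V_d^q\leq V_d^p$) and derive the $D_d$ lower bound via $\lambda_{\min}(\bm R^{-1})$, whereas the paper writes big-O terms and takes $D_d\geq c_g\|\bm x\|^{2m-2}$ as given in the statement.
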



\begin{proof}
From Assumption~\ref{ass:basis_origin} and the boundedness of the
identifier weights, there exists a constant \(c_f>0\) such that
\[
\|\hat{\bm f}(\bm{x})\|
\leq
c_f\|\bm{x}\|
\]
in a neighborhood of the origin. Hence, using
\(\|\nabla V_d(\bm{x})\|
\leq
c_3\|\bm{x}\|^{m-1}\),
one obtains
\[
\left|
\nabla V_d^\top(\bm{x})\hat{\bm f}(\bm{x})
\right|
\leq
c_3c_f\|\bm{x}\|^m
=
\mathcal O(\|\bm{x}\|^m).
\]

Moreover, since \(p<q\) and
\(V_d(\bm{x})=\mathcal O(\|\bm{x}\|^m)\) near the origin,
\[
\left[
\alpha V_d^p(\bm{x})
+
\beta V_d^q(\bm{x})
\right]^r
=
\mathcal O\!\left(
\|\bm{x}\|^{mpr}
\right)
\]
as \(\bm{x}\to\bm{0}\). Therefore,
\[
|\sigma_d(\bm{x})|
=
\mathcal O(\|\bm{x}\|^m)
+
\mathcal O\!\left(
\|\bm{x}\|^{mpr}
\right).
\]

Since
\[
\sqrt{D_d(\bm{x})}
\geq
\sqrt{c_g}\|\bm{x}\|^{m-1},
\]
it follows that
\[
\frac{|\sigma_d(\bm{x})|}
{\sqrt{D_d(\bm{x})}}
=
\mathcal O(\|\bm{x}\|)
+
\mathcal O\!\left(
\|\bm{x}\|^{mpr-(m-1)}
\right).
\]

Because \(m>1\) and \(mpr>m-1\), both terms vanish as
\(\bm{x}\to\bm{0}\). Hence,
\[
\lim_{\bm{x}\to\bm{0}}
\frac{|\sigma_d(\bm{x})|}
{\sqrt{D_d(\bm{x})}}
=0.
\]
\end{proof}

\begin{lemma}[Continuous extension at the origin]
\label{lem:continuous_extension}
Under Assumption~\ref{ass:inverse_optimal_feasibility}
and the preceding regularity conditions, let the conditions of
Lemma~\ref{lem:small_control_condition} hold. Then the control law
\(\bm{u}_c\) and the inverse-optimal cost coefficients
\(\bm{L}_1\) and \(L_2\) admit continuous extensions at the origin given by
\[
\bm{u}_c(\bm{0})=\bm{0},\qquad
\bm{L}_1(\bm{0})=\bm{0},\qquad
L_2(\bm{0})=0.
\]
\end{lemma}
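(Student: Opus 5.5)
The plan is to express all three quantities in terms of the single ratio \(|\sigma_d(\bm{x})|/\sqrt{D_d(\bm{x})}\), which vanishes at the origin by Lemma~\ref{lem:small_control_condition}, plus terms that vanish because of the growth bounds on \(V_d\) and \(\hat{\bm f}\). Throughout, write \(\bm w(\bm{x}):=\bm g^\top(\bm{x})\nabla V_d(\bm{x})\), so that \(D_d=\bm w^\top\bm R^{-1}\bm w\). Assumption~\ref{ass:inverse_optimal_feasibility} guarantees \(D_d>0\) on a punctured neighborhood of the origin, so every expression below is well defined there. It then suffices to show that each quantity tends to its claimed value as \(\bm{x}\to\bm{0}\).

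\emph{Control law.} From \eqref{eq:continuous_predefined_time_controller}, \(\bm u_c=-(\sigma_d/D_d)\bm R^{-1}\bm w\). I will bound the two factors separately:
\begin{itemize}
\item[(a)] \(\|\bm R^{-1}\bm w\|\leq\lambda_{\max}(\bm R^{-1})\|\bm w\|\);
\item[(b)] \(D_d\geq\lambda_{\min}(\bm R^{-1})\|\bm w\|^2\), hence \(\|\bm w\|\leq\sqrt{D_d/\lambda_{\min}(\bm R^{-1})}\).
\end{itemize}
Combining (a) and (b) gives
\[
\|\bm u_c(\bm{x})\|\leq\frac{\lambda_{\max}(\bm R^{-1})}{\sqrt{\lambda_{\min}(\bm R^{-1})}}\,\frac{|\sigma_d(\bm{x})|}{\sqrt{D_d(\bm{x})}}.
\]
Lemma~\ref{lem:small_control_condition} sends the right-hand side to \(0\), so setting \(\bm u_c(\bm 0)=\bm 0\) gives a continuous extension.

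\emph{Cost coefficient \(\bm L_1\).} From \eqref{eq:L1_design_compact_predefined}, I will rewrite
\[
\bm L_1^\top=\frac{2\sigma_d}{D_d}\bm w-\bm w=-2\bm R\bm u_c-\bm w.
\]
The first term tends to zero by the previous step. For the second, \(\bm g\) is continuous and hence bounded near the origin (Assumption~\ref{ass:plant}). Together with \(\|\nabla V_d\|\leq c_3\|\bm{x}\|^{m-1}\) and \(m>1\), this gives \(\|\bm w\|\to0\). Hence \(\bm L_1(\bm 0)=\bm 0\) is a continuous extension.

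\emph{Cost coefficient \(L_2\).} From \eqref{eq:L2_design_compact_predefined}, \(L_2=-\nabla V_d^\top\hat{\bm f}+(\sigma_d/\sqrt{D_d})^2\). The proof of Lemma~\ref{lem:small_control_condition} already gives \(|\nabla V_d^\top\hat{\bm f}|\leq c_3c_f\|\bm{x}\|^m\to0\). The second term is the square of a quantity tending to zero. Thus \(L_2(\bm 0)=0\).

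The only step needing care is the first one: passing from the limit of \(|\sigma_d|/\sqrt{D_d}\) to a bound on \(\|\bm u_c\|\). This requires the \(\bm R\)-weighted norm comparison in (a)–(b), rather than the cruder lower bound \(D_d\geq c_g\|\bm{x}\|^{2m-2}\), which would lose the cancellation between numerator and denominator. Once that comparison is in place, the remaining two steps are immediate consequences of it and of the limit already established in Lemma~\ref{lem:small_control_condition}.
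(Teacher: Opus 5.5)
Your proposal is correct and follows essentially the same route as the paper. You bound \(\|\bm u_c\|\) by a multiple of \(|\sigma_d|/\sqrt{D_d}\), rewrite \(\bm L_1^\top=-2\bm R\bm u_c-\bm g^\top\nabla V_d\), and handle \(L_2=-\nabla V_d^\top\hat{\bm f}+\sigma_d^2/D_d\) termwise; the paper does the same. The differences are cosmetic: the paper obtains the \(\bm u_c\) bound from the exact identity \(\bm u_c^\top\bm R\bm u_c=\sigma_d^2/D_d\), which gives the sharper constant \(\lambda_{\min}(\bm R)^{-1/2}\), instead of your two eigenvalue comparisons, and it gets \(\nabla V_d\to\bm 0\) from \(C^1\)-smoothness and the minimum at the origin rather than from the growth bound \(c_3\|\bm x\|^{m-1}\).
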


\begin{proof}
Assumption~\ref{ass:inverse_optimal_feasibility} ensures that
\(D_d(\bm{x})>0\) for
\(\bm{x}\in\Omega\setminus\{\bm{0}\}\), so the expressions
for \(\bm{u}_c(\bm{x})\), \(\bm{L}_1(\bm{x})\), and
\(L_2(\bm{x})\) are well defined away from the origin.
Their continuity there follows from the preceding regularity
conditions. It therefore remains to establish their limits at the origin.

By Lemma~\ref{lem:small_control_condition},
\[
\lim_{\bm{x}\to\bm{0}}
\frac{|\sigma_d(\bm{x})|}{\sqrt{D_d(\bm{x})}}
=0.
\]

First, substituting
\(\bm{u}_c(\bm{x})
=-(\sigma_d(\bm{x})/D_d(\bm{x}))
\bm{R}^{-1}\bm{g}^\top(\bm{x})
\nabla V_d(\bm{x})\)
into its weighted quadratic norm gives
\[
\begin{aligned}
\bm{u}_c^\top(\bm{x})\bm{R}\bm{u}_c(\bm{x})
&=
\frac{\sigma_d^2(\bm{x})}{D_d^2(\bm{x})}
\nabla V_d^\top(\bm{x})\bm{g}(\bm{x})
\bm{R}^{-1}\bm{g}^\top(\bm{x})
\nabla V_d(\bm{x}) \\
&=
\frac{\sigma_d^2(\bm{x})}{D_d(\bm{x})}.
\end{aligned}
\]
Since \(\bm{R}\succ0\),
\[
\lambda_{\min}(\bm{R})\|\bm{u}_c(\bm{x})\|^2
\leq
\bm{u}_c^\top(\bm{x})\bm{R}\bm{u}_c(\bm{x}),
\]
and consequently
\[
\|\bm{u}_c(\bm{x})\|
\leq
\frac{|\sigma_d(\bm{x})|}
{\sqrt{\lambda_{\min}(\bm{R})D_d(\bm{x})}}.
\]
Hence, Lemma~\ref{lem:small_control_condition} implies
\[
\lim_{\bm{x}\to\bm{0}}\bm{u}_c(\bm{x})=\bm{0},
\]
which proves continuity of the extension
\(\bm{u}_c(\bm{0})=\bm{0}\).

Next, multiplying the control law by \(\bm{R}\) yields
\[
\bm{R}\bm{u}_c(\bm{x})
=
-\frac{\sigma_d(\bm{x})}{D_d(\bm{x})}
\bm{g}^\top(\bm{x})\nabla V_d(\bm{x}).
\]
Hence, the expression for \(\bm{L}_1\) can be rewritten as
\[
\begin{aligned}
\bm{L}_1^\top(\bm{x})
&=
\left[
\frac{2\sigma_d(\bm{x})}{D_d(\bm{x})}-1
\right]
\bm{g}^\top(\bm{x})\nabla V_d(\bm{x}) \\
&=
-2\bm{R}\bm{u}_c(\bm{x})
-\bm{g}^\top(\bm{x})\nabla V_d(\bm{x}).
\end{aligned}
\]
Because \(V_d\) is continuously differentiable and has a minimum
at the origin, \(\nabla V_d(\bm{0})=\bm{0}\) and
\(\nabla V_d(\bm{x})\to\bm{0}\) as \(\bm{x}\to\bm{0}\).
Continuity of \(\bm{g}\) implies that it is locally bounded. Therefore,
\[
\|\bm{L}_1(\bm{x})\|
\leq
2\|\bm{R}\|\|\bm{u}_c(\bm{x})\|
+
\|\bm{g}(\bm{x})\|\|\nabla V_d(\bm{x})\|
\to0.
\]
Thus, \(\bm{L}_1(\bm{0})=\bm{0}\) defines a continuous extension.

Finally, using the quadratic identity above,
\[
L_2(\bm{x})
=
-\nabla V_d^\top(\bm{x})\hat{\bm{f}}(\bm{x})
+
\bm{u}_c^\top(\bm{x})\bm{R}\bm{u}_c(\bm{x}).
\]
By Assumption~\ref{ass:basis_origin} and the boundedness of the
identifier weights, \(\hat{\bm{f}}(\bm{x})\) is locally bounded
near the origin. Hence,
\[
|L_2(\bm{x})|
\leq
\|\nabla V_d(\bm{x})\|
\|\hat{\bm{f}}(\bm{x})\|
+
\|\bm{R}\|\|\bm{u}_c(\bm{x})\|^2
\to0.
\]
Therefore, \(L_2(\bm{0})=0\) also defines a continuous extension,
which completes the proof.
\end{proof}

\begin{corollary}[Design rule for the desired Lyapunov function]
\label{cor:vd_design}
Under the conditions of Lemma~\ref{lem:small_control_condition},
a sufficient design condition ensuring the small-control property is
\(mpr>m-1\), where \(m>1\) denotes the local growth order of
\(V_d\) around the origin. Combining this condition with the
predefined-time requirement \(pr<1<qr\) yields the admissible design
range \((m-1)/m<pr<1<qr\).

In particular, if
\(V_d(\bm{x})=\frac{1}{2}\bm{x}^\top\bm{P}\bm{x}\), with
\(\bm{P}=\bm{P}^\top\succ0\), then \(m=2\), and the design condition
reduces to \(\frac{1}{2}<pr<1<qr\).
\end{corollary}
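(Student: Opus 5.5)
The corollary is essentially a repackaging of Lemma~\ref{lem:small_control_condition} together with the exponent conditions of Theorem~\ref{thm:normalized_predefined_time}. The plan is therefore to check three things in turn: (i) \(mpr>m-1\) is exactly the hypothesis under which Lemma~\ref{lem:small_control_condition} delivers the small-control property; (ii) intersecting it with \(pr<1<qr\) gives the stated range; and (iii) the quadratic case satisfies the growth hypotheses with \(m=2\).

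\textbf{Step (i).} I will invoke Lemma~\ref{lem:small_control_condition} directly. Under its hypotheses, the condition \(mpr>m-1\) yields
\[
\lim_{\bx\to\bm0}\frac{|\sigma_d(\bx)|}{\sqrt{D_d(\bx)}}=0,
\]
which is the small-control property. By Lemma~\ref{lem:continuous_extension}, this property also gives the continuous extensions \(\bu_c(\bm0)=\bm0\), \(\bm L_1(\bm0)=\bm0\) and \(L_2(\bm0)=0\). Nothing new needs to be proved here; sufficiency is inherited from the lemma.

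\textbf{Step (ii).} I will rewrite \(mpr>m-1\) as \(pr>(m-1)/m\), dividing by \(m>0\). Intersecting this with \(pr<1<qr\) from \eqref{eq:normalized_exponent_conditions} gives \((m-1)/m<pr<1<qr\). This interval is nonempty, since \((m-1)/m<1\) for every finite \(m>1\). I will also note that any \(p<q\) and \(r>0\) in the range are still compatible with Theorem~\ref{thm:normalized_predefined_time}. For example, fix \(r\), pick \(p\) with \(pr\) in the interval, and pick \(q>1/r\).

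\textbf{Step (iii).} For \(V_d=\tfrac12\bx^\top\bm P\bx\), I will verify the bounds \(\tfrac12\lambda_{\min}(\bm P)\|\bx\|^2\le V_d\le\tfrac12\lambda_{\max}(\bm P)\|\bx\|^2\) and, using \(\nabla V_d=\bm P\bx\), the bound \(\|\nabla V_d\|\le\lambda_{\max}(\bm P)\|\bx\|\). These are exactly the Lemma's hypotheses with \(m=2\), \(c_1=\lambda_{\min}(\bm P)/2\), \(c_2=c_3\) suitably chosen (\(c_2=\lambda_{\max}(\bm P)/2\), \(c_3=\lambda_{\max}(\bm P)\)). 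Substituting \(m=2\) gives \(1/2<pr<1<qr\).

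The only point requiring care is that the remaining hypothesis \(\|\bg^\top\nabla V_d\|\ge c_b\|\bx\|^{m-1}\) is not automatic for the quadratic choice. It must be kept as part of ``the conditions of Lemma~\ref{lem:small_control_condition}''. With \(\nabla V_d=\bm P\bx\), it reads \(\|\bg^\top(\bx)\bm P\bx\|\ge c_b\|\bx\|\). I will point out that it holds, for instance, when \(\bg(\bm0)^\top\bm P\) has trivial kernel locally (via continuity of \(\bg\)); otherwise it is simply assumed, as the corollary's phrasing allows.
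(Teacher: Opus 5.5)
Your proposal is correct and follows the paper's reasoning. The paper gives no separate proof of this corollary: it reads it off Lemma~\ref{lem:small_control_condition} by rewriting \(mpr>m-1\) as \(pr>(m-1)/m\), intersecting with \(pr<1<qr\), and taking \(m=2\) for the quadratic \(V_d\), which is exactly your Steps (i)--(iii). Your extra remarks are accurate refinements of that argument: that the range is nonempty, that it is compatible with \(p<q\), and that the lower bound \(\|\bg^\top\nabla V_d\|\geq c_b\|\bx\|^{m-1}\) is not automatic for quadratic \(V_d\) and must be kept as a hypothesis.
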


\subsection{Critic-Only Integral Reinforcement Learning}
\label{subsec:critic_only_irl}

\begin{remark}
Although the unknown drift \(\bfv(\bx)\) has been reconstructed in the
preceding identification stage, the estimate \(\hat{\bfv}(\bx)\) is not
treated as an exact model of the plant. It is used only as an auxiliary
quantity for constructing the inverse-optimal terms \(\bm L_1(\bx)\) and
\(L_2(\bx)\) in the preceding section. In the present IRL formulation,
\(\bfv(\bx)\) is still regarded as unknown, and the optimal value function
and control policy are learned directly from integral data without
requiring the exact drift dynamics. Another point to note is that, since the function $V_d$ and its desired behavior are defined with respect to \(\hat{\bfv}(\bx)\) rather than  \(\bfv(\bx)\), the resulting dynamics on  \(\bfv(\bx)\) will only achieve practical predefined-time convergence (see Theorem~\ref{thm:closed_loop}). The discrepancy between \(\bfv(\bx)\) and \(\hat{\bfv}(\bx)\) represented by the term \(\bar\Delta_f\) contributes to the radius of the convergence region (as shown in the proof of Theorem~\ref{thm:closed_loop}).
\end{remark}

After constructing \(\bm L_1(\bx)\) and \(L_2(\bx)\), define
\(\ell(\bx,\bu):=\bm L_1(\bx)\bu+L_2(\bx)+\bu^\top\bR\bu\).
The optimal value function is
\begin{equation}
    V^*(\bx_0)
    =
    \min_{\bu}
    \int_0^\infty
    \ell(\bx(t),\bu(t))\,dt .
    \label{eq:optimal_value_rl}
\end{equation}
For \(\dot{\bx}=\bfv(\bx)+\bg(\bx)\bu\), the Hamiltonian is
\(H(\bx,\bu,\nabla V^*)=\ell(\bx,\bu)+
\nabla V^{*\top}(\bx)[\bfv(\bx)+\bg(\bx)\bu]\).
The HJB equation satisfies
\(0=\min_{\bu}H(\bx,\bu,\nabla V^*)\).
Since \(\partial^2H/\partial\bu^2=2\bR>0\), the minimizing policy follows
from
\(\partial H/\partial\bu=
\bm L_1^\top(\bx)+2\bR\bu+
\bg^\top(\bx)\nabla V^*(\bx)=\bm0\).
Hence,
\begin{equation}
    \bu^*(\bx)
    =
    -\frac{1}{2}
    \bR^{-1}
    \left[
        \bm L_1^\top(\bx)
        +
        \bg^\top(\bx)\nabla V^*(\bx)
    \right].
    \label{eq:critic_only_optimal_policy}
\end{equation}

On the compact set \(\Omega\), let
\begin{equation}
    V^*(\bx)
    =
    \bW^{*\top}\bphi(\bx)
    +
    \varepsilon(\bx),
    \label{eq:critic_value_approximation}
\end{equation}
where \(\bphi(\bx)\in\mathbb R^{N_c}\),
\(\bW^*\in\mathbb R^{N_c}\), and
\(\varepsilon(\bx)\) denotes the approximation error. Then
\(\nabla V^*(\bx)=
(\nabla\bphi(\bx))^\top\bW^*
+
\nabla\varepsilon(\bx)\).
The critic estimate is
\begin{equation}
    \hat V(\bx)
    =
    \hat{\bW}^{\top}\bphi(\bx),
    \qquad
    \nabla\hat V(\bx)
    =
    (\nabla\bphi(\bx))^\top\hat{\bW}.
    \label{eq:critic_estimate}
\end{equation}
Thus, the critic-generated policy is
\begin{equation}
    \hat{\bu}(\bx,\hat{\bW})
    =
    -\frac{1}{2}
    \bR^{-1}
    \left[
        \bm L_1^\top(\bx)
        +
        \bg^\top(\bx)
        (\nabla\bphi(\bx))^\top
        \hat{\bW}
    \right].
    \label{eq:critic_only_policy}
\end{equation}

Let \(\boldsymbol{\mu}(\bx)\) denote the current
learned policy and \(\bu_b(t)\) the behavior input
used to generate the measured trajectory.
During learning, these policies satisfy
\begin{equation}
\begin{aligned}
\boldsymbol{\mu}(\bx(t))
&=
\hat{\bu}(\bx(t),\hat{\bW}(t)),
\\
\bu_b(t)
&=
\boldsymbol{\mu}(\bx(t))
+
\be_p(t),
\end{aligned}
\label{eq:learned_behavior_policy}
\end{equation}
where \(\be_p(t)\in\mathbb R^m\) is a bounded
probing signal introduced to provide sufficient
excitation during learning.

The deviation of the behavior input from the
optimal policy can be decomposed as
\begin{equation}
\begin{aligned}
\bu_b-\bu^*
&=
(\bu_b-\boldsymbol{\mu})
+
(\boldsymbol{\mu}-\bu^*)
\\
&=
\be_p+
(\boldsymbol{\mu}-\bu^*).
\end{aligned}
\label{eq:behavior_optimal_decomposition}
\end{equation}

Along the measured dynamics
\(\dot{\bx}=\bfv(\bx)+\bg(\bx)\bu_b\),
the optimal Bellman equation is
\begin{equation}
\begin{aligned}
0
={}&
\ell(\bx,\bu^*)
\\
&+
\nabla V^{*\top}(\bx)
\left[
\bfv(\bx)+\bg(\bx)\bu^*
\right].
\end{aligned}
\label{eq:optimal_bellman_policy}
\end{equation}

Since the measured trajectory is generated
under \(\bu_b\) rather than \(\bu^*\),
define the policy-suboptimality residual
associated with the current learned policy as
\begin{equation}
\begin{aligned}
\delta_\mu(\bx)
:={}&
\ell(\bx,\boldsymbol{\mu})
-
\ell(\bx,\bu^*)
\\
&+
\nabla V^{*\top}(\bx)\bg(\bx)
\left[
\boldsymbol{\mu}(\bx)-\bu^*(\bx)
\right].
\end{aligned}
\label{eq:policy_suboptimality_residual}
\end{equation}

Consequently,
\begin{equation}
\begin{aligned}
\nabla V^{*\top}
\left[
\bfv+\bg\boldsymbol{\mu}
\right]
=
-\ell(\bx,\boldsymbol{\mu})
+
\delta_\mu(\bx).
\end{aligned}
\label{eq:learned_policy_bellman_relation}
\end{equation}

Using
\(
\bfv+\bg\bu_b
=
\bfv+\bg\boldsymbol{\mu}
+
\bg(\bu_b-\boldsymbol{\mu})
\),
one obtains
\begin{equation}
\begin{aligned}
\dot V^*
={}&
\nabla V^{*\top}
\left[
\bfv+\bg\bu_b
\right]
\\
={}&
-\ell(\bx,\boldsymbol{\mu})
+
\delta_\mu(\bx)
\\
&+
\nabla V^{*\top}\bg
\left(
\bu_b-\boldsymbol{\mu}
\right).
\end{aligned}
\label{eq:behavior_value_derivative}
\end{equation}

Thus, the probing signal enters the Bellman
relation through
\(\nabla V^{*\top}\bg\be_p\),
whereas \(\delta_\mu\) accounts for the
suboptimality of the learned policy.
After the data-excitation phase
\cite{Guo2020CompositeLearning,Parikh2019IntegralCL},
the probing signal is removed, yielding
\(\be_p=\mathbf0\).
In contrast, \(\delta_\mu=0\) when
\(\boldsymbol{\mu}=\bu^*\).

Integrating over \([t-T,t]\) gives
\begin{equation}
\begin{aligned}
0
={}&
V^*(\bx(t))
-
V^*(\bx(t-T))
\\
&+
\int_{t-T}^{t}
\ell\!\left(
    \bx(\tau),
    \boldsymbol\mu(\tau)
\right)d\tau
\\
&-
\int_{t-T}^{t}
\nabla V^{*\top}(\bx(\tau))
\bg(\bx(\tau))
\left[
    \bu_b(\tau)
    -
    \boldsymbol\mu(\tau)
\right]d\tau
\\
&-
\int_{t-T}^{t}
\delta_\mu(\bx(\tau))\,d\tau .
\end{aligned}
\label{eq:integral_bellman_equation}
\end{equation}

From Eq.~\eqref{eq:critic_value_approximation} we have
\(V^*(\bx(t))-V^*(\bx(t-T))
=
\bW^{*\top}
[\bphi(\bx(t))-\bphi(\bx(t-T))]
+
\varepsilon(\bx(t))
-
\varepsilon(\bx(t-T))\).
Define
\(\Delta\bphi(t):=
\bphi(\bx(t))-\bphi(\bx(t-T))\).
Since
\(\nabla V^{*\top}(\bx)
=
\bW^{*\top}\nabla\bphi(\bx)
+
\nabla\varepsilon^\top(\bx)\),
substitution into
\eqref{eq:integral_bellman_equation}
gives
\[
\begin{aligned}
0
={}&
\bW^{*\top}\Delta\bphi(t)
+
\varepsilon(\bx(t))
-
\varepsilon(\bx(t-T))
\\
&+
\int_{t-T}^{t}
\ell(\bx(\tau),\boldsymbol\mu(\tau))d\tau
\\
&-
\bW^{*\top}
\int_{t-T}^{t}
\nabla\bphi(\bx(\tau))
\bg(\bx(\tau))
[
\bu_b(\tau)-\boldsymbol\mu(\tau)
]d\tau
\\
&-
\int_{t-T}^{t}
\nabla\varepsilon^\top(\bx(\tau))
\bg(\bx(\tau))
[
\bu_b(\tau)-\boldsymbol\mu(\tau)
]d\tau
\\
&-
\int_{t-T}^{t}
\delta_\mu(\bx(\tau))\,d\tau .
\end{aligned}
\]

Define
\begin{equation}
\begin{aligned}
    \boldsymbol\Omega(t)
    :={}&
    \Delta\bphi(t)
    \\
    &-
    \int_{t-T}^{t}
    \nabla\bphi(\bx(\tau))
    \bg(\bx(\tau))
    \left[
        \bu_b(\tau)
        -
        \boldsymbol\mu(\tau)
    \right]d\tau
\end{aligned}
\label{eq:integral_regressor}
\end{equation}
and
\begin{equation}
    \rho(t)
    :=
    \int_{t-T}^{t}
    \ell\!\left(
        \bx(\tau),
        \boldsymbol\mu(\tau)
    \right)d\tau .
    \label{eq:integral_cost_term}
\end{equation}

Define the aggregate Bellman residual as

\begin{equation}
    \begin{aligned}
\varepsilon_B(t)
:={}&
\varepsilon(\bx(t))
-
\varepsilon(\bx(t-T))
\\
&-
\int_{t-T}^{t}
\nabla\varepsilon^\top(\bx(\tau))
\bg(\bx(\tau))
[
\bu_b(\tau)-\boldsymbol\mu(\tau)
]d\tau
\\
&-
\int_{t-T}^{t}
\delta_\mu(\bx(\tau))\,d\tau-\int_{t-T}^{t}
\delta_\mu(\bx(\tau))\,d\tau .
\end{aligned}
\label{Residual}
\end{equation}

Using
\eqref{eq:integral_regressor}
and
\eqref{eq:integral_cost_term},
the integral Bellman equation becomes
\begin{equation}
    \bW^{*\top}\boldsymbol\Omega(t)
    +
    \rho(t)
    +
    \varepsilon_B(t)
    =
    0.
    \label{eq:integral_bellman_with_approximation}
\end{equation}


Equation~\eqref{eq:integral_bellman_with_approximation} is a significant equation in stability analysis; however, it is not the equation used for learning.
For the critic estimate \(\hat{\bW}\), define the integral Bellman
residual
\begin{equation}
    \xi(t)
    :=
    \hat{\bW}^{\top}(t)
    \boldsymbol\Omega(t)
    +
    \rho(t).
    \label{eq:critic_integral_residual}
\end{equation}

Let
\(\tilde{\bW}:=
\hat{\bW}-\bW^*\).
Under exact approximation,
\(\bW^{*\top}\boldsymbol\Omega+\rho=0\),
and hence
\(\xi=
\hat{\bW}^{\top}\boldsymbol\Omega
-
\bW^{*\top}\boldsymbol\Omega
=
\tilde{\bW}^{\top}\boldsymbol\Omega\).
More generally, when
\(\varepsilon\neq0\),
\eqref{eq:integral_bellman_with_approximation} gives
\(\rho=
-\bW^{*\top}\boldsymbol\Omega
-
\varepsilon_B\),
and therefore
\(\xi=
\hat{\bW}^{\top}\boldsymbol\Omega
-
\bW^{*\top}\boldsymbol\Omega
-
\varepsilon_B
=
\tilde{\bW}^{\top}\boldsymbol\Omega
-
\varepsilon_B\).

To normalize the regressor, define
\begin{equation}
    m(t)
    :=
    1+
    \boldsymbol\Omega^\top(t)
    \boldsymbol\Omega(t),
    \qquad
    \bar{\boldsymbol\Omega}(t)
    :=
    \frac{\boldsymbol\Omega(t)}
    {m(t)} .
    \label{eq:normalized_integral_regressor}
\end{equation}
The normalized Bellman residual is
\begin{equation}
    e(t)
    :=
    \frac{\xi(t)}{m(t)}
    =
    \frac{
        \hat{\bW}^{\top}(t)\boldsymbol\Omega(t)
        +
        \rho(t)
    }{
        m(t)
    } .
    \label{eq:normalized_bellman_error}
\end{equation}
Under exact approximation,
\(e(t)=
\tilde{\bW}^{\top}
\bar{\boldsymbol\Omega}(t)\).
For the general approximation case, define
\(\bar\varepsilon_B(t):=
\varepsilon_B(t)/m(t)\).
Then
\(e(t)=
\tilde{\bW}^{\top}
\bar{\boldsymbol\Omega}(t)
-
\bar\varepsilon_B(t)\),
or equivalently,
\(\tilde{\bW}^{\top}
\bar{\boldsymbol\Omega}(t)
=
e(t)
+
\bar\varepsilon_B(t)\).

Let \(M\) previously recorded integral samples be denoted by
\(\{\boldsymbol\Omega_j,\rho_j\}_{j=1}^{M}\).
For each stored sample, define
\(m_j:=
1+\boldsymbol\Omega_j^\top\boldsymbol\Omega_j\),
\(\bar{\boldsymbol\Omega}_j:=
\boldsymbol\Omega_j/m_j\),
and
\begin{equation}
    e_j
    :=
    \frac{
        \hat{\bW}^{\top}\boldsymbol\Omega_j
        +
        \rho_j
    }{
        m_j
    } .
    \label{eq:recorded_bellman_error}
\end{equation}
For each recorded sample, define
\(\varepsilon_{B,j}\)
according to
\eqref{Residual}
evaluated on the corresponding stored window and
\(\bar\varepsilon_{B,j}:=
\varepsilon_{B,j}/m_j\).
Then
\(e_j=
\tilde{\bW}^{\top}
\bar{\boldsymbol\Omega}_j
-
\bar\varepsilon_{B,j}\),
or equivalently,
\(\tilde{\bW}^{\top}
\bar{\boldsymbol\Omega}_j
=
e_j
+
\bar\varepsilon_{B,j}\).

Assume that the approximation residual is bounded on the compact
operating set, so that there exists
\(\bar\varepsilon_B>0\)
satisfying
\begin{equation}
    |\bar\varepsilon_B(t)|
    \leq
    \bar\varepsilon_B,
    \qquad
    |\bar\varepsilon_{B,j}|
    \leq
    \bar\varepsilon_B,
    \qquad
    j=1,\ldots,M.
    \label{eq:normalized_approximation_residual_bound}
\end{equation}

The recorded regressors are assumed to satisfy
\begin{equation}
    \sum_{j=1}^{M}
    \bar{\boldsymbol\Omega}_j
    \bar{\boldsymbol\Omega}_j^\top
    \succeq
    \alpha_{\mathrm{IE}}\bm I,
    \qquad
    \alpha_{\mathrm{IE}}>0 .
    \label{eq:critic_ie_condition}
\end{equation}
Equivalently,
\(\lambda_{\min}
(
\sum_{j=1}^{M}
\bar{\boldsymbol\Omega}_j
\bar{\boldsymbol\Omega}_j^\top
)
\geq
\alpha_{\mathrm{IE}}>0\).

Let
\(0<\gamma_1<1<\gamma_2\), the exponents
\(
\gamma_1
\)
And
\(
\gamma_2
\)
govern the convergence behavior of the critic update; however, they are
not selected independently. To ensure consistency with the subsequent
closed-loop predefined-time analysis (Section~\ref{sec:closed_loop}), they are chosen as
\(
\gamma_1=pr
\)
and
\(
\gamma_2=2qr-1
\),
so that
\(
0<\gamma_1<1<\gamma_2
\).

and define
\(\operatorname{sig}^{\gamma}(s)
:=
|s|^\gamma\operatorname{sgn}(s)\).
Consider the scalar two-power penalty
\(\Psi(e)
:=
\frac{k_1}{\gamma_1+1}|e|^{\gamma_1+1}
+
\frac{k_2}{\gamma_2+1}|e|^{\gamma_2+1}\),
where \(k_1,k_2>0\). Its derivative is
\(\frac{\partial\Psi(e)}{\partial e}
=
k_1\operatorname{sig}^{\gamma_1}(e)
+
k_2\operatorname{sig}^{\gamma_2}(e)\).

Since
\(e(t)=
[
\hat{\bW}^{\top}\boldsymbol\Omega(t)+\rho(t)
]/m(t)\)
and by chain rule,
one has
\(\partial e(t)/\partial\hat{\bW}
=
\boldsymbol\Omega(t)/m(t)
=
\bar{\boldsymbol\Omega}(t)\).
Therefore,
\(\nabla_{\hat{\bW}}\Psi(e(t))
=
\bar{\boldsymbol\Omega}(t)
[
k_1\operatorname{sig}^{\gamma_1}(e(t))
+
k_2\operatorname{sig}^{\gamma_2}(e(t))
]\).

Similarly,
\(\partial e_j/\partial\hat{\bW}
=
\bar{\boldsymbol\Omega}_j\),
and hence
\(\nabla_{\hat{\bW}}\Psi(e_j)
=
\bar{\boldsymbol\Omega}_j
[
k_1\operatorname{sig}^{\gamma_1}(e_j)
+
k_2\operatorname{sig}^{\gamma_2}(e_j)
]\).

Define the composite critic objective
\begin{equation}
    \mathcal J_W
    :=
    \Psi(e(t))
    +
    \sum_{j=1}^{M}
    \Psi(e_j).
    \label{eq:critic_composite_objective}
\end{equation}
Then
\[
\begin{aligned}
\nabla_{\hat{\bW}}\mathcal J_W
={}&
\bar{\boldsymbol\Omega}(t)
\left[
k_1\operatorname{sig}^{\gamma_1}(e(t))
+
k_2\operatorname{sig}^{\gamma_2}(e(t))
\right]
\\
&+
\sum_{j=1}^{M}
\bar{\boldsymbol\Omega}_j
\left[
k_1\operatorname{sig}^{\gamma_1}(e_j)
+
k_2\operatorname{sig}^{\gamma_2}(e_j)
\right].
\end{aligned}
\]

To introduce an explicit scaling parameter that will subsequently be
selected to enforce the desired predefined-time critic-learning horizon,
consider the scaled gradient flow
\(
\dot{\hat{\bW}}
=
-\alpha_W
\bm{\Gamma}
\nabla_{\hat{\bW}}\mathcal J_W
\),
where
\(\alpha_W>0\) and
\(\bm{\Gamma}
=
\bm{\Gamma}^{\top}
\succ0\).
This gives the predefined-time critic update law
\begin{equation}
\begin{aligned}
    \dot{\hat{\bW}}
    ={}&
    -\alpha_W
    \bm{\Gamma}
    \bar{\boldsymbol\Omega}(t)
    \left[
        k_1
        \operatorname{sig}^{\gamma_1}(e(t))
        +
        k_2
        \operatorname{sig}^{\gamma_2}(e(t))
    \right]
    \\
    &-
    \alpha_W
    \bm{\Gamma}
    \sum_{j=1}^{M}
    \bar{\boldsymbol\Omega}_j
    \left[
        k_1
        \operatorname{sig}^{\gamma_1}(e_j)
        +
        k_2
        \operatorname{sig}^{\gamma_2}(e_j)
    \right].
\end{aligned}
\label{eq:predefined_time_critic_update}
\end{equation}

\begin{lemma}[Replay-stack coercivity]
\label{lem:replay_stack_coercivity}
Suppose that the recorded normalized regressors satisfy the finite-data
informativity condition \eqref{eq:critic_ie_condition}, namely,
\(
\sum_{j=1}^{M}
\bar{\boldsymbol\Omega}_j
\bar{\boldsymbol\Omega}_j^\top
\succeq
\alpha_{\mathrm{IE}}\bm I
\),
with
\(
\alpha_{\mathrm{IE}}>0
\).
Then, for any
\(
\tilde{\bW}\in\mathbb R^{N_c}
\),
the following inequalities hold:
\begin{equation}
\sum_{j=1}^{M}
\left|
\tilde{\bW}^{\top}
\bar{\boldsymbol\Omega}_j
\right|^{\gamma_1+1}
\geq
\alpha_{\mathrm{IE}}^{\frac{\gamma_1+1}{2}}
\|\tilde{\bW}\|^{\gamma_1+1},
\label{eq:replay_lower_bound_gamma1}
\end{equation}
and
\begin{equation}
\sum_{j=1}^{M}
\left|
\tilde{\bW}^{\top}
\bar{\boldsymbol\Omega}_j
\right|^{\gamma_2+1}
\geq
M^{\frac{1-\gamma_2}{2}}
\alpha_{\mathrm{IE}}^{\frac{\gamma_2+1}{2}}
\|\tilde{\bW}\|^{\gamma_2+1}.
\label{eq:replay_lower_bound_gamma2}
\end{equation}
In particular, both lower-bound coefficients are strictly positive.
\end{lemma}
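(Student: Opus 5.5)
The plan is to reduce both inequalities to comparisons between $\ell^s$-norms of a single finite vector, using \eqref{eq:critic_ie_condition} to control its $\ell^2$-norm. Fix $\tilde{\bW}\in\mathbb R^{N_c}$ and set $a_j:=|\tilde{\bW}^{\top}\bar{\boldsymbol\Omega}_j|\geq0$, $j=1,\ldots,M$, with $\bm a:=\operatorname{col}\{a_1,\ldots,a_M\}$. If $\tilde{\bW}=\bm0$ both sides vanish, so I will assume $\tilde{\bW}\neq\bm0$.

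\textbf{Step 1 ($\ell^2$ bound).} Since $a_j^2=\tilde{\bW}^{\top}\bar{\boldsymbol\Omega}_j\bar{\boldsymbol\Omega}_j^{\top}\tilde{\bW}$, summing over $j$ and applying \eqref{eq:critic_ie_condition} gives
\[
\sum_{j=1}^{M}a_j^2=\tilde{\bW}^{\top}\Big(\sum_{j=1}^{M}\bar{\boldsymbol\Omega}_j\bar{\boldsymbol\Omega}_j^{\top}\Big)\tilde{\bW}\geq\alpha_{\mathrm{IE}}\|\tilde{\bW}\|^2 .
\]
This is the same argument as \eqref{eq:stored_data_lower_bound} in the identifier analysis. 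Raising it to the power $s/2>0$ yields
\[
\Big(\sum_{j=1}^{M}a_j^2\Big)^{s/2}\geq\alpha_{\mathrm{IE}}^{s/2}\|\tilde{\bW}\|^{s}
\qquad\text{for any } s>0.
\]

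\textbf{Step 2 (exponent $s=\gamma_1+1\in(1,2)$).} Here I will use monotonicity of $\ell^p$-norms: for $0<s\leq2$ one has $\|\bm a\|_s\geq\|\bm a\|_2$. To see this, normalize so that $\|\bm a\|_2=1$. Then each $a_j\leq1$, so $a_j^s\geq a_j^2$, and summing gives $\sum_j a_j^s\geq1$. Hence $\sum_j a_j^{\gamma_1+1}\geq(\sum_j a_j^2)^{(\gamma_1+1)/2}$, and combining with Step 1 gives \eqref{eq:replay_lower_bound_gamma1}. No dependence on $M$ appears.

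\textbf{Step 3 (exponent $s=\gamma_2+1>2$).} The inequality now runs the other way, so a factor depending on $M$ is unavoidable. I will apply Hölder's inequality with exponents $s/2$ and $s/(s-2)$ to $\sum_j a_j^2\cdot1$:
\[
\sum_{j=1}^{M}a_j^2\leq M^{1-2/s}\Big(\sum_{j=1}^{M}a_j^s\Big)^{2/s}.
\]
This is the power-mean inequality. Raising to the power $s/2$ gives
\[
\sum_j a_j^s\geq M^{1-s/2}\Big(\sum_j a_j^2\Big)^{s/2}.
\]
With $s=\gamma_2+1$ we have $1-s/2=(1-\gamma_2)/2$, and Step 1 then yields \eqref{eq:replay_lower_bound_gamma2}.

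\textbf{Positivity and main obstacle.} Positivity of both coefficients is immediate, since $\alpha_{\mathrm{IE}}>0$ and $M\geq1$. The only point needing care is keeping the direction of the norm comparison straight in the two regimes $s<2$ and $s>2$, and tracking the exponent of $M$ correctly. I expect Step 3 to be the main place to check: the exponent of $M$ must come out as $(1-\gamma_2)/2$, and it is negative because $\gamma_2>1$.
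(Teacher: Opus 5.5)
Your proposal is correct and follows essentially the same route as the paper: an $\ell^2$ lower bound $\sum_j (\tilde{\bW}^{\top}\bar{\boldsymbol\Omega}_j)^2\geq\alpha_{\mathrm{IE}}\|\tilde{\bW}\|^2$ from the informativity condition, then $\|\cdot\|_{s}\geq\|\cdot\|_2$ for $s=\gamma_1+1<2$, and the reverse comparison with the factor $M^{1/s-1/2}$ for $s=\gamma_2+1>2$. The differences are only cosmetic: you obtain the $\ell^2$ bound directly from the quadratic form instead of through $\sigma_{\min}$, and you prove the two norm comparisons (by normalization and by H\"older) where the paper simply cites them.
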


\begin{proof}
Define the replay-regressor matrix
\(
\bar{\bm\Omega}
:=
[
\bar{\boldsymbol\Omega}_1,\ldots,
\bar{\boldsymbol\Omega}_M
]
\).
Then
\(
\bar{\bm\Omega}^{\top}\tilde{\bW}
=
\operatorname{col}
\{
\tilde{\bW}^{\top}\bar{\boldsymbol\Omega}_1,
\ldots,
\tilde{\bW}^{\top}\bar{\boldsymbol\Omega}_M
\}
\).
From \eqref{eq:critic_ie_condition},
\(
\bar{\bm\Omega}\bar{\bm\Omega}^{\top}
\succeq
\alpha_{\mathrm{IE}}\bm I
\),
and therefore
\(
\sigma_{\min}(\bar{\bm\Omega})
\geq
\sqrt{\alpha_{\mathrm{IE}}}
\).
Hence,
\(
\|\bar{\bm\Omega}^{\top}\tilde{\bW}\|_2
\geq
\sqrt{\alpha_{\mathrm{IE}}}\,
\|\tilde{\bW}\|
\).

Since
\(
0<\gamma_1<1
\),
one has
\(
1<\gamma_1+1<2
\).
For any finite-dimensional vector, the
\(
\ell_p
\)-norm is no smaller than the
\(
\ell_2
\)-norm when
\(
1<p<2
\).
Thus,
\[
\begin{aligned}
\sum_{j=1}^{M}
\left|
\tilde{\bW}^{\top}
\bar{\boldsymbol\Omega}_j
\right|^{\gamma_1+1}
&=
\left\|
\bar{\bm\Omega}^{\top}\tilde{\bW}
\right\|_{\gamma_1+1}^{\gamma_1+1}
\\
&\geq
\left\|
\bar{\bm\Omega}^{\top}\tilde{\bW}
\right\|_2^{\gamma_1+1}
\\
&\geq
\alpha_{\mathrm{IE}}^{\frac{\gamma_1+1}{2}}
\|\tilde{\bW}\|^{\gamma_1+1},
\end{aligned}
\]
which proves \eqref{eq:replay_lower_bound_gamma1}.

Since
\(
\gamma_2>1
\),
one has
\(
\gamma_2+1>2
\).
Using the finite-dimensional norm relation
\(
\|z\|_p
\geq
M^{\frac{1}{p}-\frac{1}{2}}
\|z\|_2
\)
for
\(
p>2
\),
with
\(
p=\gamma_2+1
\),
gives
\[
\left\|
\bar{\bm\Omega}^{\top}\tilde{\bW}
\right\|_{\gamma_2+1}
\geq
M^{
\frac{1}{\gamma_2+1}
-
\frac{1}{2}
}
\left\|
\bar{\bm\Omega}^{\top}\tilde{\bW}
\right\|_2.
\]
Raising both sides to the power
\(
\gamma_2+1
\)
and using the finite-data lower bound yields
\[
\begin{aligned}
\sum_{j=1}^{M}
\left|
\tilde{\bW}^{\top}
\bar{\boldsymbol\Omega}_j
\right|^{\gamma_2+1}
&=
\left\|
\bar{\bm\Omega}^{\top}\tilde{\bW}
\right\|_{\gamma_2+1}^{\gamma_2+1}
\\
&\geq
M^{\frac{1-\gamma_2}{2}}
\left\|
\bar{\bm\Omega}^{\top}\tilde{\bW}
\right\|_2^{\gamma_2+1}
\\
&\geq
M^{\frac{1-\gamma_2}{2}}
\alpha_{\mathrm{IE}}^{\frac{\gamma_2+1}{2}}
\|\tilde{\bW}\|^{\gamma_2+1},
\end{aligned}
\]
which proves \eqref{eq:replay_lower_bound_gamma2}. Since
\(
M\geq1
\)
and
\(
\alpha_{\mathrm{IE}}>0
\),
both coefficients are strictly positive.
\end{proof}

\begin{lemma}[Perturbed signed-power inequality]
\label{lem:perturbed_power}
For every \(\gamma>0\) and all \(z,\epsilon\in\mathbb R\),
\begin{equation}
z\spow{-z+\epsilon}{\gamma}
\le
-2^{-(\gamma+1)}
|z|^{\gamma+1}
+
\left(
1+2\,3^\gamma
\right)
|\epsilon|^{\gamma+1}.
\label{eq:perturbed_power}
\end{equation}
If \(\epsilon=0\), the sharper identity
\begin{equation}
z\spow{-z}{\gamma}
=
-|z|^{\gamma+1}
\label{eq:exact_signed_power_identity}
\end{equation}
holds.
\end{lemma}

\begin{theorem}[Predefined-time integral critic]
\label{thm:predefined_time_critic}
Suppose that the normalized approximation residuals satisfy
\eqref{eq:normalized_approximation_residual_bound}, the recorded regressors
satisfy the finite-data informativity condition
\eqref{eq:critic_ie_condition} at some data-informativity instant
\(T_E\), and the informative replay stack is retained for all
\(t\geq T_E\). Let \(T_W>0\) be an arbitrarily assigned critic-learning
horizon, independent of the initial critic-weight error. Define
\[
V_W
:=
\frac{1}{2}
\tilde{\bW}^{\top}
\bm{\Gamma}^{-1}
\tilde{\bW},
\quad
\mu_W
:=
\frac{\gamma_1+1}{2},
\quad
\nu_W
:=
\frac{\gamma_2+1}{2},
\]
so that \(0<\mu_W<1<\nu_W\), and define
\[
a_W
:=
k_1
\left(
    2\lambda_{\min}(\bm{\Gamma})
    \alpha_{\mathrm{IE}}
\right)^{\mu_W},
\]
\[
b_W
:=
k_2
M^{\frac{1-\gamma_2}{2}}
\left(
    2\lambda_{\min}(\bm{\Gamma})
    \alpha_{\mathrm{IE}}
\right)^{\nu_W}.
\]
Furthermore, define the normalized predefined-time coefficient
\[
\gamma_W
:=
\frac{
    \Gamma_E\!\left(
        \dfrac{1-\mu_W}{\nu_W-\mu_W}
    \right)
    \Gamma_E\!\left(
        \dfrac{\nu_W-1}{\nu_W-\mu_W}
    \right)
}{
    a_W
    \Gamma_E(1)
    (\nu_W-\mu_W)
}
\left(
    \frac{a_W}{b_W}
\right)^{
    \frac{1-\mu_W}{\nu_W-\mu_W}
}.
\]

Then the critic update law
\eqref{eq:predefined_time_critic_update}
has the following properties.

\emph{(i) Exact approximation case:}
If
\(
\bar\varepsilon_B(t)=0
\)
and
\(
\bar\varepsilon_{B,j}=0
\),
\(j=1,\ldots,M\),
and the critic scaling parameter satisfies
\begin{equation}
    \alpha_W
    \geq
    \frac{\gamma_W}{T_W},
    \label{eq:alphaW_exact_predefined}
\end{equation}
then, independently of
\(
\tilde{\bW}(T_E)
\),
the critic-weight error satisfies
\begin{equation}
    \tilde{\bW}(t)
    =
    \bm0,
    \qquad
    t
    \geq
    T_E+T_W.
    \label{eq:critic_exact_predefined_convergence}
\end{equation}


\emph{(ii) Nonzero residual case:}
from the assumption in the Eq.~\eqref{eq:normalized_approximation_residual_bound}. define
\begin{equation}
    a_{W\Delta}
    :=
    2^{-(\gamma_1+1)}a_W,
    \qquad
    b_{W\Delta}
    :=
    2^{-(\gamma_2+1)}b_W.
    \label{eq:perturbed_critic_coefficients}
\end{equation}
Let
\(
\Delta_W>0
\)
denote the aggregate critic perturbation induced by the bounded
approximation residuals, and define the corresponding normalized
predefined-time coefficient as

\begin{equation}
\gamma_{W\Delta}
:=
\frac{
\Gamma_E\!\left(
\dfrac{1-\mu_W}{\nu_W-\mu_W}
\right)
\Gamma_E\!\left(
\dfrac{\nu_W-1}{\nu_W-\mu_W}
\right)
}{
a_{W\Delta}
\Gamma_E(1)
(\nu_W-\mu_W)
}
\left(
\frac{a_{W\Delta}}{b_{W\Delta}}
\right)^{
\dfrac{1-\mu_W}{\nu_W-\mu_W}
}.
\label{eq:gammaW_delta}
\end{equation}

Select any
\(
\chi_W\in(0,1)
\)
and choose the critic scaling parameter such that
\begin{equation}
    \alpha_W
    \geq
    \frac{\gamma_{W\Delta}}
    {\chi_W T_W}.
    \label{eq:alphaW_practical_predefined}
\end{equation}
Then \(V_W(t)\) enters, no later than
\(T_E+T_W\), the forward-invariant set
\begin{equation}
    \mathcal B_W
    :=
    \left\{
        V_W\in\mathbb R_{\geq0}:
        V_W
        \leq
        \bar V_W
    \right\},
    \label{eq:critic_predefined_invariant_set}
\end{equation}
where
\(
\bar V_W
\)
is the unique positive solution of
\begin{equation}
    \alpha_W(1-\chi_W)
    \left[
        a_{W\Delta}\bar V_W^{\mu_W}
        +
        b_{W\Delta}\bar V_W^{\nu_W}
    \right]
    =
    \Delta_W.
    \label{eq:critic_predefined_residual_radius}
\end{equation}
Consequently,
\begin{equation}
    \|\tilde{\bW}(t)\|
    \leq
    \sqrt{
        2\lambda_{\max}(\bm{\Gamma})
        \bar V_W
    },
    \qquad
    t
    \geq
    T_E+T_W.
    \label{eq:critic_predefined_weight_bound}
\end{equation}

\end{theorem}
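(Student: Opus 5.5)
I would work with the Lyapunov function \(V_W=\frac12\tilde{\bW}^\top\bm\Gamma^{-1}\tilde{\bW}\) on \(t\ge T_E\). The argument has three steps: compute \(\dot V_W\) along \eqref{eq:predefined_time_critic_update}, lower-bound the replay terms with Lemma~\ref{lem:replay_stack_coercivity}, and then apply the comparison result of Theorem~\ref{thm:normalized_predefined_time}.

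\emph{Derivative.} Since \(\bW^*\) is constant, \(\dot{\tilde{\bW}}=\dot{\hat{\bW}}\), and the factor \(\bm\Gamma^{-1}\bm\Gamma\) cancels. This gives
\[
\dot V_W=-\alpha_W\sum_{z}\tilde{\bW}^\top\bar{\boldsymbol\Omega}_z\bigl[k_1\spow{e_z}{\gamma_1}+k_2\spow{e_z}{\gamma_2}\bigr],
\]
where the sum runs over the current sample and the \(M\) stored ones. Write \(z_j:=\tilde{\bW}^\top\bar{\boldsymbol\Omega}_j\), so that \(e_j=z_j-\bar\varepsilon_{B,j}\).

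\emph{Case (i).} Here \(e_j=z_j\), and every term equals \(-\alpha_W k_i|z_j|^{\gamma_i+1}\) by \eqref{eq:exact_signed_power_identity} (up to the sign convention). Dropping the nonpositive current-data term and applying Lemma~\ref{lem:replay_stack_coercivity} yields
\[
\dot V_W\le-\alpha_W\bigl[k_1\alpha_{\mathrm{IE}}^{\mu_W}\|\tilde{\bW}\|^{2\mu_W}+k_2M^{\frac{1-\gamma_2}{2}}\alpha_{\mathrm{IE}}^{\nu_W}\|\tilde{\bW}\|^{2\nu_W}\bigr].
\]
To convert \(\|\tilde{\bW}\|^2\) into \(V_W\), use \(\|\tilde{\bW}\|^2\ge2\lambda_{\min}(\bm\Gamma)V_W\). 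This produces exactly \(\dot V_W\le-\alpha_W[a_WV_W^{\mu_W}+b_WV_W^{\nu_W}]\) with the stated \(a_W,b_W\).

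Next rewrite this bound as \(-\frac{\gamma_W}{T_W}(\alpha_WT_W/\gamma_W)[\cdots]\). With \(r=1\), \(p=\mu_W\) and \(q=\nu_W\), we have \(pr<1<qr\). The gamma coefficient of Theorem~\ref{thm:normalized_predefined_time} with \(\alpha=a_W\), \(\beta=b_W\) is exactly \(\gamma_W\). Condition \eqref{eq:alphaW_exact_predefined} therefore gives \(\dot V_W\le-\frac{\gamma_W}{T_W}[a_WV_W^{\mu_W}+b_WV_W^{\nu_W}]\). Hence \(V_W\) reaches zero within \(T_W\) after \(T_E\) and stays there, since the right-hand side is nonpositive. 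This proves \eqref{eq:critic_exact_predefined_convergence}.

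\emph{Case (ii).} Apply Lemma~\ref{lem:perturbed_power} to each product \(z_j\spow{-(z_j-\bar\varepsilon_{B,j})}{\gamma_i}\), and likewise to the current term. This gives \(-2^{-(\gamma_i+1)}|z_j|^{\gamma_i+1}+(1+2\cdot3^{\gamma_i})\bar\varepsilon_B^{\gamma_i+1}\). The current-data term must now be bounded from above by its residual part alone, since its negative part can be discarded. The residual pieces collect into
\[
\Delta_W:=\alpha_W(M+1)\sum_{i}k_i(1+2\cdot3^{\gamma_i})\bar\varepsilon_B^{\gamma_i+1}.
\]
The \(\alpha_W\) factor is absorbed into the definition of \(\Delta_W\), consistent with its role as the aggregate perturbation. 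The coercivity step then gives
\[
\dot V_W\le-\alpha_W[a_{W\Delta}V_W^{\mu_W}+b_{W\Delta}V_W^{\nu_W}]+\Delta_W.
\]
Split the decay as \(\chi_W\) plus \((1-\chi_W)\). Whenever \(V_W>\bar V_W\), monotonicity of the map in \eqref{eq:critic_predefined_residual_radius} makes the \((1-\chi_W)\) share dominate \(\Delta_W\). This leaves \(\dot V_W\le-\alpha_W\chi_W[a_{W\Delta}V_W^{\mu_W}+b_{W\Delta}V_W^{\nu_W}]\le-\frac{\gamma_{W\Delta}}{T_W}[\cdots]\) by \eqref{eq:alphaW_practical_predefined}.

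The comparison argument of Theorem~\ref{thm:normalized_predefined_time}, applied to the shifted quantity, is the settling-time integral \(\int_{\bar V_W}^{V_W(T_E)}\) being bounded by the full integral from \(0\) to \(\infty\). It shows that \(V_W\) hits \(\bar V_W\) by \(T_E+T_W\). Forward invariance holds because \(\dot V_W<0\) on the boundary. The weight bound \eqref{eq:critic_predefined_weight_bound} then follows from \(\|\tilde{\bW}\|^2\le2\lambda_{\max}(\bm\Gamma)V_W\). Existence and uniqueness of \(\bar V_W\) follow because the left side of \eqref{eq:critic_predefined_residual_radius} is strictly increasing from \(0\) to \(\infty\).

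\emph{Main obstacle.} I expect the hardest step to be the perturbed bookkeeping in case (ii). The sign convention \(e_j=z_j-\bar\varepsilon_{B,j}\) must be matched to the form \(z\spow{-z+\epsilon}{\gamma}\) of Lemma~\ref{lem:perturbed_power}. The current-data term, which has no coercivity, must be handled using only its nonnegative residual contribution. Finally, the practical settling time has to be justified via the comparison lemma up to the level set rather than to the origin. If the direct comparison is delicate, my first fallback is to integrate the scalar inequality explicitly, \(dt\le -dV/(\alpha_W\chi_W[\cdots])\), and bound the time by the full normalized integral, which equals \(\gamma_{W\Delta}/(\alpha_W\chi_W)\le T_W\).
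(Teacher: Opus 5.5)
Your proposal is correct and follows the paper's proof step for step: you differentiate \(V_W\), use Lemma~\ref{lem:perturbed_power} with the same aggregate perturbation \(\Delta_W=\alpha_W(M+1)\sum_i k_i(1+2\cdot 3^{\gamma_i})\bar\varepsilon_B^{\gamma_i+1}\), apply replay coercivity and the \(\chi_W\) split, and close with the boundary argument for forward invariance. Your entry-time bound, \(\int_{\bar V_W}^{V_W(T_E)}\) of the normalized integrand, which is at most \(\gamma_{W\Delta}/(\alpha_W\chi_W)\le T_W\), is a slightly more careful justification than the paper's direct appeal to Theorem~\ref{thm:normalized_predefined_time}, whose decay hypothesis holds here only outside \(\mathcal B_W\).
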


\begin{proof}
Since \(\bW^*\) is constant and
\(\tilde{\bW}=\hat{\bW}-\bW^*\), one has
\(\dot{\tilde{\bW}}=\dot{\hat{\bW}}\).
Consider
\(
V_W
=
\frac{1}{2}
\tilde{\bW}^{\top}
\bm{\Gamma}^{-1}
\tilde{\bW}
\).
Differentiating \(V_W\) and substituting
\eqref{eq:predefined_time_critic_update} yield
\begin{equation}
\begin{aligned}
\dot V_W
={}&
-\alpha_W
\tilde{\bW}^{\top}
\bar{\boldsymbol\Omega}(t)
\left[
k_1\spow{e(t)}{\gamma_1}
+
k_2\spow{e(t)}{\gamma_2}
\right]
\\
&-
\alpha_W
\sum_{j=1}^{M}
\tilde{\bW}^{\top}
\bar{\boldsymbol\Omega}_j
\left[
k_1\spow{e_j}{\gamma_1}
+
k_2\spow{e_j}{\gamma_2}
\right].
\end{aligned}
\label{eq:VW_dot_raw}
\end{equation}

For the exact-approximation case,
\(\bar\varepsilon_B(t)=0\) and
\(\bar\varepsilon_{B,j}=0\), and hence
\(e(t)=\tilde{\bW}^{\top}\bar{\boldsymbol\Omega}(t)\) and
\(e_j=\tilde{\bW}^{\top}\bar{\boldsymbol\Omega}_j\).
Using the identity
\(z\spow{z}{\gamma}=|z|^{\gamma+1}\)
in \eqref{eq:VW_dot_raw} gives
\begin{equation}
\begin{aligned}
\dot V_W
={}&
-\alpha_W k_1
\left|
\tilde{\bW}^{\top}
\bar{\boldsymbol\Omega}(t)
\right|^{\gamma_1+1}
-
\alpha_W k_2
\left|
\tilde{\bW}^{\top}
\bar{\boldsymbol\Omega}(t)
\right|^{\gamma_2+1}
\\
&-
\alpha_W k_1
\sum_{j=1}^{M}
\left|
\tilde{\bW}^{\top}
\bar{\boldsymbol\Omega}_j
\right|^{\gamma_1+1}
-
\alpha_W k_2
\sum_{j=1}^{M}
\left|
\tilde{\bW}^{\top}
\bar{\boldsymbol\Omega}_j
\right|^{\gamma_2+1}.
\end{aligned}
\label{eq:VW_exact_raw}
\end{equation}
Discarding the nonpositive current-data terms and applying
Lemma~\ref{lem:replay_stack_coercivity} yield
\begin{equation}
\begin{aligned}
\dot V_W
\leq{}&
-\alpha_W k_1
\alpha_{\mathrm{IE}}^{\frac{\gamma_1+1}{2}}
\|\tilde{\bW}\|^{\gamma_1+1}
\\
&-
\alpha_W k_2
M^{\frac{1-\gamma_2}{2}}
\alpha_{\mathrm{IE}}^{\frac{\gamma_2+1}{2}}
\|\tilde{\bW}\|^{\gamma_2+1}.
\end{aligned}
\label{eq:VW_exact_weight}
\end{equation}

From the definition of \(V_W\),
\(
V_W
\leq
\|\tilde{\bW}\|^2/
[2\lambda_{\min}(\bm{\Gamma})]
\),
and therefore
\(
\|\tilde{\bW}\|^2
\geq
2\lambda_{\min}(\bm{\Gamma})V_W
\).
Using this relation in
\eqref{eq:VW_exact_weight} gives
\begin{equation}
\dot V_W
\leq
-\alpha_W
\left[
a_WV_W^{\mu_W}
+
b_WV_W^{\nu_W}
\right],
\label{eq:VW_exact_comparison}
\end{equation}
\[
a_W
=
k_1
\left(
2\lambda_{\min}(\bm{\Gamma})
\alpha_{\mathrm{IE}}
\right)^{\mu_W},
\qquad
b_W
=
k_2
M^{\frac{1-\gamma_2}{2}}
\left(
2\lambda_{\min}(\bm{\Gamma})
\alpha_{\mathrm{IE}}
\right)^{\nu_W}.
\]

if
\(
\alpha_W\geq\gamma_W/T_W
\),
then
\[
\dot V_W
\leq
-
\frac{\gamma_W}{T_W}
\left[
a_WV_W^{\mu_W}
+
b_WV_W^{\nu_W}
\right].
\]
Theorem~\ref{thm:normalized_predefined_time} therefore gives
\(T_W^{\mathrm{act}}\leq T_W\), independently of
\(V_W(T_E)\), where \(T_W^{\mathrm{act}}\) is the actual convergence time. Consequently,
\[
\tilde{\bW}(t)=\bm0,
\qquad
t\geq T_E+T_W,
\]
which proves the exact-approximation claim.

For the nonzero-residual case,
\[
e(t)
=
\tilde{\bW}^{\top}
\bar{\boldsymbol\Omega}(t)
-
\bar\varepsilon_B(t),
\qquad
e_j
=
\tilde{\bW}^{\top}
\bar{\boldsymbol\Omega}_j
-
\bar\varepsilon_{B,j}.
\]
Applying Lemma~\ref{lem:perturbed_power} with
\(\gamma=\gamma_1\) and \(\gamma=\gamma_2\) to every term in
\eqref{eq:VW_dot_raw}, using
\eqref{eq:normalized_approximation_residual_bound}, and discarding the
remaining nonpositive current-data dissipation give
\begin{equation}
\begin{aligned}
\dot V_W
\leq{}&
-\alpha_W k_1
2^{-(\gamma_1+1)}
\sum_{j=1}^{M}
\left|
\tilde{\bW}^{\top}
\bar{\boldsymbol\Omega}_j
\right|^{\gamma_1+1}
\\
&-
\alpha_W k_2
2^{-(\gamma_2+1)}
\sum_{j=1}^{M}
\left|
\tilde{\bW}^{\top}
\bar{\boldsymbol\Omega}_j
\right|^{\gamma_2+1}
+
\Delta_W,
\end{aligned}
\label{eq:VW_perturbed_before_coercivity}
\end{equation}
where
\begin{equation}
\begin{aligned}
\Delta_W
:={}&
\alpha_W(M+1)
\Big[
k_1
\left(
1+2\,3^{\gamma_1}
\right)
\bar\varepsilon_B^{\gamma_1+1}
\\
&\qquad\qquad
+
k_2
\left(
1+2\,3^{\gamma_2}
\right)
\bar\varepsilon_B^{\gamma_2+1}
\Big].
\end{aligned}
\label{eq:critic_residual_perturbation}
\end{equation}

Lemma~\ref{lem:replay_stack_coercivity} and
\(
\|\tilde{\bW}\|^2
\geq
2\lambda_{\min}(\bm{\Gamma})V_W
\)
now imply
\begin{equation}
\dot V_W
\leq
-\alpha_W
\left[
a_{W\Delta}V_W^{\mu_W}
+
b_{W\Delta}V_W^{\nu_W}
\right]
+
\Delta_W.
\label{eq:VW_practical}
\end{equation}

By the definition of \(\bar V_W\) in
\eqref{eq:critic_predefined_residual_radius}, for every
\(V_W>\bar V_W\), the strict monotonicity of the two power terms gives
\[
\Delta_W
<
\alpha_W(1-\chi_W)
\left[
a_{W\Delta}V_W^{\mu_W}
+
b_{W\Delta}V_W^{\nu_W}
\right].
\]
Substituting this inequality into \eqref{eq:VW_practical} yields,
for every \(V_W\notin\mathcal B_W\),
\begin{equation}
\dot V_W
\leq
-\alpha_W\chi_W
\left[
a_{W\Delta}V_W^{\mu_W}
+
b_{W\Delta}V_W^{\nu_W}
\right].
\label{eq:VW_outside_BW}
\end{equation}

Using the gain condition
\eqref{eq:alphaW_practical_predefined}, namely
\[
\alpha_W
\geq
\frac{\gamma_{W\Delta}}
{\chi_WT_W},
\]
one further obtains
\begin{equation}
\dot V_W
\leq
-
\frac{\gamma_{W\Delta}}{T_W}
\left[
a_{W\Delta}V_W^{\mu_W}
+
b_{W\Delta}V_W^{\nu_W}
\right],
\qquad
V_W>\bar V_W.
\label{eq:VW_practical_comparison}
\end{equation}
Hence, by Theorem~\ref{thm:normalized_predefined_time}, the entrance
time into \(\mathcal B_W\) satisfies
\[
T_{W,\mathcal B_W}
\leq
T_W,
\]
where \(T_{W,\mathcal B_W}\) is the time of entry into the sphere \(\mathcal B_W\).
Since the informative replay stack is established at \(T_E\), it follows that
\begin{equation}
V_W(t)
\leq
\bar V_W,
\qquad
t\geq T_E+T_W.
\label{eq:VW_practical_entry}
\end{equation}

It remains to verify forward invariance. At the boundary
\(V_W=\bar V_W\), equations
\eqref{eq:VW_practical} and
\eqref{eq:critic_predefined_residual_radius} yield
\begin{equation}
\begin{aligned}
\dot V_W
&\leq
-\alpha_W
\left[
a_{W\Delta}\bar V_W^{\mu_W}
+
b_{W\Delta}\bar V_W^{\nu_W}
\right]
+
\Delta_W
\\
&=
-\alpha_W\chi_W
\left[
a_{W\Delta}\bar V_W^{\mu_W}
+
b_{W\Delta}\bar V_W^{\nu_W}
\right]
\leq 0.
\end{aligned}
\label{eq:VW_boundary_invariance}
\end{equation}
Therefore, \(\mathcal B_W\) is forward invariant.

Finally, from
\[
V_W
=
\frac{1}{2}
\tilde{\bW}^{\top}
\bm{\Gamma}^{-1}
\tilde{\bW},
\]
one has
\[
V_W
\geq
\frac{
\|\tilde{\bW}\|^2
}{
2\lambda_{\max}(\bm{\Gamma})
}.
\]
Combining this relation with \eqref{eq:VW_practical_entry} gives
\[
\|\tilde{\bW}(t)\|
\leq
\sqrt{
2\lambda_{\max}(\bm{\Gamma})
\bar V_W
},
\qquad
t\geq T_E+T_W.
\]
This proves the practical predefined-time critic-convergence claim.
\end{proof}


\section{Predefined-Time Closed-Loop Stability Analysis}
\label{sec:closed_loop}



Recall
\(
\mu
=
\mu_W
=
(pr+1)/2
=
(\gamma_1+1)/2
\)
and
\(
\nu
=
\nu_W
=
qr
=
(\gamma_2+1)/2
\),
with
\(
\frac12<\mu<1<\nu
\).

\begin{equation}
\begin{aligned}
\mathcal J(t)
&:=
\mathcal \int_{t-T}^{t}
V^*(\bx(\tau))\,d\tau+V_W(t),
\\
\mathcal I_x(t)
&:=
\int_{t-T}^{t}
V^*(\bx(\tau))\,d\tau.
\end{aligned}
\label{eq:closed_loop_J}
\end{equation}

Since the closed-loop analysis is carried out on the compact set
\(\Omega\), and \(V^*\), \(\bg\), \(\bphi\), and \(\varepsilon\)
are continuously differentiable on \(\Omega\), the corresponding
quantities are uniformly bounded. Define
\(
r_\Omega
:=
\sup_{\bm\xi\in\Omega}\|\bm\xi\|
\),
\(
\bar p_V
:=
\sup_{\bm\xi\in\Omega}
\|\nabla V^*(\bm\xi)\|
\),
\(
\bar g
:=
\sup_{\bm\xi\in\Omega}
\|\bg(\bm\xi)\|
\),
\(
\bar\phi_g
:=
\sup_{\bm\xi\in\Omega}
\|\nabla\bphi(\bm\xi)\|
\),
and
\(
\bar\varepsilon_g
:=
\sup_{\bm\xi\in\Omega}
\|\nabla\varepsilon(\bm\xi)\|
\).
Hence, for all \(\bx\in\Omega\), one has
\(
\|\nabla V^*(\bx)\|
\leq
\bar p_V
\),
\(
\|\bg(\bx)\|
\leq
\bar g
\),
\(
\|\nabla\bphi(\bx)\|
\leq
\bar\phi_g
\),
and
\(
\|\nabla\varepsilon(\bx)\|
\leq
\bar\varepsilon_g
\).

From
\(
V^*
=
\bW^{*\top}\bphi+\varepsilon
\)
and
\(
\hat V
=
\hat{\bW}^{\top}\bphi
\),
one has
\(
\nabla V^*-\nabla\hat V
=
-(\nabla\bphi)^\top\tilde{\bW}
+
\nabla\varepsilon
\),
and therefore
\(
\|\nabla V^*-\nabla\hat V\|
\leq
\bar\phi_g\|\tilde{\bW}\|
+
\bar\varepsilon_g
\).
Moreover,
\(
\hat{\bu}-\bu^*
=
-\frac12
\bR^{-1}\bg^\top
[
\nabla\hat V-\nabla V^*
]
\),
which gives
\(
\|\hat{\bu}-\bu^*\|
\leq
\frac12
\|\bR^{-1}\|
\bar g
[
\bar\phi_g\|\tilde{\bW}\|
+
\bar\varepsilon_g
]
\).
Consequently,
\(
|
\nabla V^{*\top}
\bg
(\hat{\bu}-\bu^*)
|
\leq
c_W\|\tilde{\bW}\|+c_0
\),
where
\begin{equation}
c_W
:=
\frac12
\bar p_V
\|\bR^{-1}\|
\bar g^2
\bar\phi_g,
\qquad
c_0
:=
\frac12
\bar p_V
\|\bR^{-1}\|
\bar g^2
\bar\varepsilon_g.
\label{eq:closed_loop_mismatch_constants}
\end{equation}

The ideal inverse-optimal policy satisfies
\(
\nabla V^{*\top}
[
\hat{\bfv}
+
\bg\bu^*
]
=
-
\frac{\gamma_{p,q,r}}{T_s}
\left[
\alpha(V^*)^p+\beta(V^*)^q
\right]^r
\)
along the identified dynamics.
Since
\(
(a+b)^r
\geq
c_r(a^r+b^r)
\),
with
\(
c_r:=\min\{1,2^{r-1}\}
\),
one obtains
\(
[
\alpha(V^*)^p+\beta(V^*)^q
]^r
\geq
c_r
[
\alpha^r(V^*)^{pr}
+
\beta^r(V^*)^{qr}
]
\).
Define
\begin{equation}
\kappa_1
:=
\frac{\gamma_{p,q,r}}{T_s}
c_r\alpha^r,
\qquad
\kappa_2
:=
\frac{\gamma_{p,q,r}}{T_s}
c_r\beta^r.
\label{eq:kappa_state_predefined}
\end{equation}

For the actual closed-loop dynamics
\(
\dot{\bx}
=
\bfv(\bx)
+
\bg(\bx)\hat{\bu}
\),
write
\(
\bfv(\bx)
=
\hat{\bfv}(\bx,t)
+
\tilde{\bfv}(\bx,t)
\),
where
\(
\tilde{\bfv}(\bx,t)
:=
\bfv(\bx)-\hat{\bfv}(\bx,t)
\).
Then
\(
\dot V^*
=
\nabla V^{*\top}
[
\hat{\bfv}
+
\bg\bu^*
]
+
\nabla V^{*\top}\tilde{\bfv}
+
\nabla V^{*\top}\bg
(\hat{\bu}-\bu^*)
\).

By Lemma~\ref{lem:identification_error},
\(
\|\tilde{\bfv}(\bx,t)\|
\leq
\bar{\Delta}_f
\),
and hence
\(
\nabla V^{*\top}\tilde{\bfv}
\leq
\|\nabla V^*\|
\bar{\Delta}_f
\leq
\bar p_V\bar{\Delta}_f
\).
Define
\(
c_f:=\bar p_V\bar{\Delta}_f
\).
Moreover, from the policy-error bound,
\(
\nabla V^{*\top}\bg
(\hat{\bu}-\bu^*)
\leq
c_W\|\tilde{\bW}\|
+
c_0
\).
Therefore,
\begin{equation}
\dot V^*
\leq
-\kappa_1(V^*)^{pr}
-\kappa_2(V^*)^{qr}
+
c_W\|\tilde{\bW}\|
+
c_0
+
c_f.
\label{eq:Vstar_learned}
\end{equation}

\begin{lemma}[Optimized Young inequality~\citep{Hardy1988Inequalities}]
\label{lem:optimized_young}
For
\(
s\geq0
\),
\(
c\geq0
\),
\(
b>0
\),
and
\(
p>1
\),
\begin{equation}
cs
\leq
bs^p
+
(p-1)
p^{-\frac{p}{p-1}}
b^{-\frac{1}{p-1}}
c^{\frac{p}{p-1}}.
\label{eq:optimized_Young}
\end{equation}
\end{lemma}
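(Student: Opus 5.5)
The plan is to minimize the right-hand side over \(s\). Equivalently, show that the gap
\[
h(s):=bs^p-cs
\]
is bounded below by
\[
-(p-1)p^{-\frac{p}{p-1}}b^{-\frac{1}{p-1}}c^{\frac{p}{p-1}} .
\]
If \(c=0\), the claim is immediate because \(bs^p\geq0\). So assume \(c>0\).

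On \([0,\infty)\), the function \(h\) is continuous and strictly convex, since \(p>1\) and \(b>0\). It also tends to \(+\infty\) as \(s\to\infty\). Hence \(h\) attains its global minimum at its unique stationary point. Setting \(h'(s)=pbs^{p-1}-c=0\) gives
\[
s_\ast=\Big(\tfrac{c}{pb}\Big)^{\frac{1}{p-1}} .
\]

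Substituting \(s_\ast\) back, use \(bs_\ast^p=s_\ast\cdot bs_\ast^{p-1}=s_\ast c/p\). This yields
\[
h(s_\ast)=-\Big(1-\tfrac1p\Big)c\,s_\ast
=-\tfrac{p-1}{p}\,c\,(pb)^{-\frac{1}{p-1}}c^{\frac{1}{p-1}} .
\]
Collecting the powers of \(p\) gives \(p^{-1-\frac{1}{p-1}}=p^{-\frac{p}{p-1}}\), and collecting the powers of \(c\) gives \(c^{1+\frac1{p-1}}=c^{\frac{p}{p-1}}\). Therefore \(h(s)\geq h(s_\ast)\) is exactly the stated bound \eqref{eq:optimized_Young}.

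As a cross-check, the same bound follows from the standard Young inequality \(xy\leq x^p/p+y^{p'}/p'\) with \(p'=p/(p-1)\). Take \(x=(pb)^{1/p}s\) and \(y=c(pb)^{-1/p}\). The constant should reproduce \((p-1)p^{-p/(p-1)}b^{-1/(p-1)}c^{p/(p-1)}\).

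The only mildly delicate step is this exponent bookkeeping in the constant. There is no real obstacle beyond it.
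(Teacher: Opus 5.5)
Your proof is correct and complete. The function $h(s)=bs^p-cs$ is convex on $[0,\infty)$ with $h'(s_\ast)=0$ at $s_\ast=(c/(pb))^{1/(p-1)}>0$, so $h(s)\ge h(s_\ast)$. Your bookkeeping giving $h(s_\ast)=-(p-1)p^{-p/(p-1)}b^{-1/(p-1)}c^{p/(p-1)}$ is right, and the case $c=0$ is handled.

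The paper gives no proof of its own; it cites Hardy--Littlewood--P\'olya. In effect it treats the lemma as a rescaling of the classical Young inequality $xy\le x^p/p+y^{p'}/p'$, which is exactly your cross-check. That route closes in one line once written out: with $x=(pb)^{1/p}s$ and $y=c(pb)^{-1/p}$ you get $xy=cs$, $x^p/p=bs^p$, and $y^{p'}/p'=\frac{p-1}{p}(pb)^{-1/(p-1)}c^{p/(p-1)}$. The last expression is precisely the stated constant, so the ``should reproduce'' in your proposal can be replaced by this verification.

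Your main calculus route is slightly longer, but it gives something the citation leaves implicit. Equality holds at $s=s_\ast$, so the constant equals $\sup_{s\ge 0}(cs-bs^p)$ and is the smallest admissible one for the given $b$. That sharpness is what justifies the label ``optimized.'' The Young-substitution argument only establishes the upper bound; sharpness there requires separately checking the equality case $x^p=y^{p'}$.
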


For arbitrary
\(
b_1,b_2>0
\),
write
\(
c_W\|\tilde{\bW}\|
=
\frac{c_W}{2}\|\tilde{\bW}\|
+
\frac{c_W}{2}\|\tilde{\bW}\|
\).
Applying Lemma~\ref{lem:optimized_young} to the first term with
\(
s=\|\tilde{\bW}\|
\),
\(
c=c_W/2
\),
\(
b=b_1
\),
and
\(
p=\gamma_1+1
\),
and to the second term with
\(
s=\|\tilde{\bW}\|
\),
\(
c=c_W/2
\),
\(
b=b_2
\),
and
\(
p=\gamma_2+1
\),
yields
\(
c_W\|\tilde{\bW}\|+c_0+c_f
\leq
b_1\|\tilde{\bW}\|^{\gamma_1+1}
+
b_2\|\tilde{\bW}\|^{\gamma_2+1}
+
\Delta_x
\),
where
\begin{equation}
\begin{aligned}
\Delta_x
:={}&
c_0+c_f
+
\gamma_1
(\gamma_1+1)^{
-\frac{\gamma_1+1}{\gamma_1}}
b_1^{-\frac1{\gamma_1}}
\left(
\frac{c_W}{2}
\right)^{
\frac{\gamma_1+1}{\gamma_1}}
\\
&+
\gamma_2
(\gamma_2+1)^{
-\frac{\gamma_2+1}{\gamma_2}}
b_2^{-\frac1{\gamma_2}}
\left(
\frac{c_W}{2}
\right)^{
\frac{\gamma_2+1}{\gamma_2}} .
\end{aligned}
\label{eq:Delta_x}
\end{equation}

Thus,
\begin{equation}
\dot V^*
\leq
-\kappa_1(V^*)^{pr}
-\kappa_2(V^*)^{qr}
+
b_1\|\tilde{\bW}\|^{\gamma_1+1}
+
b_2\|\tilde{\bW}\|^{\gamma_2+1}
+
\Delta_x.
\label{eq:Vstar_learned_bound}
\end{equation}

Since \(V^*\) is continuous on the compact set
\(\Omega\), it is uniformly bounded on \(\Omega\). Define
\(
\bar V^*
:=
\sup_{\bm\xi\in\Omega}
V^*(\bm\xi)
\)
and
\(
\bar I_x
:=
T\bar V^*
\).
Hence,
\(
0\leq\mathcal I_x(t)\leq\bar I_x
\).
Moreover,
\(
\dot V^*
=
\nabla V^{*\top}\dot{\bx}
\)
is continuous on compact
\(\Omega\)
under the preceding regularity and boundedness conditions;
therefore
\(
|\dot V^*(\bx(t))|
\leq
L_{\dot V}
\)
for some finite
\(
L_{\dot V}>0
\).

\begin{lemma}[Derivative-limited low-order window bound]
\label{lem:low_window}
Let
\(
y:[t-T,t]\rightarrow\mathbb R_{\geq0}
\)
be absolutely continuous,
\(
|\dot y(\tau)|\leq L_{\dot y}
\),
and
\(
\int_{t-T}^{t}y(\tau)d\tau\leq\bar I
\).
For
\(
0<\gamma_1<1
\)
and
\(
\mu=(\gamma_1+1)/2
\),
\begin{equation}
\int_{t-T}^{t}
y^{\gamma_1}(\tau)d\tau
\geq
\rho_{\gamma_1,T}
\left(
\int_{t-T}^{t}
y(\tau)d\tau
\right)^\mu,
\label{eq:low_integral_bound}
\end{equation}
where
\begin{equation}
\rho_{\gamma_1,T}
:=
\min
\left\{
2^{\gamma_1-1}
L_{\dot y}^{-\frac{1-\gamma_1}{2}},
\;
4^{\gamma_1-1}
T^{1-\gamma_1}
\bar I^{\frac{\gamma_1-1}{2}}
\right\}.
\label{eq:rho_low}
\end{equation}
\end{lemma}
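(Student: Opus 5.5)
The plan is to compare $\int y^{\gamma_1}$ with $\int y$ through the peak value of $y$ on the window. The derivative bound then converts that peak into a lower bound on the area under $y$. Write $I:=\int_{t-T}^{t}y(\tau)d\tau\in[0,\bar I]$ and $Y:=\max_{\tau\in[t-T,t]}y(\tau)$. The maximum is attained by continuity, at some $\tau^*$.

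If $Y=0$, then $y\equiv0$, both sides of \eqref{eq:low_integral_bound} vanish, and there is nothing to prove. Otherwise, since $0<\gamma_1<1$, we have pointwise $y^{\gamma_1}=y\,y^{\gamma_1-1}\geq y\,Y^{\gamma_1-1}$ wherever $y>0$. Integrating gives the basic estimate
\[
\int_{t-T}^{t}y^{\gamma_1}(\tau)d\tau\geq I\,Y^{-(1-\gamma_1)}.
\]
It therefore suffices to bound $Y$ from above by a power of $I$. I would do this in two cases, which produce the two entries of the minimum in \eqref{eq:rho_low}. In both cases I use $|\dot y|\leq L_{\dot y}$, which gives $y(\tau)\geq Y-L_{\dot y}|\tau-\tau^*|$. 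I also use that at least one side of $\tau^*$ inside $[t-T,t]$ has length at least $T/2$.

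\emph{Case 1: $Y\leq L_{\dot y}T/2$.} On that long side, $y$ stays above the linear ramp $Y-L_{\dot y}|\tau-\tau^*|$ until the ramp hits zero. This happens within distance $Y/L_{\dot y}\leq T/2$, so the ramp fits inside the window. Hence $I\geq Y^2/(2L_{\dot y})$, i.e.\ $Y\leq(2L_{\dot y}I)^{1/2}$. Substituting into the basic estimate gives
\[
\int y^{\gamma_1}\geq 2^{-\frac{1-\gamma_1}{2}}L_{\dot y}^{-\frac{1-\gamma_1}{2}}I^{\frac{1+\gamma_1}{2}}.
\]
Since $2^{-\frac{1-\gamma_1}{2}}\geq2^{\gamma_1-1}$, this yields the first constant, with exponent $\mu$ exactly.

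\emph{Case 2: $Y>L_{\dot y}T/2$.} On the subinterval of length $T/2$ on the long side of $\tau^*$, we have $y\geq Y-L_{\dot y}T/4\geq Y/2$. Hence $I\geq TY/4$, i.e.\ $Y\leq4I/T$. The basic estimate then gives
\[
\int y^{\gamma_1}\geq4^{\gamma_1-1}T^{1-\gamma_1}I^{\gamma_1}.
\]
The exponent here is $\gamma_1$ rather than $\mu$. To restore $\mu$, I would write $I^{\gamma_1}=I^{\mu}I^{(\gamma_1-1)/2}$. Because $(\gamma_1-1)/2<0$ and $I\leq\bar I$, we have $I^{(\gamma_1-1)/2}\geq\bar I^{(\gamma_1-1)/2}$, which produces the second constant.

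Taking the minimum of the two constants covers both cases and proves \eqref{eq:low_integral_bound}. The main point requiring care is the geometry in Case 1: the descending ramp must actually lie inside the window, which is why I pick the longer side of $\tau^*$ and set the case threshold at $L_{\dot y}T/2$. A second point is that Case 2 yields the wrong power $I^{\gamma_1}$, and it is exactly the a priori bound $I\leq\bar I$ that converts this into $I^\mu$. Everything else is direct substitution.
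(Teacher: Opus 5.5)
Your argument is correct in substance and uses the same ingredients as the paper's proof. These are the peak comparison $\int_{t-T}^{t}y^{\gamma_1}\,d\tau\geq Y^{\gamma_1-1}I$, the Lipschitz ramp $y(\tau)\geq Y-L_{\dot y}|\tau-\tau^*|$ on the longer side of the maximizer, and the a priori bound $I\leq\bar I$ to trade $I^{\gamma_1}$ for $I^{\mu}$.

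The only structural difference is the case threshold. The paper splits on $M_y\leq 4I_y/T$ versus $M_y>4I_y/T$, so its first case is immediate. In its second case it must first show $M_y\leq L_{\dot y}T$ by contradiction, by integrating the ramp over the whole window. It then uses a rectangle of height $M_y/2$ and width $M_y/(2L_{\dot y})$, which gives $M_y\leq 2\sqrt{L_{\dot y}I_y}$ and exactly the constant $2^{\gamma_1-1}$. Your threshold, $Y$ versus $L_{\dot y}T/2$, does not involve $I$, so no contradiction step is needed. Integrating the full triangle gives $Y\leq\sqrt{2L_{\dot y}I}$ and hence the slightly sharper constant $2^{-(1-\gamma_1)/2}\geq 2^{\gamma_1-1}$.

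One line in your Case 2 is wrong as written. On the subinterval of length $T/2$ adjacent to $\tau^*$, the derivative bound gives only the pointwise estimate $y\geq Y-L_{\dot y}T/2$, not $y\geq Y-L_{\dot y}T/4$. The quantity $Y-L_{\dot y}T/4$ is the \emph{average} of the ramp over that subinterval. The correct step is $I\geq\int_0^{T/2}(Y-L_{\dot y}s)\,ds=\frac{T}{2}\bigl(Y-\frac{L_{\dot y}T}{4}\bigr)\geq\frac{TY}{4}$, using $L_{\dot y}T/4<Y/2$.

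You cannot repair this by shrinking to a subinterval of length $T/4$, where $y\geq Y/2$ does hold pointwise. That only gives $Y\leq 8I/T$ and the weaker constant $8^{\gamma_1-1}$, which does not match the stated $\rho_{\gamma_1,T}$. With the integrated version, the proof is complete.
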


\begin{proof}
    See~\ref{app:proof_derivative_limited_window}.
\end{proof}

For
\(
y(\tau)=V^*(\bx(\tau))
\),
\begin{equation}
\rho_{\gamma_1,T}
=
\min
\left\{
2^{\gamma_1-1}
L_{\dot V}^{-\frac{1-\gamma_1}{2}},
\;
4^{\gamma_1-1}
T^{1-\gamma_1}
\bar I_x^{\frac{\gamma_1-1}{2}}
\right\}.
\label{eq:rho_state_window}
\end{equation}

The certified deadline is
\(
T_p=T_E+T_W+T+T_X
\):
\(
T
\)
flushes pre-learning information from the moving window, and
\(
T_X
\)
bounds the remaining composite convergence time.

Let
\(
T_p>T_E+T
\)
be assigned, choose
\(
\chi\in(0,1)
\),
and set
\begin{equation}
T_W
:=
\chi(T_p-T_E-T),
\qquad
T_X
:=
(1-\chi)(T_p-T_E-T),
\label{eq:deadline_allocation}
\end{equation}
so that
\(
T_p=T_E+T_W+T+T_X
\).

From Theorem~\ref{thm:predefined_time_critic},
\(
\|\tilde{\bW}(t)\|
\leq
\bar W_c
\)
for
\(
t\geq T_E+T_W
\),
where
\begin{equation}
\bar W_c
:=
\sqrt{
2\lambda_{\max}(\bm\Gamma)\bar V_W
}.
\label{eq:Wbar_closed_loop}
\end{equation}
We have
\(
\int_{t-T}^{t}
\|\tilde{\bW}(\tau)\|^{\gamma_1+1}d\tau
\leq
T\bar W_c^{\gamma_1+1}
\)
and
\(
\int_{t-T}^{t}
\|\tilde{\bW}(\tau)\|^{\gamma_2+1}d\tau
\leq
T\bar W_c^{\gamma_2+1}
\).


Moreover,
\(
\dot{\mathcal I}_x
=
V^*(\bx(t))
-
V^*(\bx(t-T))
=
\int_{t-T}^{t}
\dot V^*(\bx(\tau))d\tau
\).
Integrating
\eqref{eq:Vstar_learned_bound}
over
\(
[t-T,t]
\)
gives
\[
\begin{aligned}
\dot{\mathcal I}_x
\leq{}&
-\kappa_1
\int_{t-T}^{t}
(V^*)^{pr} d\tau
-
\kappa_2
\int_{t-T}^{t}
(V^*)^{qr} d\tau
\\
&+
b_1
\int_{t-T}^{t}
\|\tilde{\bW}\|^{\gamma_1+1}d\tau
+
b_2
\int_{t-T}^{t}
\|\tilde{\bW}\|^{\gamma_2+1}d\tau
+
T\Delta_x .
\end{aligned}
\]

Lemma~\ref{lem:low_window} gives
\(
\int_{t-T}^{t}
(V^*)^{\gamma_1}d\tau
\geq
\rho_{\gamma_1,T}
\mathcal I_x^\mu
\),
whereas
\(
\nu>1
\)
and Jensen's inequality yield
\(
\int_{t-T}^{t}
(V^*)^\nu d\tau
\geq
T^{1-\nu}
\mathcal I_x^\nu
\).
Therefore, defining
\(
a_x
:=
\kappa_1\rho_{\gamma_1,T}
\)
and
\(
b_x
:=
\kappa_2T^{1-\nu}
\),
and
\(
\Delta_I
:=
T
\left[
b_1\bar W_c^{\gamma_1+1}
+
b_2\bar W_c^{\gamma_2+1}
+
\Delta_x
\right]
\),
one obtains
\(
\dot{\mathcal I}_x
\leq
-a_x\mathcal I_x^\mu
-b_x\mathcal I_x^\nu
+\Delta_I
\)
for
\(
t\geq t_0
\).

From Theorem~\ref{thm:predefined_time_critic},
\(
\dot V_W
\leq
-h_1V_W^\mu
-h_2V_W^\nu
+\Delta_W
\),
where
\(
h_1:=\alpha_Wa_{W\Delta}
\)
and
\(
h_2:=\alpha_Wb_{W\Delta}
\).
Hence,
\(
\dot{\mathcal J}
\leq
-a_x\mathcal I_x^\mu
-b_x\mathcal I_x^\nu
-h_1V_W^\mu
-h_2V_W^\nu
+
\Delta_I+\Delta_W
\).

Using
\(
a^\mu+b^\mu\geq(a+b)^\mu
\)
for
\(
0<\mu<1
\)
and
\(
a^\nu+b^\nu
\geq
2^{1-\nu}(a+b)^\nu
\)
for
\(
\nu>1
\),
define
\(
C_1
:=
\min\{a_x,h_1\}
\),
\(
C_2
:=
2^{1-\nu}
\min\{b_x,h_2\}
\),
and
\(
\Delta_J
:=
\Delta_I+\Delta_W
\).
Then
\begin{equation}
\dot{\mathcal J}
\leq
-C_1\mathcal J^\mu
-C_2\mathcal J^\nu
+\Delta_J,
\qquad
t\geq t_0.
\label{eq:J_comparison}
\end{equation}

Define
\(
\mathfrak C_{\mu,\nu}
:=
\frac{
\Gamma_E\!\left(
\dfrac{\nu-1}{\nu-\mu}
\right)
\Gamma_E\!\left(
\dfrac{1-\mu}{\nu-\mu}
\right)
}{
\nu-\mu
}
\)
and
\(
\Lambda_X
:=
\frac{
\mathfrak C_{\mu,\nu}
}{
(1-\theta_J)T_X
}
\),
with
\(
\theta_J\in(0,1)
\).



Since
\(
a_x
=
\frac{\gamma_{p,q,r}}{T_s}
c_r\alpha^r
\rho_{\gamma_1,T}
\)
and
\(
b_x
=
\frac{\gamma_{p,q,r}}{T_s}
c_r\beta^r
T^{1-\nu}
\),
the conditions
\(
a_x\geq\Lambda_X
\)
and
\(
b_x\geq2^{\nu-1}\Lambda_X
\)
are respectively equivalent to
\(
\frac{\gamma_{p,q,r}}{T_s}
c_r\alpha^r
\rho_{\gamma_1,T}
\geq
\Lambda_X
\)
and
\(
\frac{\gamma_{p,q,r}}{T_s}
c_r\beta^r
T^{1-\nu}
\geq
2^{\nu-1}\Lambda_X
\).
Since all coefficients are positive, these inequalities give
\(
T_s
\leq
\gamma_{p,q,r}
c_r\alpha^r
\rho_{\gamma_1,T}/\Lambda_X
\)
and
\(
T_s
\leq
\gamma_{p,q,r}
c_r\beta^r
T^{1-\nu}/
(2^{\nu-1}\Lambda_X)
\),
respectively.
Therefore, both conditions hold whenever
\begin{equation}
T_s
\leq
\min
\left\{
\frac{
\gamma_{p,q,r}
c_r\alpha^r
\rho_{\gamma_1,T}
}{
\Lambda_X
},
\;
\frac{
\gamma_{p,q,r}
c_r\beta^r
T^{1-\nu}
}{
2^{\nu-1}\Lambda_X
}
\right\}.
\label{eq:Ts_closed_loop_condition}
\end{equation}

Likewise,
\(
h_1=\alpha_Wa_{W\Delta}
\)
and
\(
h_2=\alpha_Wb_{W\Delta}
\).
To guarantee
\(
h_1\geq\Lambda_X
\)
and
\(
h_2\geq2^{\nu-1}\Lambda_X
\),
it is sufficient that
\(
\alpha_Wa_{W\Delta}
\geq
\Lambda_X
\)
and
\(
\alpha_Wb_{W\Delta}
\geq
2^{\nu-1}\Lambda_X
\),
which give
\(
\alpha_W
\geq
\Lambda_X/a_{W\Delta}
\)
and
\(
\alpha_W
\geq
2^{\nu-1}\Lambda_X/b_{W\Delta}
\),
respectively.
In addition, the critic predefined-time condition requires
\(
\alpha_W
\geq
\gamma_{W\Delta}/(\chi_WT_W)
\).
Hence, all critic-side requirements are simultaneously satisfied by selecting
\begin{equation}
\alpha_W
\geq
\max
\left\{
\frac{\gamma_{W\Delta}}{\chi_WT_W},
\;
\frac{\Lambda_X}{a_{W\Delta}},
\;
\frac{2^{\nu-1}\Lambda_X}{b_{W\Delta}}
\right\}.
\label{eq:alphaW_closed_loop_condition}
\end{equation}

\begin{theorem}[Practical predefined-time closed-loop stability]
\label{thm:closed_loop}
Suppose that the preceding compact-set bounds hold, the hypotheses of
Theorem~\ref{thm:predefined_time_critic} are satisfied, and
\(
\mu
=
\mu_W
=
(pr+1)/2
=
(\gamma_1+1)/2
\),
\(
\nu
=
\nu_W
=
qr
=
(\gamma_2+1)/2
\),
with
\(
\gamma_1=pr
\)
and
\(
\gamma_2=2qr-1
\).
Let
\(
T_p>T_E+T
\),
choose
\(
T_W
\)
and
\(
T_X
\)
according to
\eqref{eq:deadline_allocation}, and let
\(
T_s
\)
and
\(
\alpha_W
\)
satisfy
\eqref{eq:Ts_closed_loop_condition}
and
\eqref{eq:alphaW_closed_loop_condition}.
Then
\(
\mathcal J(t)
\)
enters no later than
\(
T_p
\)
the forward-invariant set
\begin{equation}
\mathcal B_J
:=
\left\{
\mathcal J\geq0:
\mathcal J\leq\bar{\mathcal J}
\right\},
\qquad
C_1\bar{\mathcal J}^{\mu}
+
C_2\bar{\mathcal J}^{\nu}
=
\frac{\Delta_J}{\theta_J}.
\label{eq:B_J}
\end{equation}
Moreover,
\begin{equation}
\|\bx(t)\|
\leq
r_x
:=
\sqrt{
\frac1{\underline v}
\max
\left\{
2L_{\dot V}T,
\frac{2\bar{\mathcal J}}{T}
\right\}
},
\qquad
t\geq T_p,
\label{eq:pointwise_x}
\end{equation}
and
\begin{equation}
\|\tilde{\bW}(t)\|
\leq
r_W
:=
\min
\left\{
\bar W_c,
\sqrt{
2\lambda_{\max}(\bm\Gamma)
\bar{\mathcal J}
}
\right\},
\qquad
t\geq T_p.
\label{eq:pointwise_W}
\end{equation}
Hence
\(
\operatorname{col}\{\bx,\tilde{\bW}\}
\)
is practically predefined-time stable.
If
\(
\bar\varepsilon_B(t)\equiv0
\),
\(
\bar\varepsilon_{B,j}=0
\)
for
\(
j=1,\ldots,M
\),
and
\(
\bar\varepsilon_g=0
\),
then
\(
\bx(t)=\bm0
\)
and
\(
\tilde{\bW}(t)=\bm0
\)
for all
\(
t\geq T_p
\).
\end{theorem}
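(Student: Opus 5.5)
The plan is to start from the composite comparison inequality \eqref{eq:J_comparison}, $\dot{\mathcal J}\le -C_1\mathcal J^\mu-C_2\mathcal J^\nu+\Delta_J$, which holds for $t\ge t_0:=T_E+T_W+T$. On $[T_E,T_E+T_W]$ the replay stack is informative, so Theorem~\ref{thm:predefined_time_critic} places $V_W$ in $\mathcal B_W$ and gives $\|\tilde{\bW}\|\le\bar W_c$. The extra length $T$ then flushes the moving window, so that $\mathcal I_x$ only integrates $V^*$ over times at which \eqref{eq:Vstar_learned_bound} holds with the bound $\bar W_c$. From that instant the argument becomes a practical predefined-time comparison on $\mathcal J$ with horizon $T_X$.

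The first step is to exploit the gain choices. Condition \eqref{eq:Ts_closed_loop_condition} gives $a_x\ge\Lambda_X$ and $b_x\ge 2^{\nu-1}\Lambda_X$. Condition \eqref{eq:alphaW_closed_loop_condition} gives $h_1\ge\Lambda_X$ and $h_2\ge2^{\nu-1}\Lambda_X$. Together these yield $C_1\ge\Lambda_X$ and $C_2\ge\Lambda_X$.

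Next, using the definition of $\bar{\mathcal J}$ in \eqref{eq:B_J} together with strict monotonicity, every $\mathcal J>\bar{\mathcal J}$ satisfies $\Delta_J<\theta_J(C_1\mathcal J^\mu+C_2\mathcal J^\nu)$. Hence outside $\mathcal B_J$ we have $\dot{\mathcal J}\le-(1-\theta_J)\Lambda_X(\mathcal J^\mu+\mathcal J^\nu)=-\frac{\mathfrak C_{\mu,\nu}}{T_X}(\mathcal J^\mu+\mathcal J^\nu)$. This is exactly the normalized form of Theorem~\ref{thm:normalized_predefined_time} with $\alpha=\beta=r=1$, because then $\gamma=\mathfrak C_{\mu,\nu}$. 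The reaching time into $\mathcal B_J$ is therefore at most $T_X$, so entry occurs by $t_0+T_X=T_p$. Forward invariance follows as in \eqref{eq:VW_boundary_invariance}: on the boundary, $\dot{\mathcal J}\le-(1-\theta_J)(C_1\bar{\mathcal J}^\mu+C_2\bar{\mathcal J}^\nu)\le0$.

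The pointwise bounds come from how $\mathcal J$ is split. Since $V_W\le\mathcal J\le\bar{\mathcal J}$ and $V_W\ge\|\tilde{\bW}\|^2/(2\lambda_{\max}(\bm\Gamma))$, we get $\|\tilde{\bW}\|\le\sqrt{2\lambda_{\max}\bar{\mathcal J}}$. Taking the minimum with the critic bound $\bar W_c$ gives \eqref{eq:pointwise_W}.

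For the state, $\mathcal I_x\le\bar{\mathcal J}$ must be converted into a pointwise bound on $V^*(\bx(t))$. This is the step I expect to be most delicate.

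Write $v=V^*(\bx(t))$. Since $|\dot V^*|\le L_{\dot V}$, we have $V^*(\bx(\tau))\ge v-L_{\dot V}(t-\tau)$ on the window. I then split into two cases:
\begin{itemize}
\item If $v\le L_{\dot V}T$, the bound is immediate.
\item Otherwise $v>L_{\dot V}T$. Integrating over the window gives $\mathcal I_x\ge vT-L_{\dot V}T^2/2\ge vT/2$, hence $v\le2\bar{\mathcal J}/T$.
\end{itemize}
Combining the cases, $v\le\max\{2L_{\dot V}T,2\bar{\mathcal J}/T\}$. A quadratic lower bound $V^*(\bx)\ge\underline v\|\bx\|^2$ on $\Omega$ (the constant $\underline v$ presumably posited among the compact-set bounds) then gives \eqref{eq:pointwise_x}.

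For the exact case, $\bar\varepsilon_B=0$ gives $\Delta_W=0$, and $\bar\varepsilon_g=0$ gives $c_0=0$. I would argue that the critic error then vanishes by $T_E+T_W$ by part (i) of Theorem~\ref{thm:predefined_time_critic}, so $\bar W_c=0$. The $c_W$-induced Young terms in $\Delta_x$ can then be dropped, and with $c_f$ treated as absent, $\Delta_J=0$ and $\bar{\mathcal J}=0$. Finite-time convergence of $\mathcal J$ then gives $\mathcal I_x\equiv0$ on a window, which forces $\bx\equiv\bm0$ there. By continuity $\bx(t)=\bm0$, and forward invariance keeps it there.

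A weak point is that the exact conclusion needs $c_f=0$, i.e.\ an exact identifier, and the pointwise radius $r_x$ retains the $2L_{\dot V}T$ term. To close this I would note that once $\mathcal J\equiv0$, the window bound directly yields $V^*\equiv0$, which bypasses $r_x$.
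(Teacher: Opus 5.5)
Your proposal is correct and follows essentially the paper's own proof: it uses the same bounds $C_1,C_2\ge\Lambda_X$, the same outside-$\mathcal B_J$ inequality with the $(1-\theta_J)$ factor, the same boundary-invariance check, and a reaching-time bound of $T_X$. The paper evaluates $\int_0^\infty ds/(C_1s^\mu+C_2s^\nu)$ in closed form and bounds it by $\mathfrak C_{\mu,\nu}/\Lambda_X$, which is the same computation as your appeal to Theorem~\ref{thm:normalized_predefined_time} with $\alpha=\beta=r=1$. For the state bound the paper splits at $2L_{\dot V}T$ and uses $V^*(\bx(\tau))>\tfrac12V^*(\bx(t))$ on the whole window, whereas your split at $L_{\dot V}T$ with the integrated linear envelope gives a marginally sharper radius. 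Your remark on the exact case is well taken: the paper's proof also needs $c_f=0$ and simply adds the hypothesis $\bar\Delta_f=0$, which is missing from the theorem statement. It then argues on $\mathcal I_x$ alone once $\tilde{\bW}\equiv\bm0$ after $T_E+T_W$, which coincides with your $\mathcal J$-based argument because $V_W\equiv0$ there.
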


\begin{proof}
From
\eqref{eq:alphaW_closed_loop_condition},
\(
\alpha_W
\geq
\gamma_{W\Delta}/(\chi_WT_W)
\),
hence
\(
\|\tilde{\bW}(t)\|
\leq
\bar W_c
\)
for
\(
t\geq T_E+T_W
\).
Moreover,
\(
h_1
=
\alpha_Wa_{W\Delta}
\geq
\Lambda_X
\)
and
\(
h_2
=
\alpha_Wb_{W\Delta}
\geq
2^{\nu-1}\Lambda_X
\).
From
\eqref{eq:Ts_closed_loop_condition},
\(
a_x
=
\frac{\gamma_{p,q,r}}{T_s}
c_r\alpha^r
\rho_{pr,T}
\geq
\Lambda_X
\)
and
\(
b_x
=
\frac{\gamma_{p,q,r}}{T_s}
c_r\beta^r
T^{1-\nu}
\geq
2^{\nu-1}\Lambda_X
\).
Thus
\(
C_1
=
\min\{a_x,h_1\}
\geq
\Lambda_X
\)
and
\(
C_2
=
2^{1-\nu}\min\{b_x,h_2\}
\geq
\Lambda_X
\).

Define
\(
F(s)
:=
C_1s^\mu+C_2s^\nu
\).
Then
\(
F(0)=0
\),
\(
F'(s)
=
\mu C_1s^{\mu-1}
+
\nu C_2s^{\nu-1}
>0
\)
for
\(
s>0
\),
and
\(
F(s)\rightarrow\infty
\)
as
\(
s\rightarrow\infty
\).
Hence the equation
\(
F(\bar{\mathcal J})
=
\Delta_J/\theta_J
\)
has a unique
\(
\bar{\mathcal J}\geq0
\).

For
\(
\mathcal J>\bar{\mathcal J}
\),
\(
F(\mathcal J)
>
\Delta_J/\theta_J
\),
so
\(
\Delta_J
<
\theta_J
[
C_1\mathcal J^\mu
+
C_2\mathcal J^\nu
]
\).
Substitution into
\eqref{eq:J_comparison}
gives
\begin{equation}
\dot{\mathcal J}
\leq
-(1-\theta_J)
\left[
C_1\mathcal J^\mu
+
C_2\mathcal J^\nu
\right].
\label{eq:J_outside}
\end{equation}

Let
\(
\eta_1
:=
(\nu-1)/(\nu-\mu)
\)
and
\(
\eta_2
:=
(1-\mu)/(\nu-\mu)
\),
so that
\(
\eta_1,\eta_2>0
\)
and
\(
\eta_1+\eta_2=1
\).
From
\eqref{eq:J_outside},
\[
dt
\leq
-\frac{
d\mathcal J
}{
(1-\theta_J)
[
C_1\mathcal J^\mu
+
C_2\mathcal J^\nu
]
}.
\]
Hence, with
\(
t_0=T_E+T_W+T
\),
\[
\begin{aligned}
T_{\mathcal J}
&\leq
\frac1{1-\theta_J}
\int_{\bar{\mathcal J}}^{\mathcal J(t_0)}
\frac{ds}{
C_1s^\mu+C_2s^\nu
}
\\
&\leq
\frac1{1-\theta_J}
\int_0^\infty
\frac{ds}{
C_1s^\mu+C_2s^\nu
}
\\
&=
\frac{
\mathfrak C_{\mu,\nu}
}{
(1-\theta_J)
C_1^{\eta_1}
C_2^{\eta_2}
}.
\end{aligned}
\]
Since
\(
C_1,C_2\geq\Lambda_X
\),
\(
C_1^{\eta_1}
C_2^{\eta_2}
\geq
\Lambda_X^{\eta_1+\eta_2}
=
\Lambda_X
\),
and therefore
\(
T_{\mathcal J}
\leq
\mathfrak C_{\mu,\nu}/[(1-\theta_J)\Lambda_X]
=
T_X
\).
Thus
\(
t_0+T_{\mathcal J}
\leq
T_E+T_W+T+T_X
=
T_p
\).

At
\(
\mathcal J=\bar{\mathcal J}
\),
\[
\dot{\mathcal J}
\leq
-C_1\bar{\mathcal J}^{\mu}
-C_2\bar{\mathcal J}^{\nu}
+\Delta_J
=
-\frac{\Delta_J}{\theta_J}
+\Delta_J
=
-\frac{1-\theta_J}{\theta_J}\Delta_J
\leq0,
\]
so
\(
\mathcal B_J
\)
is forward invariant.

For
\(
t\geq T_p
\),
\(
\mathcal I_x(t)
\leq
\mathcal J(t)
\leq
\bar{\mathcal J}
\).
If
\(
V^*(\bx(t))
\leq
2L_{\dot V}T
\),
the first state bound follows.
Otherwise,
\(
V^*(\bx(t))
>
2L_{\dot V}T
\),
and for every
\(
\tau\in[t-T,t]
\),
\[
V^*(\bx(\tau))
\geq
V^*(\bx(t))
-
L_{\dot V}T
>
\frac12V^*(\bx(t)).
\]
Therefore,
\[
\mathcal I_x(t)
=
\int_{t-T}^{t}
V^*(\bx(\tau))d\tau
>
\frac{T}{2}
V^*(\bx(t)),
\]
so
\(
V^*(\bx(t))
<
2\bar{\mathcal J}/T
\).
Hence
\(
V^*(\bx(t))
\leq
\max\{
2L_{\dot V}T,
2\bar{\mathcal J}/T
\}
\),
and
\(
V^*(\bx)
\geq
\underline v\|\bx\|^2
\)
gives
\eqref{eq:pointwise_x}.

Similarly,
\(
V_W(t)
\leq
\mathcal J(t)
\leq
\bar{\mathcal J}
\),
while
\[
V_W
=
\frac12
\tilde{\bW}^{\top}
\bm\Gamma^{-1}
\tilde{\bW}
\geq
\frac{
\|\tilde{\bW}\|^2
}{
2\lambda_{\max}(\bm\Gamma)
}.
\]
Thus
\(
\|\tilde{\bW}(t)\|
\leq
\sqrt{
2\lambda_{\max}(\bm\Gamma)
\bar{\mathcal J}
}
\).
Together with
\(
\|\tilde{\bW}(t)\|
\leq
\bar W_c
\),
this yields
\eqref{eq:pointwise_W}.

Finally, if
\(
\bar\varepsilon_B(t)\equiv0
\),
\(
\bar\varepsilon_{B,j}=0
\),
\(
\bar\varepsilon_g=0
\),
and
\(
\bar\Delta_f=0
\),
Theorem~\ref{thm:predefined_time_critic} gives
\(
\tilde{\bW}(t)=\bm0
\)
for
\(
t\geq T_E+T_W
\),
hence
\(
\nabla\hat V=\nabla V^*
\)
and
\(
\hat{\bu}=\bu^*
\).
Moreover,
\(
c_f=0
\),
so that for
\(
t\geq t_0
\),
\[
\dot{\mathcal I}_x
\leq
-a_x\mathcal I_x^\mu
-b_x\mathcal I_x^\nu.
\]
Therefore,
\[
T_x
\leq
\frac{
\mathfrak C_{\mu,\nu}
}{
a_x^{\eta_1}
b_x^{\eta_2}
}
\leq
\frac{
\mathfrak C_{\mu,\nu}
}{
\Lambda_X
}
=
(1-\theta_J)T_X
<
T_X.
\]
Hence
\(
\mathcal I_x(t)=0
\)
no later than
\(
T_E+T_W+T+T_X=T_p
\).
Since
\(
\mathcal I_x
=
\int_{t-T}^{t}
V^*(\bx(\tau))d\tau
\)
with
\(
V^*(\bx)\geq0
\),
continuity gives
\(
V^*(\bx(\tau))=0
\)
throughout the window, and positive definiteness yields
\(
\bx(t)=\bm0
\).
Thus
\(
\bx(t)=\bm0
\)
and
\(
\tilde{\bW}(t)=\bm0
\)
for
\(
t\geq T_p
\).
\end{proof}


\begin{corollary}[Overall-deadline-based selection of \(T_s\)]
\label{cor:Ts_from_Tp}
Let
\(
T_p>T_E+T
\)
be a designer-assigned overall convergence deadline, and choose
\(
\chi\in(0,1)
\).
With the allocation
\(
T_W=\chi(T_p-T_E-T)
\)
and
\(
T_X=(1-\chi)(T_p-T_E-T)
\),
a sufficient state-side condition in
Theorem~\ref{thm:closed_loop} is
\begin{equation}
\begin{aligned}
T_s
\leq{}&
T_s^{\max}=(1-\theta_J)(1-\chi)
(T_p-T_E-T)
\\
&\times
\min
\left\{
\frac{
\gamma_{p,q,r}
c_r\alpha^r
\rho_{pr,T}
}{
\mathfrak C_{\mu,\nu}
},
\;
\frac{
\gamma_{p,q,r}
c_r\beta^r
T^{1-\nu}
}{
2^{\nu-1}
\mathfrak C_{\mu,\nu}
}
\right\}.
\end{aligned}
\label{eq:Ts_from_Tp}
\end{equation}
\end{corollary}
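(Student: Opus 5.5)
The argument is a direct algebraic specialization of the state-side gain requirements already derived before Theorem~\ref{thm:closed_loop}. I would substitute the deadline allocation \eqref{eq:deadline_allocation} into the definition of \(\Lambda_X\), then rewrite \eqref{eq:Ts_closed_loop_condition} in terms of \(T_p\). No new dynamical argument is needed. The closed-loop guarantee is inherited from Theorem~\ref{thm:closed_loop} once its hypothesis \eqref{eq:Ts_closed_loop_condition} is verified.

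The steps are as follows.

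First, with \(T_X=(1-\chi)(T_p-T_E-T)>0\), which is positive since \(T_p>T_E+T\) and \(\chi\in(0,1)\), the definition \(\Lambda_X=\mathfrak C_{\mu,\nu}/[(1-\theta_J)T_X]\) gives
\[
\frac{1}{\Lambda_X}=\frac{(1-\theta_J)(1-\chi)(T_p-T_E-T)}{\mathfrak C_{\mu,\nu}} .
\]

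Second, I substitute this into the two bounds in \eqref{eq:Ts_closed_loop_condition}. Since every factor is positive, the factor \((1-\theta_J)(1-\chi)(T_p-T_E-T)\) pulls out of the minimum. This reproduces exactly the right-hand side \(T_s^{\max}\) of \eqref{eq:Ts_from_Tp}. Here \(\rho_{pr,T}=\rho_{\gamma_1,T}\) because \(\gamma_1=pr\).

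Third, any \(T_s\le T_s^{\max}\) therefore satisfies \(a_x\ge\Lambda_X\) and \(b_x\ge 2^{\nu-1}\Lambda_X\). These are precisely the state-side inequalities used in the proof of Theorem~\ref{thm:closed_loop} to obtain \(C_1,C_2\ge\Lambda_X\), and hence \(T_{\mathcal J}\le T_X\). The critic-side requirement \eqref{eq:alphaW_closed_loop_condition} concerns \(\alpha_W\) alone, which is why the corollary calls this a sufficient \emph{state-side} condition.

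The only delicate point is a dependence on \(T_s\) itself. The coefficient \(\rho_{pr,T}\) depends on \(L_{\dot V}\) and \(\bar I_x=T\bar V^*\), and these could in principle depend on \(T_s\) through \(V^*\) and the closed-loop vector field. So \eqref{eq:Ts_from_Tp} is explicit only if these compact-set constants are treated as fixed a priori bounds on \(\Omega\), as in the preceding analysis. I would state this by referring to the "preceding compact-set bounds" assumed in Theorem~\ref{thm:closed_loop}, so that the inequality is an explicit sufficient condition rather than an implicit one. Otherwise the proof is just the monotone rearrangement above.
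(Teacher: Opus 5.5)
Your proposal is correct and follows the paper's own route: the corollary comes from inserting \(T_X=(1-\chi)(T_p-T_E-T)\) into \(\Lambda_X=\mathfrak C_{\mu,\nu}/[(1-\theta_J)T_X]\) and then factoring the positive constant \((1-\theta_J)(1-\chi)(T_p-T_E-T)\) out of the minimum in \eqref{eq:Ts_closed_loop_condition}, with \(\rho_{pr,T}=\rho_{\gamma_1,T}\). Your caveat is also legitimate: \(L_{\dot V}\), and hence \(\rho_{pr,T}\), may itself depend on \(T_s\) through the closed-loop vector field, and the paper leaves this implicit by treating the compact-set constants as fixed a priori.
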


\section{Simulation}
\subsection{Simulation setup}

All simulations are implemented in MATLAB R2026a. Since the predefined-time critic dynamics may become numerically stiff under aggressive deadlines and large learning gains, the critic ODE is integrated using the stiff solver \texttt{ode15s}, while the plant is propagated with a fixed outer step. This numerical treatment adaptively resolves the internal critic dynamics without altering the proposed learning law or its theoretical gains.

Consider the nonlinear control-affine plant
\(
\dot{\bx}
=
\bfv(\bx)
+
\bg(\bx)\bu
\),
where
\(
\bx=[x_1,x_2,x_3]^\top\in\mathbb R^3
\),
\(
\bu=[u_1,u_2,u_3]^\top\in\mathbb R^3
\),
and
\(
\bg(\bx)=\bm I_3
\).
The unknown drift is
\(
\bfv(\bx)
=
[f_1(\bx),f_2(\bx),f_3(\bx)]^\top
\),
with
\(
f_1(\bx)
=
-\tanh(x_1+x_2)-0.4x_1^3
\),
\(
f_2(\bx)
=
-\sin(x_2+x_3)-0.3x_1x_2
\),
and
\(
f_3(\bx)
=
-\frac{x_3}{1+x_3^2}
-0.2\tanh(x_1x_3)
\).

For the identifier, the drift dictionary is selected as
\(
\bsigma_f(\bx)
=
[
\tanh(x_1+x_2),
x_1^3,
\sin(x_2+x_3),
x_1x_2,
\frac{x_3}{1+x_3^2},
\tanh(x_1x_3)
]^\top
\),
such that
\(
\hat{\bfv}(\bx)
=
\hat{\bW}_f^\top
\bsigma_f(\bx)
\),
where
\(
\hat{\bW}_f\in\mathbb R^{6\times3}
\)
is obtained from
\eqref{eq:icl_identifier}.

The desired value function is selected as
\(
V_d(\bx)
=
\frac12\bx^\top\bx
=
\frac12(x_1^2+x_2^2+x_3^2)
\),
with
\(
r^2(\bx)
=
x_1^2+x_2^2+x_3^2
\).
The critic basis is chosen as
\(
\bphi(\bx)
=
[
x_1^2,
x_2^2,
x_3^2,
x_1x_2,
x_1x_3,
x_2x_3
]^\top
\),
and the corresponding ideal critic weight is
\(
\bW^*
=
[0.5,0.5,0.5,0,0,0]^\top
\).


Using the identified drift, define
\(
\hat F_V(\bx)
=
\nabla V_d^\top(\bx)\hat{\bfv}(\bx)
\)
and
\(
\hat A(\bx)
=
\hat F_V(\bx)
+
\kappa_V
[
2V_d^{3/4}(\bx)
+
V_d^{3/2}(\bx)
]
\),
where
\(
\kappa_V
=
\gamma_{p,q,r}/T_s
\).
The inverse-optimal running cost is
\(
L(\bx,\bu)
=
\bm L_1(\bx)\bu
+
L_2(\bx)
+
\bu^\top\bR\bu
\),
where, for
\(
\bx\neq\bm0
\),
\(
\bm L_1(\bx)
=
(
\hat A(\bx)/r^2(\bx)-1
)
\bx^\top
\)
and
\(
L_2(\bx)
=
-\hat F_V(\bx)
+
\hat A^2(\bx)/(2r^2(\bx))
\).
The associated inverse-optimal policy is
\(
\bu^*(\bx)
=
-\hat A(\bx)\bx/r^2(\bx)
\)
for
\(
\bx\neq\bm0
\),
with
\(
\bu^*(\bm0)=\bm0
\).

In the numerical implementation, the drift-identification
stage is first completed, after which the identified model
\(\hat{\bfv}(\bx)\) is held fixed during the subsequent
IRL experiment. The plant is then driven by an exploratory
behavior policy
\(
\bu_b(t)
=
\bu(\bx(t),\hat{\bW}(t))
+
\mathbf e(t)
\),
where \(\mathbf e(t)\) is a bounded probing signal
introduced to generate sufficiently informative integral
Bellman data. The finite-excitation stage is terminated
once the retained replay stack satisfies
\(
\lambda_{\min}
\left(
\sum_{j=1}^{M}
\bar{\bm\Omega}_j
\bar{\bm\Omega}_j^\top
\right)
\geq
\underline{\alpha}_E
\).
The probing signal is subsequently removed, and the
stored informative data are reused for predefined-time
critic learning.

For numerical convenience, the prescribed-time clock
is initialized only after both the drift-identification
and IRL finite-excitation stages have been completed.
Thus, the overall prescribed deadline \(T_p\) excludes
these preliminary stages, yielding \(T_E=0\) within
the prescribed-time evaluation, as reported in
Table~\ref{tab:simulation_parameters}. This convention
simplifies the numerical implementation while preserving
the applicability of the theoretical convergence
guarantees under the stated assumptions.

Two prescribed overall deadlines,
\(
T_p=1~\mathrm{s}
\)
and
\(
T_p=100~\mathrm{s}
\),
are considered to illustrate the effect of the assigned deadline on the
critic-learning and closed-loop responses.
The numerical parameters used in the two cases are summarized in
Table~\ref{tab:simulation_parameters}.

\begin{table}[!t]
\centering
\caption{Simulation parameters for the two prescribed overall deadlines.}
\label{tab:simulation_parameters}
\renewcommand{\arraystretch}{1.12}
\setlength{\tabcolsep}{3.5pt}
\resizebox{\columnwidth}{!}{%
\begin{tabular}{lcc}
\hline
\textbf{Parameter}
&
\textbf{\(T_p=1~\mathrm{s}\)}
&
\textbf{\(T_p=100~\mathrm{s}\)}
\\
\hline

\(\bR\)
&
\(0.5\bm I_3\)
&
\(0.5\bm I_3\)
\\

\(\alpha,\beta,p,q,r\)
&
\(2,\;1,\;\frac34,\;\frac32,\;1\)
&
\(2,\;1,\;\frac34,\;\frac32,\;1\)
\\

\(\gamma_{p,q,r}\)
&
\(3.046992\)
&
\(3.046992\)
\\

\(T_E\)
&
\(0~\mathrm{s}\)
&
\(0~\mathrm{s}\)
\\

\(T\)
&
\(0.15~\mathrm{s}\)
&
\(0.15~\mathrm{s}\)
\\

\(\chi\)
&
\(0.30\)
&
\(0.30\)
\\

\(T_W\)
&
\(0.255~\mathrm{s}\)
&
\(29.955~\mathrm{s}\)
\\

\(T_X\)
&
\(0.595~\mathrm{s}\)
&
\(69.895~\mathrm{s}\)
\\

\(T_s^{\max}\)
&
\(0.193529~\mathrm{s}\)
&
\(22.733988~\mathrm{s}\)
\\

\(T_s\)
&
\(0.174176~\mathrm{s}\)
&
\(20.4606~\mathrm{s}\)
\\

\(\hat{\bW}(0)\)
&
\(0.002\bm 1_6\)
&
\(0.002\bm 1_6\)
\\

\(\bm\Gamma\)
&
\(0.05\bm I_6\)
&
\(0.05\bm I_6\)
\\

\(k_1,k_2\)
&
\(0.8,\;0.15\)
&
\(0.8,\;0.15\)
\\

\(\gamma_1,\gamma_2\)
&
\(\frac34,\;2\)
&
\(\frac34,\;2\)
\\

Replay/current gains
&
\(1,\;0.1\)
&
\(1,\;0.1\)
\\

\(\chi_W\)
&
\(0.9\)
&
\(0.9\)
\\

Bellman-error dead zones
&
\(5\times10^{-4}\)
&
\(5\times10^{-4}\)
\\

Replay-stack size \(M\)
&
\(300\)
&
\(300\)
\\

Integration step \(\Delta t\)
&
\(0.005~\mathrm{s}\)
&
\(0.005~\mathrm{s}\)
\\

\hline
\end{tabular}%
}
\end{table}

Five initial conditions are considered:
\(
\bx_0^{(1)}
=
[0.30,-0.30,0.2646]^\top
\),
\(
\bx_0^{(2)}
=
[0.70,-0.50,0.30]^\top
\),
\(
\bx_0^{(3)}
=
[1.00,0.60,-0.40]^\top
\),
\(
\bx_0^{(4)}
=
[-1.20,0.90,0.70]^\top
\),
and
\(
\bx_0^{(5)}
=
[1.80,-1.80,1.60]^\top
\).
Their corresponding norms are
\(
0.5000
\),
\(
0.9110
\),
\(
1.2329
\),
\(
1.6553
\),
and
\(
3.0067
\),
respectively.

All five initial conditions are employed for the closed-loop state-response
and value-error comparisons, whereas
\(
\bx_0
=
[1.80,-1.80,1.60]^\top
\)
is used as the representative case for the learned-policy and critic-weight
trajectories.

\subsection{Simulation Results}

\begin{table}[!t]
\centering
\caption{Quantitative simulation results for the two prescribed deadlines, where \(i\in\{1,\ldots,5\}\) denotes the index of the considered initial-condition case.}
\label{tab:simulation_results}
\renewcommand{\arraystretch}{1.12}
\setlength{\tabcolsep}{3.2pt}
\resizebox{\columnwidth}{!}{%
\begin{tabular}{lcc}
\hline
\textbf{Performance metric}
&
\textbf{\(T_p=1~\mathrm{s}\)}
&
\textbf{\(T_p=100~\mathrm{s}\)}
\\
\hline

Final identifier error
\(
\|\hat{\bW}_f-\bW_f^*\|_F
\)
&
\(6.2890\times10^{-4}\)
&
\(6.2890\times10^{-4}\)
\\

Replay-stack
\(
\lambda_{\min}
\)
&
\(9.4120\times10^{-3}\)
&
\(2.4112\)
\\

Replay-stack rcond
&
\(8.0647\times10^{-3}\)
&
\(1.0531\times10^{-1}\)
\\

Critic error at
\(
T_W
\)
&
\(4.9851\times10^{-3}\)
&
\(4.0012\times10^{-4}\)
\\

\(
\displaystyle
\max_{t\geq T_W}
\|\hat{\bW}(t)-\bW^*\|
\)
&
\(4.9851\times10^{-3}\)
&
\(4.0012\times10^{-4}\)
\\

\(
\displaystyle
\max_i
\|\bx_i(T_p)\|
\)
&
\(2.7008\times10^{-3}\)
&
\(2.0441\times10^{-42}\)
\\

\(
\displaystyle
\max_i
\max_{t\geq T_p}
\|\bx_i(t)\|
\)
&
\(2.7009\times10^{-3}\)
&
\(2.0441\times10^{-42}\)
\\

\(
\displaystyle
\max_{0\leq t\leq T_p}
\|\bfv(\bx)-\hat{\bfv}(\bx)\|
\)
&
\(1.3421\times10^{-3}\)
&
\(1.3421\times10^{-3}\)
\\

RMS
\(
\|\hat{\bu}-\bu^*\|
\)
over
\(
[T_W,T_p]
\)
&
\(3.4128\times10^{-6}\)
&
\(3.6532\times10^{-14}\)
\\

\hline
\end{tabular}%
}
\end{table}

\begin{figure}[H]
    \centering
    \includegraphics[width=0.98\columnwidth]
    {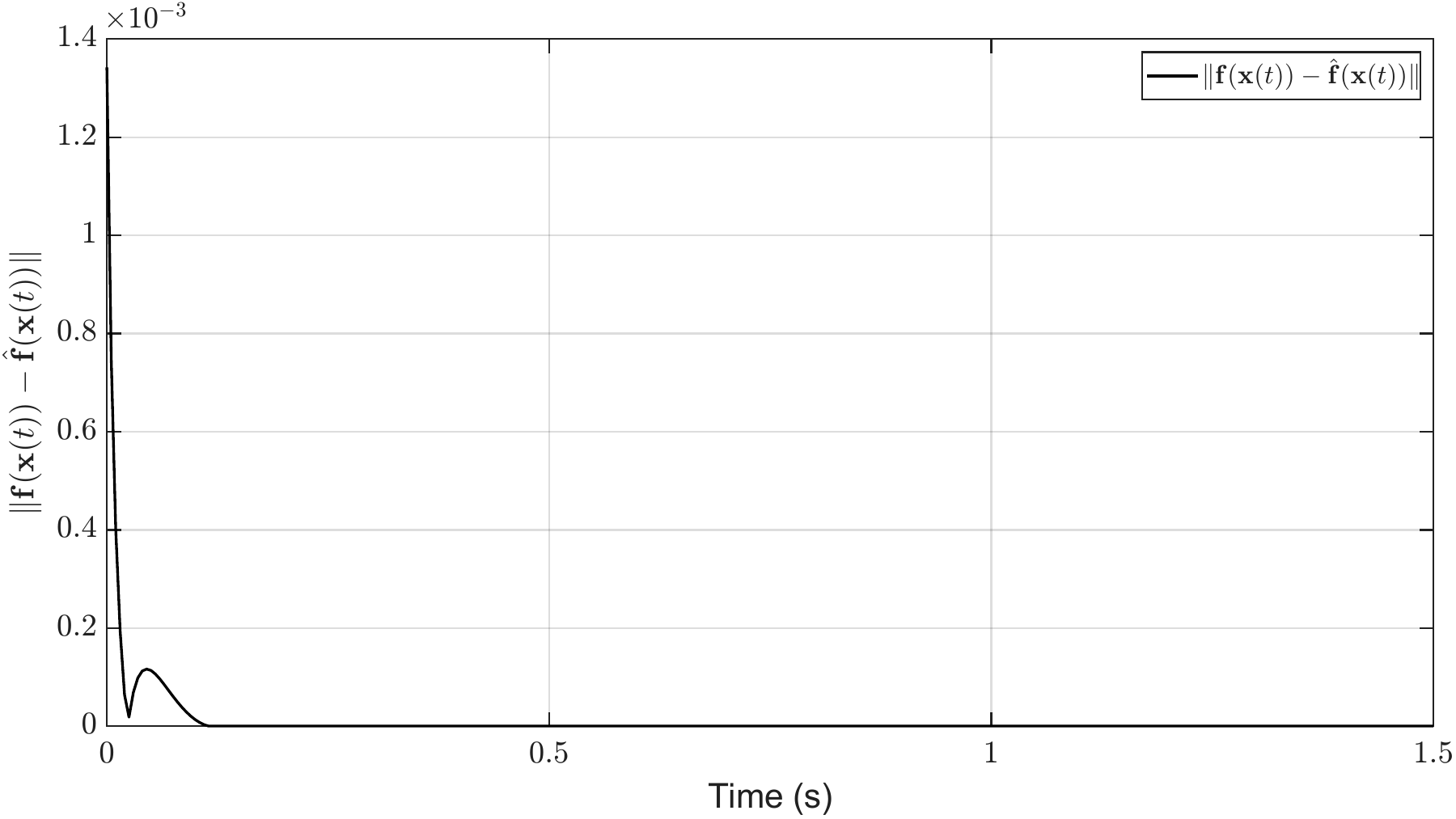}
    \caption{Identification error.}
    \label{fig:identification_error}
\end{figure}

\begin{figure}[H]
    \centering
    \includegraphics[width=0.98\columnwidth]
    {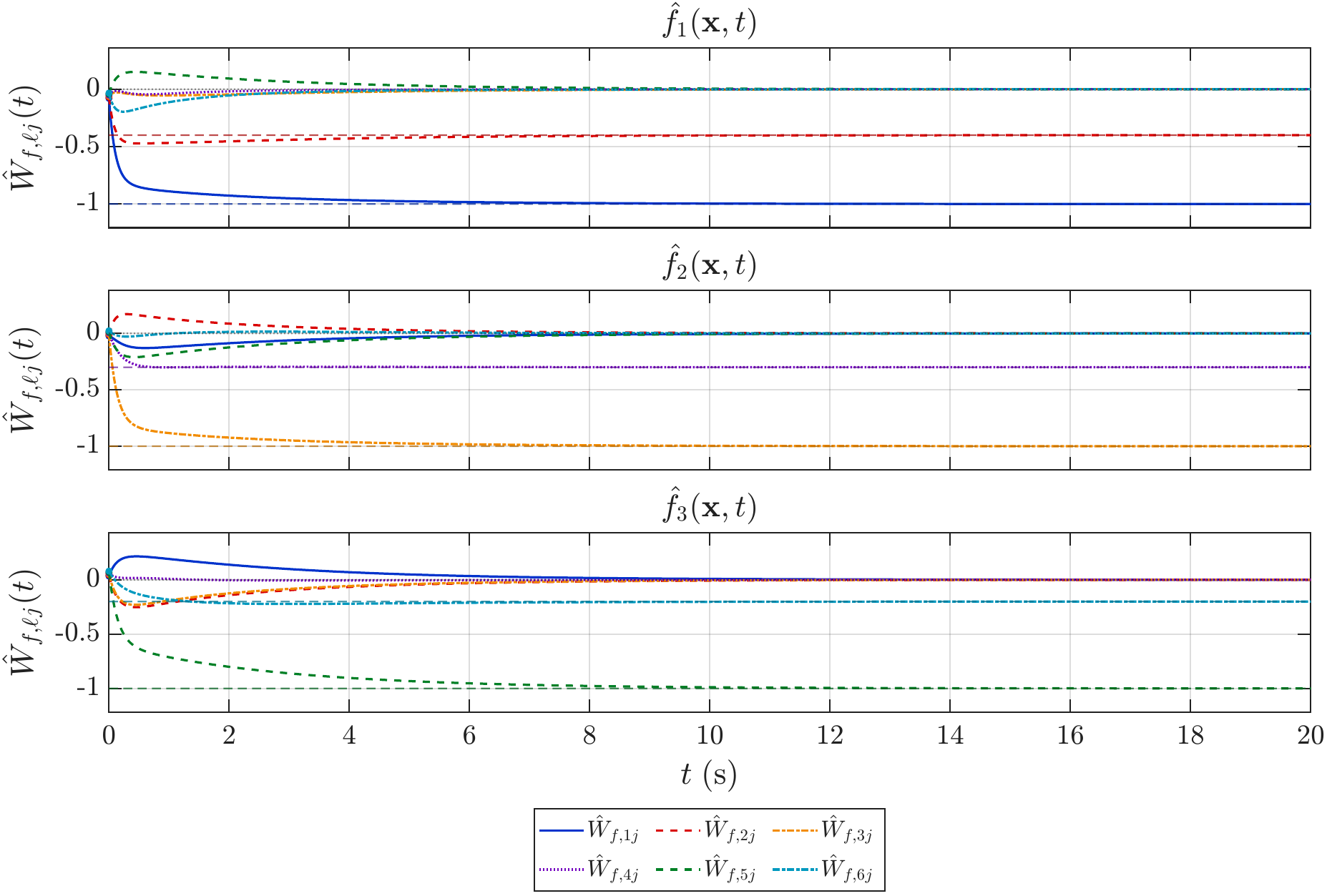}
    \caption{Online ICL identifier-weight trajectories.}
    \label{fig:identifier_weights}
\end{figure}

\subsection{Case A: \(T_p=1\,\mathrm{s}\)}
\label{subsec:Tp1_results}

\begin{figure}[H]
    \centering
    \includegraphics[width=0.98\columnwidth]
    {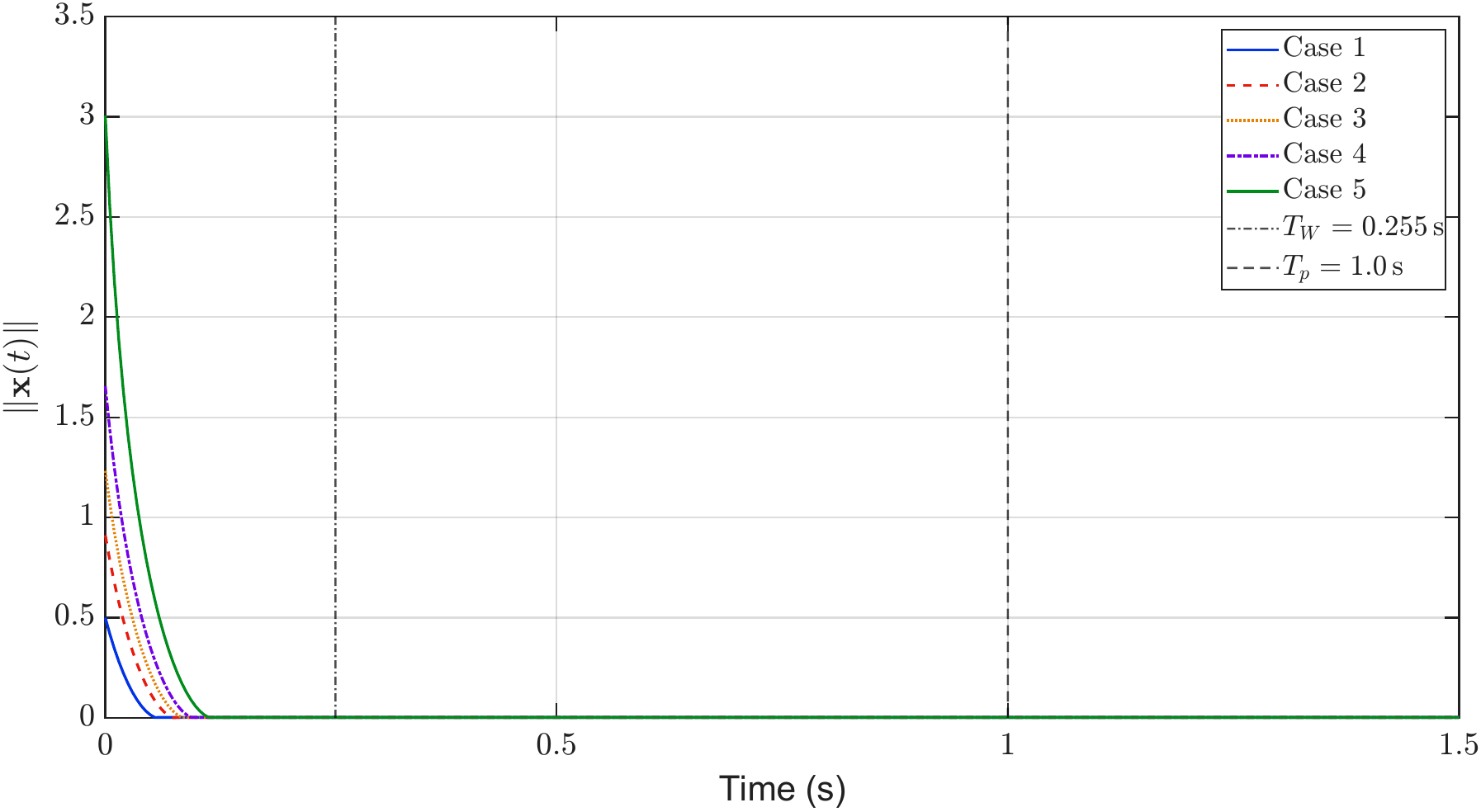}
    \caption{Predefined-time closed-loop state responses for
    \(T_p=1\,\mathrm{s}\).}
    \label{fig:Tp1_state}
\end{figure}

\begin{figure}[H]
    \centering
    \includegraphics[width=0.98\columnwidth]
    {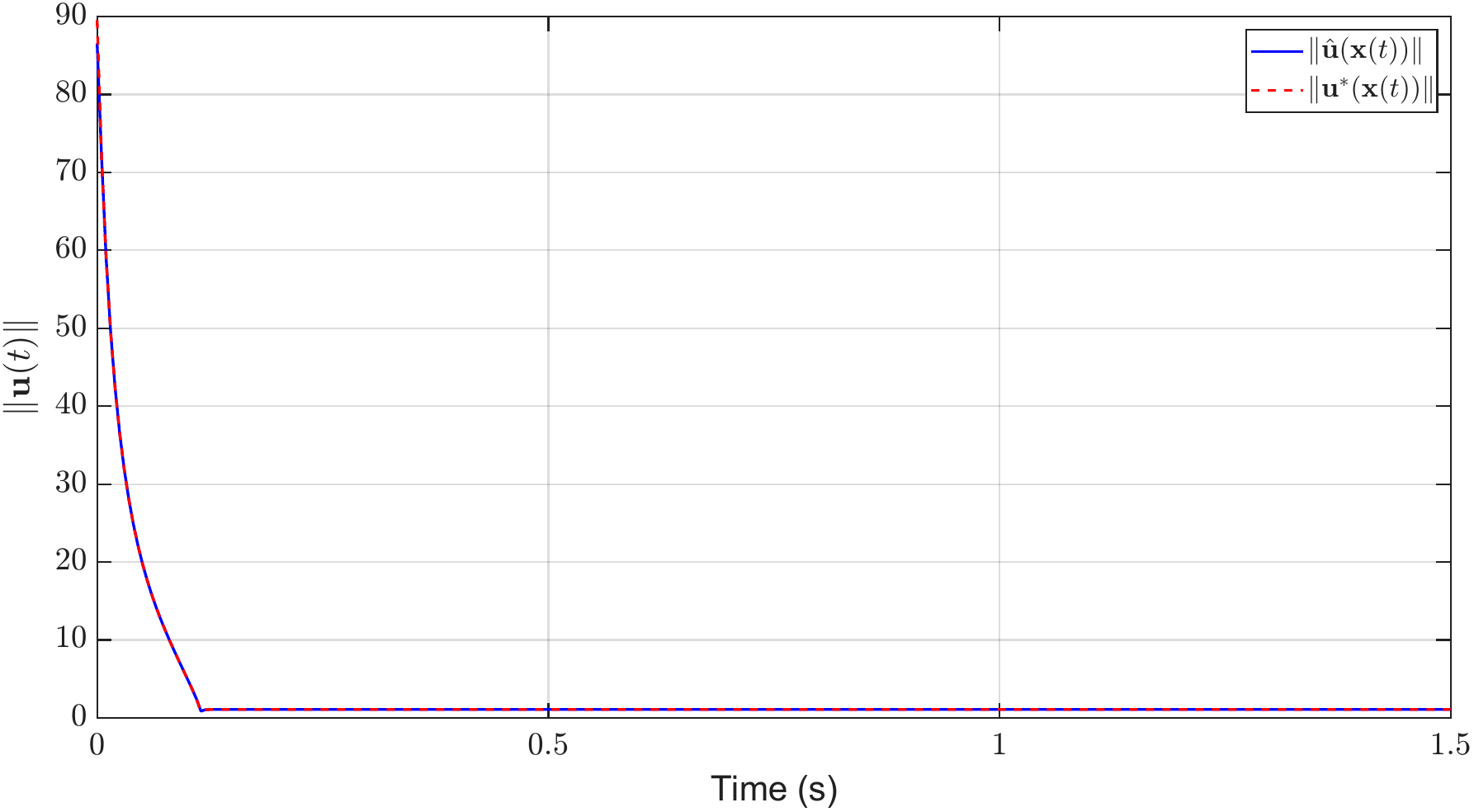}
    \caption{Learned control policy and inverse-optimal target policy for
    \(T_p=1\,\mathrm{s}\).}
    \label{fig:Tp1_control}
\end{figure}

\begin{figure}[H]
    \centering
    \includegraphics[width=0.98\columnwidth]
    {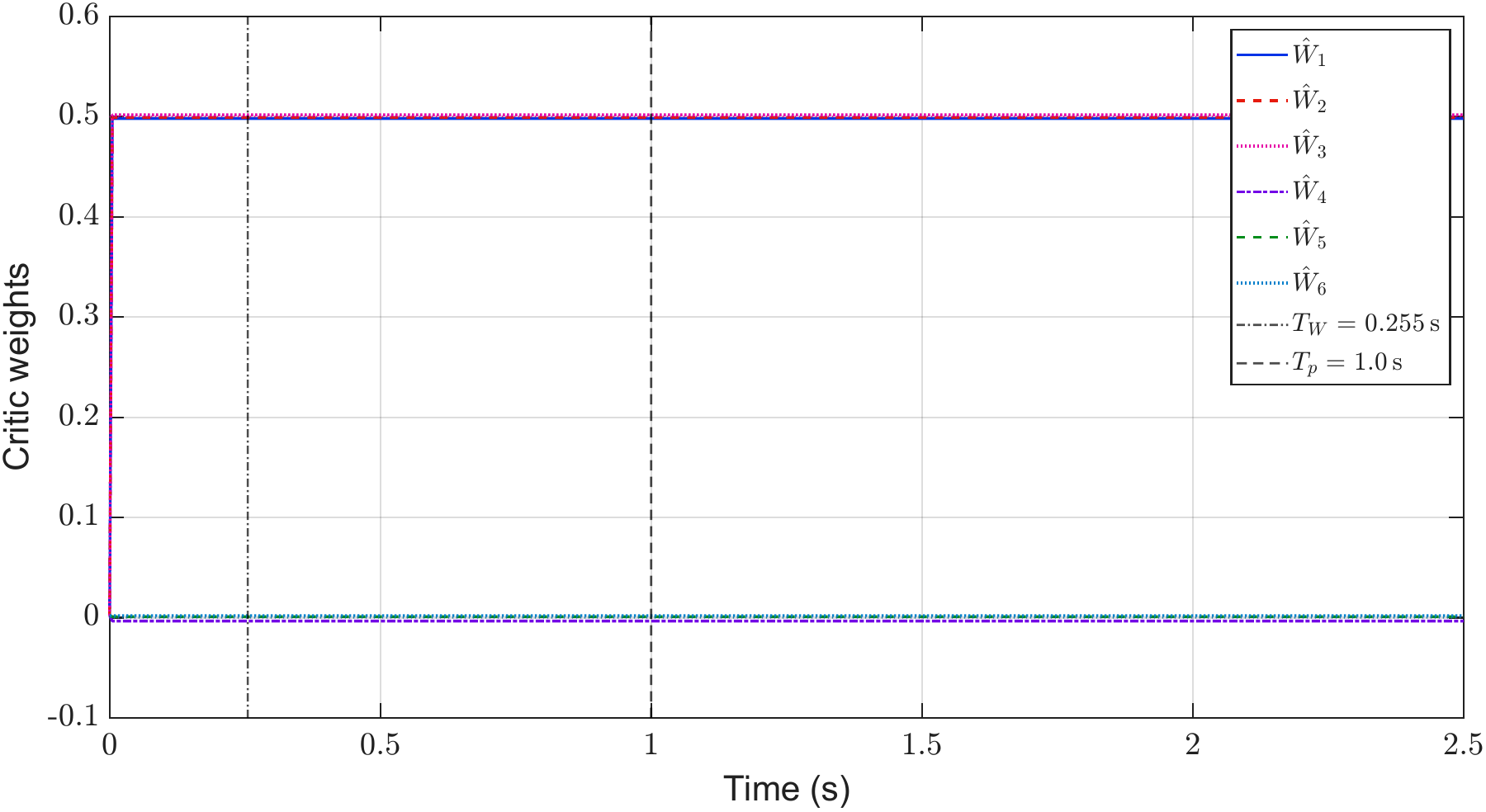}
    \caption{Critic-weight trajectories for
    \(T_p=1\,\mathrm{s}\).}
    \label{fig:Tp1_critic}
\end{figure}

\begin{figure}[H]
    \centering
    \includegraphics[width=0.98\columnwidth]
    {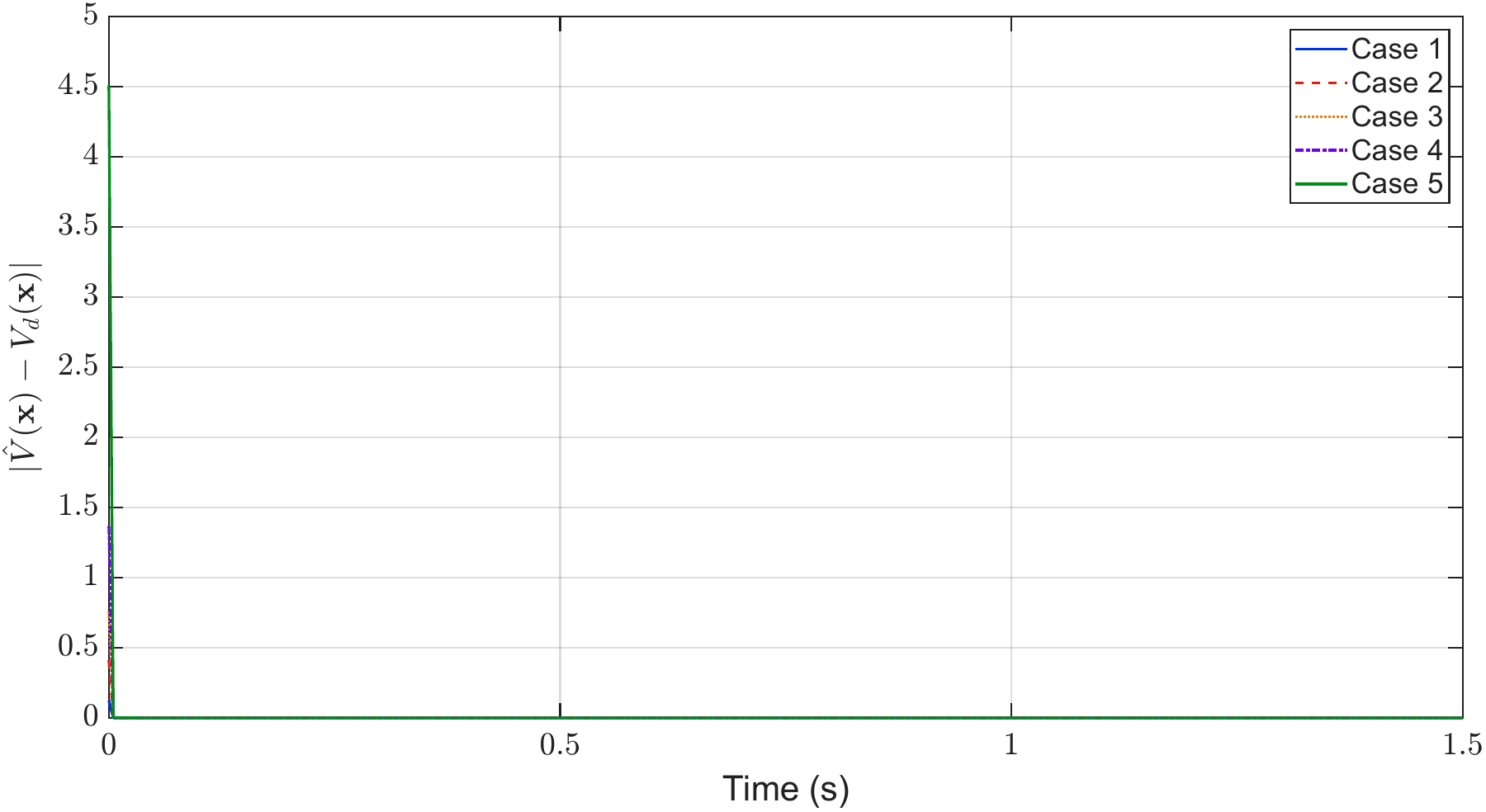}
    \caption{Value-function approximation error for
    \(T_p=1\,\mathrm{s}\).}
    \label{fig:Tp1_value_error}
\end{figure}


\subsection{Case B: \(T_p=100\,\mathrm{s}\)}
\label{subsec:Tp100_results}

\begin{figure}[H]
    \centering
    \includegraphics[width=0.98\columnwidth]
    {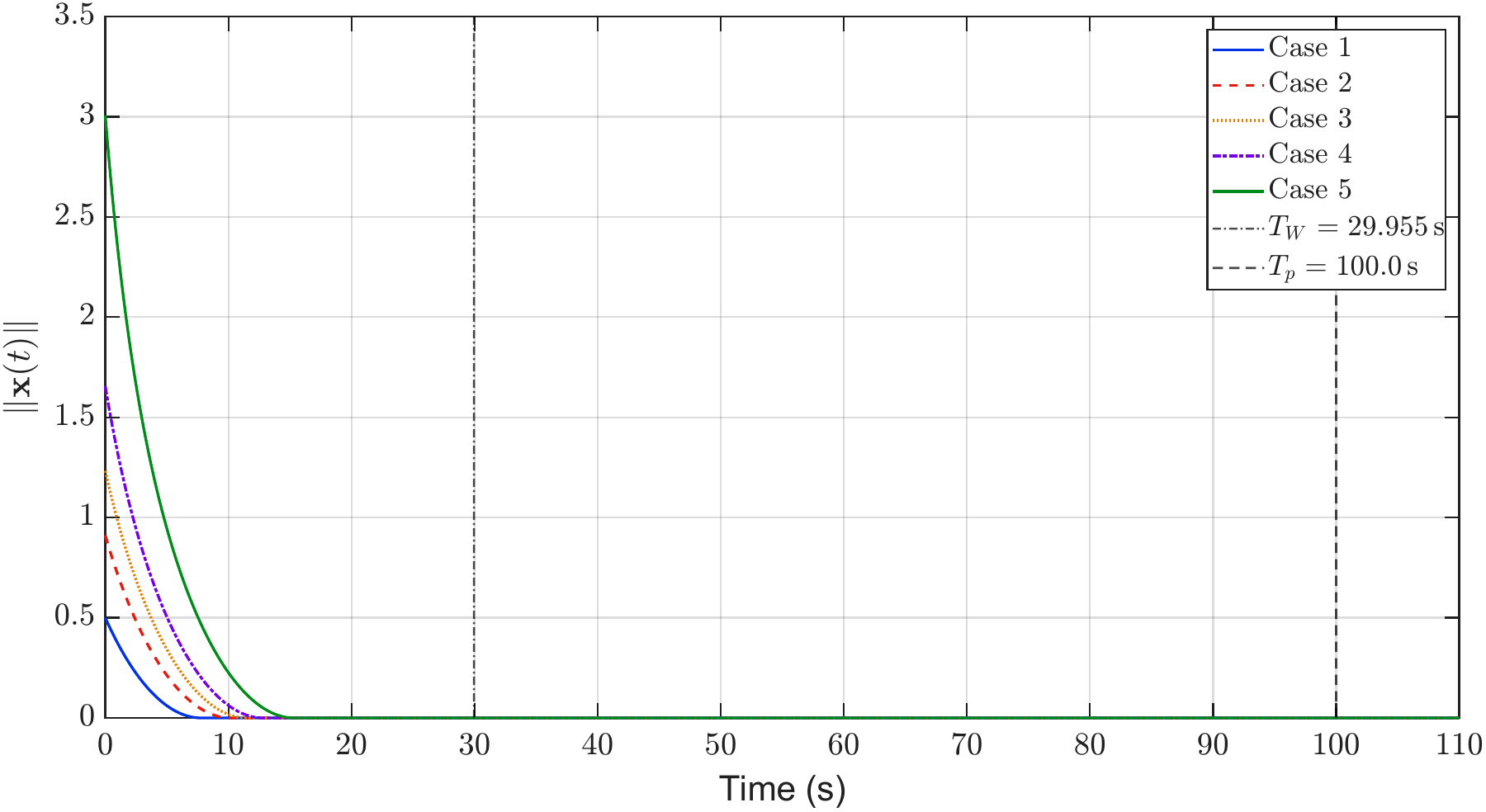}
    \caption{Predefined-time closed-loop state responses for
    \(T_p=100\,\mathrm{s}\).}
    \label{fig:Tp100_state}
\end{figure}

\begin{figure}[H]
    \centering
    \includegraphics[width=0.98\columnwidth]
    {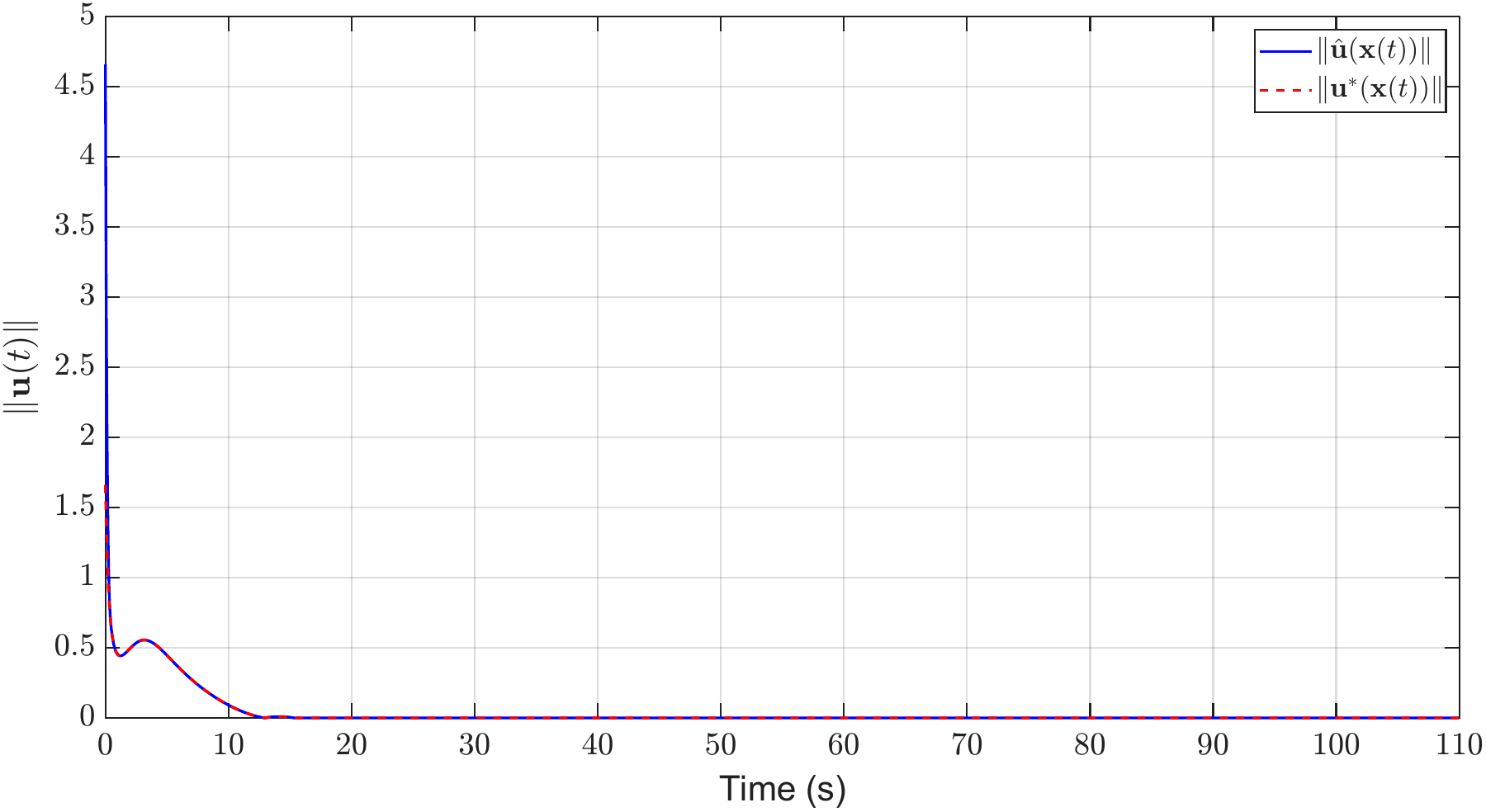}
    \caption{Learned control policy and inverse-optimal target policy for
    \(T_p=100\,\mathrm{s}\).}
    \label{fig:Tp100_control}
\end{figure}

\begin{figure}[H]
    \centering
    \includegraphics[width=0.98\columnwidth]
    {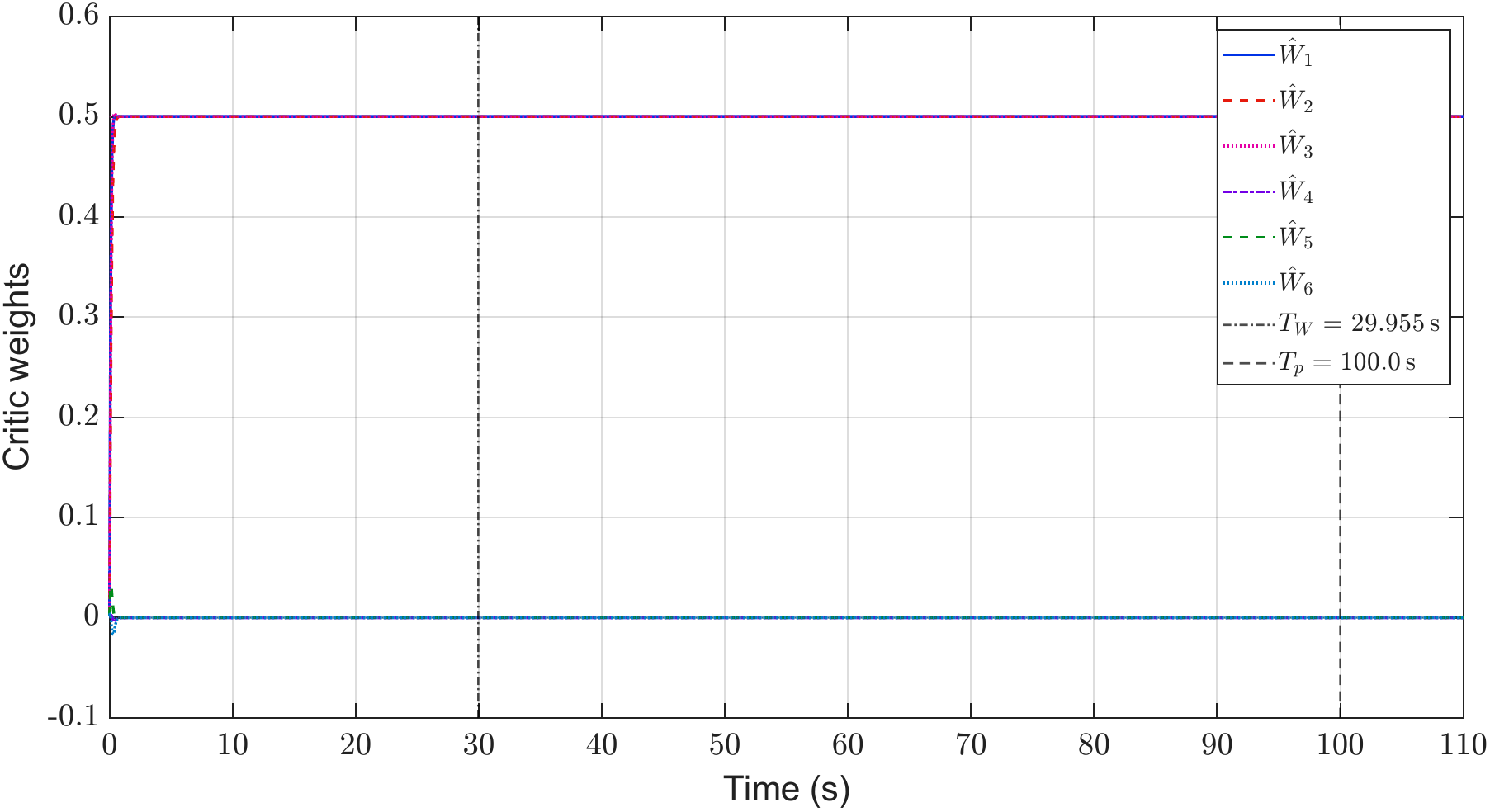}
    \caption{Critic-weight trajectories for
    \(T_p=100\,\mathrm{s}\).}
    \label{fig:Tp100_critic}
\end{figure}

\begin{figure}[H]
    \centering
    \includegraphics[width=0.98\columnwidth]
    {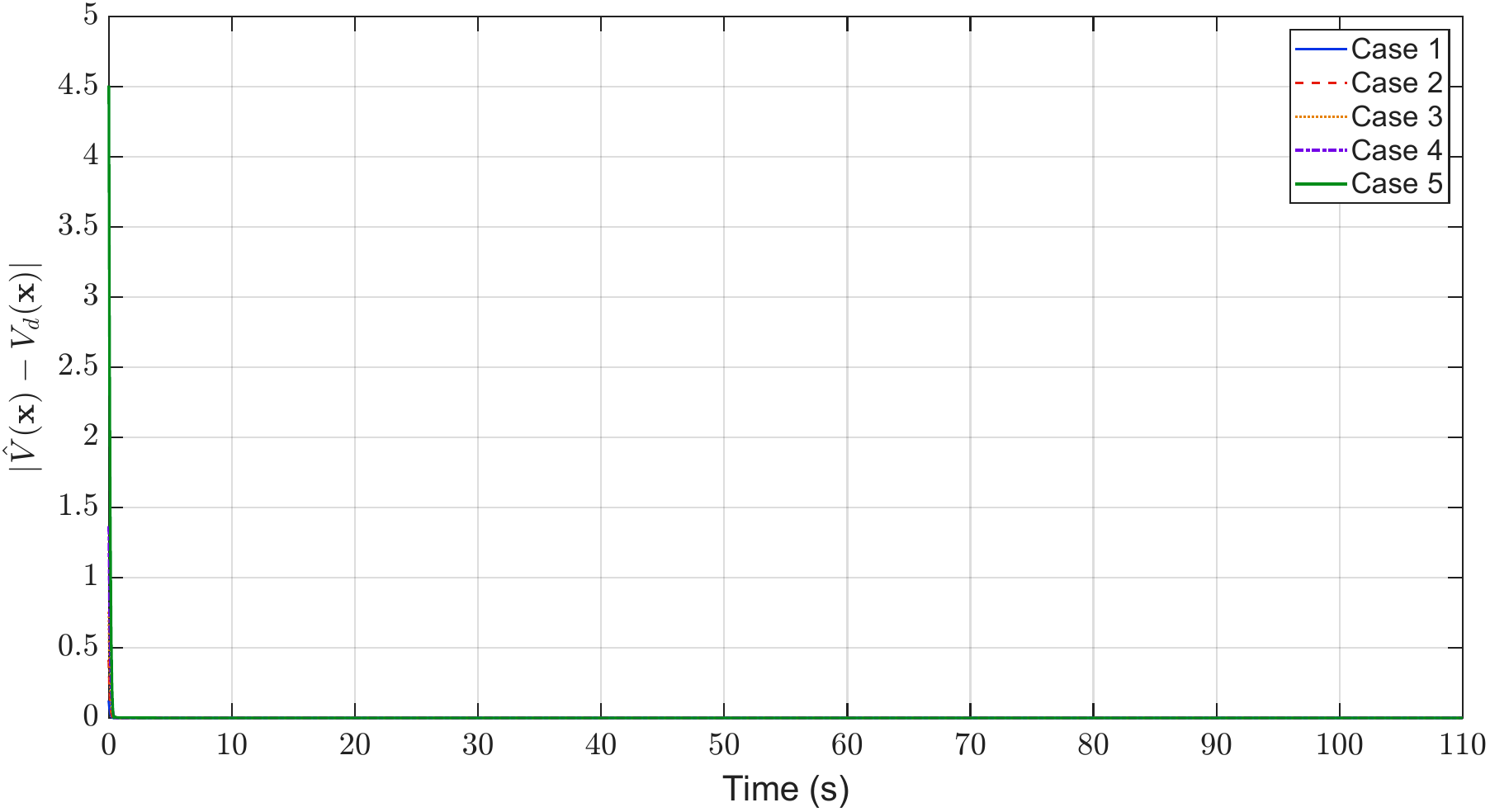}
    \caption{Value-function approximation error for
    \(T_p=100\,\mathrm{s}\).}
    \label{fig:Tp100_value_error}
\end{figure}

The identification results are first summarized in
Figs.~\ref{fig:identification_error} and~\ref{fig:identifier_weights}.
The drift-reconstruction error rapidly decreases to a negligible level,
while the online ICL weights approach their corresponding true
coefficients. Quantitatively, the final identifier error reported in
Table~\ref{tab:simulation_results} is
\(
\|\hat{\bW}_f-\bW_f^*\|_F
=
6.2890\times10^{-4}
\),
confirming that the unknown nonlinear drift is reconstructed with
sufficient accuracy before the closed-loop IRL performance is assessed.

For the aggressive prescribed deadline
\(
T_p=1\,\mathrm{s}
\),
Figs.~\ref{fig:Tp1_state} and~\ref{fig:Tp1_control} show that all five
initial conditions are rapidly driven to a small neighborhood of the
origin. At the overall deadline, the worst-case state norm is
\(
\max_i\|\bx_i(T_p)\|
=
2.7008\times10^{-3}
\),
as reported in Table~\ref{tab:simulation_results}.
The corresponding post-deadline worst-case value remains essentially
unchanged at
\(
2.7009\times10^{-3}
\),
which indicates that the trajectories remain inside the attained
practical neighborhood after the prescribed deadline.

The learned policy in Fig.~\ref{fig:Tp1_control} closely follows the
inverse-optimal target despite the aggressive convergence requirement.
The RMS mismatch over
\(
[T_W,T_p]
\)
is only
\(
3.4128\times10^{-6}
\),
while the short deadline produces a comparatively strong transient
control action. Hence, the fast response is achieved through an increased
control effort.

Figures~\ref{fig:Tp1_critic} and~\ref{fig:Tp1_value_error} further show
that the critic weights rapidly approach
\(
\bW^*
=
[0.5,0.5,0.5,0,0,0]^\top
\)
and that the value-function approximation error decreases toward a small
neighborhood of zero.
At the allocated critic-learning horizon,
\(
T_W=0.255\,\mathrm{s}
\),
the critic error is
\(
\|\hat{\bW}(T_W)-\bW^*\|
=
4.9851\times10^{-3}
\).
Moreover, the maximum critic error observed after
\(
T_W
\)
remains at the same level, which is consistent with practical
predefined-time critic convergence.

For the relaxed deadline
\(
T_p=100\,\mathrm{s}
\),
Figs.~\ref{fig:Tp100_state} and~\ref{fig:Tp100_control} exhibit the same
qualitative convergence behavior but with a substantially milder control
transient.
The state trajectories converge long before the prescribed deadline,
with
\(
\max_i\|\bx_i(T_p)\|
=
2.0441\times10^{-42}
\).
This early convergence does not contradict the predefined-time property,
because
\(
T_p
\)
acts as an upper bound on the convergence time rather than an exact
settling instant.

The policy-tracking performance in
Fig.~\ref{fig:Tp100_control} is also significantly tighter.
The RMS mismatch between the learned and inverse-optimal policies over
\(
[T_W,T_p]
\)
decreases to
\(
3.6532\times10^{-14}
\),
showing that the learned feedback reproduces the target policy to
essentially numerical precision along the simulated trajectory.

The learning behavior for the relaxed deadline is shown in
Figs.~\ref{fig:Tp100_critic} and~\ref{fig:Tp100_value_error}.
At
\(
T_W=29.955\,\mathrm{s}
\),
the critic error is reduced to
\(
4.0012\times10^{-4}
\),
which is more than one order of magnitude smaller than that obtained for
the
\(
T_p=1\,\mathrm{s}
\)
case.
The value-function approximation error similarly becomes negligible,
confirming that the same inverse-optimal solution is recovered under the
relaxed timing specification.

Taken together, Figs.~\ref{fig:Tp1_state}--\ref{fig:Tp100_value_error} and
Table~\ref{tab:simulation_results} reveal a clear
performance--effort trade-off.
Reducing
\(
T_p
\)
compresses the state and critic-learning transients but requires a larger
control action, whereas increasing
\(
T_p
\)
relaxes the required dissipation and learning rates and substantially
reduces the transient control effort.
At the same time, the final identifier accuracy and the recovered
inverse-optimal solution remain essentially unchanged.

The comparison also shows that predefined-time convergence should not be
interpreted as convergence occurring exactly at
\(
T_p
\).
The
\(
T_p=100\,\mathrm{s}
\)
trajectories settle much earlier than the assigned deadline, which is
fully consistent with a guaranteed upper bound on the convergence time.
Conversely, the
\(
T_p=1\,\mathrm{s}
\)
case demonstrates the cost of enforcing a highly aggressive deadline:
the transient control demand increases markedly even though the learned
policy ultimately converges to the same inverse-optimal solution.

From a practical perspective, this trade-off is important in systems
subject to actuator saturation, state constraints, or safety envelopes.
An excessively short prescribed deadline may demand control amplitudes
that are infeasible for the physical actuator even when the theoretical
predefined-time conditions are satisfied.
Therefore,
\(
T_p
\)
should be selected jointly with the available control authority and the
admissible operating region.
This observation motivates future extensions that combine the proposed
predefined-time IRL mechanism with explicit input constraints,
control-barrier conditions, or invariant-set certificates.


\section{Conclusion}

This paper has successfully developed a theoretical framework for predefined-time stabilization in optimal control of fully unknown nonlinear systems. The obtained results address limitations remaining in the work of Kokolakis \emph{et al.}~\citep{Kokolakis2025Predefined} and further extend the existing theory by formulating the problem in the integral reinforcement learning setting, which is more suitable for practical implementation and for systems with unknown dynamics. Although integral reinforcement learning was introduced by Modares \emph{et al.}~\citep{Modares2014}, the present study advances this direction by establishing an update law that guarantees prescribed-time convergence for a fully unknown nonlinear system, which is particularly relevant to cyber--physical systems with strict timing requirements. The paper also introduces a new proof framework for the coupled learning and closed-loop stability problem, and the theoretical analysis establishes predefined-time stability while the simulations confirm the main properties predicted by the theory. Two prescribed deadlines, \(T_p=1\,\mathrm{s}\) and \(T_p=100\,\mathrm{s}\), are used to examine the controller under aggressive and relaxed timing requirements and to provide guidance for practical tuning of the prescribed convergence time. At the same time, the \(T_p=100\,\mathrm{s}\) case shows that the actual convergence occurs considerably earlier than the assigned deadline, indicating that the current convergence-time bound remains conservative. Such conservatism may be undesirable when the convergence profile must closely match a prescribed time, and therefore motivates future work on less conservative analysis and design conditions that more tightly relate the assigned deadline to the actual closed-loop convergence time.

\appendix

\section{Proof of Lemma~\ref{lem:low_window}}
\label{app:proof_derivative_limited_window}

\begin{proof}
Fix any \(t\) and define
\(I_y:=\int_{t-T}^{t}y(\tau)\,d\tau\).
If \(I_y=0\), then \(y(\tau)=0\) almost everywhere on
\([t-T,t]\), and the claimed inequality follows immediately.
Hence, suppose \(I_y>0\).

Since \(y\) is absolutely continuous on the compact interval
\([t-T,t]\), it is continuous and therefore attains its maximum.
Let
\(M_y:=\max_{\tau\in[t-T,t]}y(\tau)\),
and choose
\(\tau_m\in[t-T,t]\)
such that
\(y(\tau_m)=M_y\).
Because \(0<\gamma_1<1\), one has
\(y^{\gamma_1}(\tau)
=
y(\tau)y^{\gamma_1-1}(\tau)
\geq
M_y^{\gamma_1-1}y(\tau)\)
for every \(\tau\in[t-T,t]\).
Consequently,
\(\int_{t-T}^{t}y^{\gamma_1}(\tau)\,d\tau
\geq
M_y^{\gamma_1-1}I_y\).

We now distinguish two cases. First, suppose that
\(M_y\leq 4I_y/T\).
Then
\(M_y^{\gamma_1-1}
\geq
(4I_y/T)^{\gamma_1-1}\),
because \(\gamma_1-1<0\). Hence,
\(\int_{t-T}^{t}y^{\gamma_1}(\tau)\,d\tau
\geq
4^{\gamma_1-1}T^{1-\gamma_1}I_y^{\gamma_1}\).
Since
\(\mu=(\gamma_1+1)/2\),
one has
\(\gamma_1-\mu=(\gamma_1-1)/2<0\).
Using \(I_y\leq\bar I\) therefore gives
\(I_y^{\gamma_1-\mu}
\geq
\bar I^{(\gamma_1-1)/2}\).
It follows that
\(\int_{t-T}^{t}y^{\gamma_1}(\tau)\,d\tau
\geq
4^{\gamma_1-1}T^{1-\gamma_1}
\bar I^{(\gamma_1-1)/2}
I_y^\mu\).

Consider now the complementary case
\(M_y>4I_y/T\).
We first show that
\(M_y\leq L_{\dot y}T\).
Indeed, if \(M_y>L_{\dot y}T\), then the derivative bound
\(|\dot y(\tau)|\leq L_{\dot y}\)
implies
\(y(\tau)\geq
M_y-L_{\dot y}|\tau-\tau_m|\)
for all \(\tau\in[t-T,t]\).
Since
\(|\tau-\tau_m|\leq T\),
this gives
\(y(\tau)\geq M_y-L_{\dot y}T>0\).
More precisely, integrating the Lipschitz lower envelope over the
interval yields
\(I_y\geq M_yT-\frac12L_{\dot y}T^2\).
Under \(M_y>L_{\dot y}T\), the latter satisfies
\(I_y>M_yT/2\), which contradicts
\(M_y>4I_y/T\).
Thus,
\(M_y\leq L_{\dot y}T\).

At least one of the intervals
\([t-T,\tau_m]\) and \([\tau_m,t]\)
has length no smaller than \(T/2\).
Since
\(M_y/(2L_{\dot y})\leq T/2\),
there exists a one-sided subinterval adjacent to \(\tau_m\)
of length
\(M_y/(2L_{\dot y})\)
that is contained in \([t-T,t]\).
For every \(\tau\) in this subinterval, the derivative bound gives
\(y(\tau)\geq
M_y-L_{\dot y}|\tau-\tau_m|
\geq M_y/2\).
Therefore,
\(I_y\geq
(M_y/2)(M_y/(2L_{\dot y}))
=
M_y^2/(4L_{\dot y})\),
and hence
\(M_y\leq 2\sqrt{L_{\dot y}I_y}\).

Using again
\(\int_{t-T}^{t}y^{\gamma_1}(\tau)\,d\tau
\geq
M_y^{\gamma_1-1}I_y\)
and \(\gamma_1-1<0\), one obtains
\(\int_{t-T}^{t}y^{\gamma_1}(\tau)\,d\tau
\geq
(2\sqrt{L_{\dot y}I_y})^{\gamma_1-1}I_y\).
Since
\(\mu=(\gamma_1+1)/2\),
this reduces to
\(\int_{t-T}^{t}y^{\gamma_1}(\tau)\,d\tau
\geq
2^{\gamma_1-1}
L_{\dot y}^{-(1-\gamma_1)/2}
I_y^\mu\).

Thus, in either case,
\(\int_{t-T}^{t}y^{\gamma_1}(\tau)\,d\tau\)
is bounded below by
\(\rho_{\gamma_1,T}I_y^\mu\),
where \(\rho_{\gamma_1,T}\) is the coefficient defined in the
statement of the lemma. Substituting
\(I_y=\int_{t-T}^{t}y(\tau)\,d\tau\)
completes the proof.
\end{proof}


\bibliographystyle{elsarticle-harv}
\bibliography{references}


\section*{Author Biography}

\noindent
\begin{minipage}[t]{0.27\columnwidth}
    \vspace{0pt}
    \centering
    \includegraphics[
        width=\linewidth,
        keepaspectratio
    ]{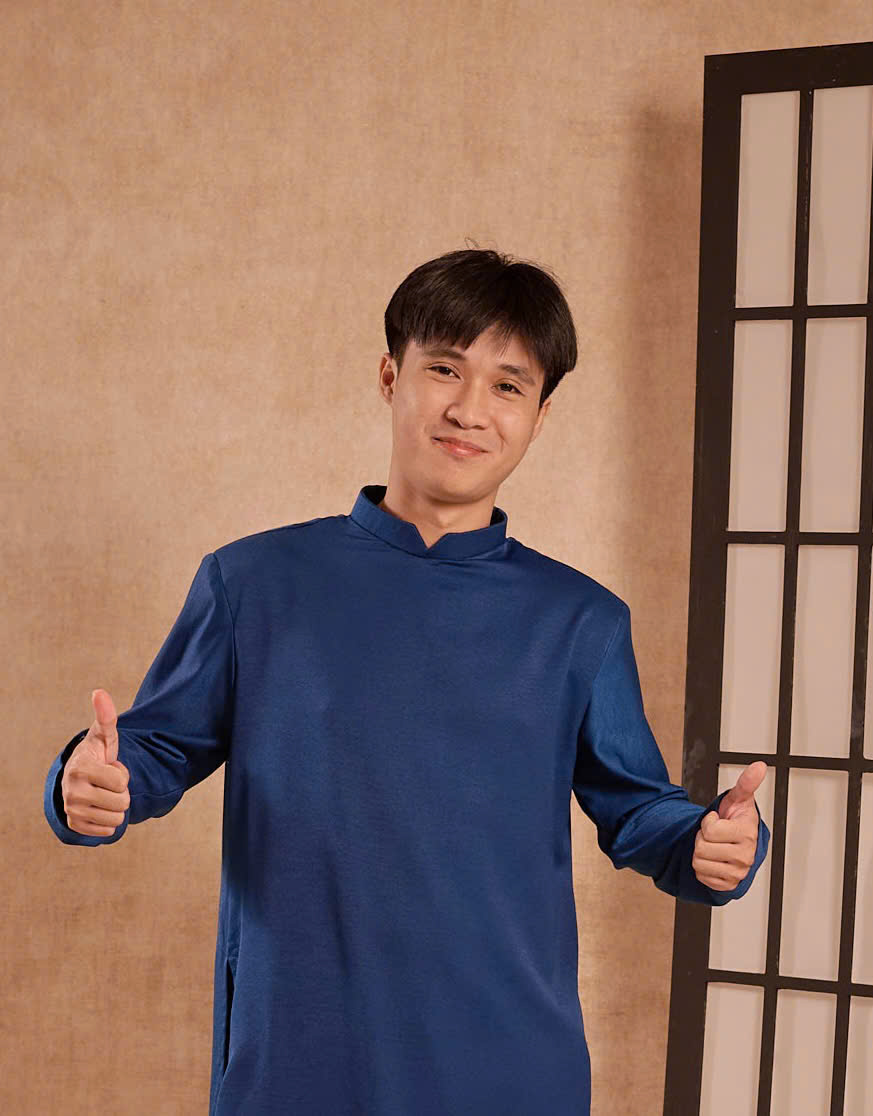}
\end{minipage}
\hfill
\begin{minipage}[t]{0.69\columnwidth}
    \vspace{0pt}
\textbf{Tien Dat Vu} is currently a senior undergraduate student
with the Faculty of Mechanical Engineering, Ho Chi Minh City
University of Technology (HCMUT), Vietnam National University
Ho Chi Minh City (VNU-HCM), Ho Chi Minh City, Vietnam.
His research interests include nonlinear and adaptive control,
reinforcement learning, optimal control, resilient control,
and data-driven methods for dynamical systems.
\end{minipage}

\vspace{0.5em}

\end{document}